\documentclass[11pt]{article}

\usepackage[a4paper,margin=1in]{geometry}
\usepackage{amsmath,amssymb,amsthm,mathtools}
\usepackage{mathrsfs}
\usepackage{bm}
\usepackage{needspace}
\usepackage{microtype}
\usepackage{tikz}
\usetikzlibrary{
  arrows.meta,
  decorations.pathreplacing,
  positioning
}
\usepackage[
  backend=biber,
  style=numeric,
  sorting=nyt,
  giveninits=true,
  maxbibnames=99,
  doi=true,
  eprint=true,
  url=true,
  isbn=false
]{biblatex}
\AtBeginBibliography{\small}
\usepackage[hidelinks]{hyperref}
\hypersetup{pdftitle={The Failure of Simultaneous Quantum Typicality},pdfauthor={Benoit Collins}}
\usepackage[nameinlink,capitalize,noabbrev]{cleveref}

\allowdisplaybreaks
\numberwithin{equation}{section}

\newtheorem{theorem}{Theorem}[section]
\newtheorem{proposition}[theorem]{Proposition}
\newtheorem{lemma}[theorem]{Lemma}
\newtheorem{corollary}[theorem]{Corollary}
\newtheorem{conjecture}[theorem]{Conjecture}
\newtheorem{remark}[theorem]{Remark}
\newtheorem{definition}[theorem]{Definition}
\newtheorem*{informalconjecture}{Conjecture}
\newtheorem*{informalresult}{Theorem}
\crefname{conjecture}{Conjecture}{Conjectures}
\Crefname{conjecture}{Conjecture}{Conjectures}

\newcommand{\cH}{\mathcal H}
\newcommand{\cD}{\mathcal D}
\newcommand{\cS}{\mathcal S}
\newcommand{\cK}{\mathcal K}
\newcommand{\cE}{\mathcal E}
\newcommand{\cN}{\mathcal N}
\newcommand{\1}{\mathbf 1}
\newcommand{\Tr}{\operatorname{Tr}}
\newcommand{\rank}{\operatorname{rank}}
\newcommand{\supp}{\operatorname{supp}}
\newcommand{\Sym}{\operatorname{Sym}}
\newcommand{\Inv}{\operatorname{Inv}}
\newcommand{\CP}{\mathbb {CP}}
\newcommand{\eps}{\varepsilon}
\newcommand{\ket}[1]{|#1\rangle}
\newcommand{\ketbra}[2]{|#1\rangle\!\langle#2|}
\newcommand{\proj}[1]{|#1\rangle\!\langle#1|}
\newcommand{\norm}[1]{\left\lVert #1\right\rVert}
\newcommand{\abs}[1]{\left|#1\right|}
\newcommand{\e}{\mathrm e}
\newcommand{\dd}{\,\mathrm d}

\definecolor{qtblue}{HTML}{3B6FB6}
\definecolor{qtorange}{HTML}{D5792A}
\definecolor{qtgreen}{HTML}{34856F}
\definecolor{qtred}{HTML}{B24A4A}
\definecolor{qtgray}{HTML}{59636E}
\tikzset{
  qtbox/.style={
    draw=qtblue!80!black,
    fill=qtblue!7,
    rounded corners=2pt,
    thick,
    align=center,
    inner sep=5pt
  },
  qtgreenbox/.style={
    draw=qtgreen!85!black,
    fill=qtgreen!8,
    rounded corners=2pt,
    thick,
    align=center,
    inner sep=5pt
  },
  qtorangebox/.style={
    draw=qtorange!90!black,
    fill=qtorange!10,
    rounded corners=2pt,
    thick,
    align=center,
    inner sep=5pt
  },
  qtredbox/.style={
    draw=qtred!90!black,
    fill=qtred!8,
    rounded corners=2pt,
    thick,
    align=center,
    inner sep=5pt
  },
  qtarrow/.style={
    -{Latex[length=2.2mm,width=1.6mm]},
    thick,
    draw=qtgray
  },
  qtsoftarrow/.style={
    -{Latex[length=2mm,width=1.4mm]},
    semithick,
    draw=qtgray!75
  },
  qtnote/.style={
    align=center,
    font=\scriptsize,
    text=qtgray
  }
}

\title{\bfseries The Failure of Simultaneous Quantum Typicality}
\author{Beno{\^\i}t Collins\\[0.4em]
  \normalsize Department of Mathematics, Graduate School of Science\\
  \normalsize Kyoto University, Kyoto 606-8502, Japan\\
  \normalsize \href{mailto:collins@math.kyoto-u.ac.jp}%
  {\texttt{collins@math.kyoto-u.ac.jp}}}
\date{\today}

\begin{document}
\maketitle

\begin{abstract}
We establish exact thresholds for the failure of simultaneous quantum
typicality: three parties for mixed states and four for pure states.  For
the four-qubit Higuchi--Sudbery state, every sequence satisfying the
expected purity bounds on three overlapping marginals, with sufficiently
small fixed positive entropy slack, is asymptotically perfectly
distinguishable from the tensor powers of the input: its trace distance
from them tends to the maximum value $2$.  This maximal
separation persists, with a common positive slack, throughout an open
neighborhood in the full four-qubit state space.  Tracing out one party
yields a mixed tripartite example.

The proof connects simultaneous quantum smoothing with polynomial
concentration on $\CP^1$.  Symmetry reduces the problem to a frame
inequality on $\Sym^n(\mathbb C^2)$, where a linear entropy deficit
competes with a quadratic cost of polynomial concentration.
Bernstein--Markov estimates and a subharmonic barrier control this cost,
yielding an exponential obstruction to simultaneous typicality.

These results disprove Dutil's multiparty quantum typicality conjecture,
formulated in 2011, and the one-shot simultaneous min-entropy-smoothing
conjectures of Drescher--Fawzi and Colomer--Winter.  Together with the
positive results for smaller systems, they determine the smallest numbers
of parties at which simultaneous typicality can fail.
\end{abstract}

\section{Introduction}

Simultaneous quantum typicality asks whether a single small perturbation
of many copies of a state can give every marginal the entropy rate
prescribed by the input.  We determine the smallest numbers of parties
for which this property can fail: three for mixed states and four for
pure states.  The failure takes the strongest possible form: for our examples, states
satisfying the required marginal purity bounds, with sufficiently small
fixed positive slack, are asymptotically perfectly distinguishable from
the tensor-power input: their trace distance from it tends to $2$.  For
four-qubit inputs this maximal separation holds
throughout an open neighborhood in the full state space.

Let $p:\mathcal X\to[0,1]$ be a probability distribution on a finite
alphabet $\mathcal X$, with \emph{Shannon entropy}
$H(p)=-\sum_{x\in\mathcal X}p(x)\log p(x)$, where $0\log0=0$.
All logarithms are natural.
Let $p^{\otimes n}$ be the law on $\mathcal X^n$ of $n$ independent
samples.  For
$\delta>0$, define the \emph{weakly typical set}
\[
 \mathcal T_{p,\delta}^{(n)}
 :=\left\{x^n\in\mathcal X^n:\left|-\frac1n\log p^{\otimes n}(x^n)-H(p)\right|
 \leq\delta\right\}.
\]
The \emph{asymptotic equipartition property} states that
$p^{\otimes n}(\mathcal T_{p,\delta}^{(n)})\to1$.  Moreover, every
$x^n\in\mathcal T_{p,\delta}^{(n)}$ satisfies the pointwise bounds
\[
 \e^{-n(H(p)+\delta)}
 \leq p^{\otimes n}(x^n)
 \leq\e^{-n(H(p)-\delta)},
\]
and consequently
\[
 p^{\otimes n}(\mathcal T_{p,\delta}^{(n)})\,\e^{n(H(p)-\delta)}
 \leq\abs{\mathcal T_{p,\delta}^{(n)}}
 \leq\e^{n(H(p)+\delta)}.
\]
Thus ``approximately'' means equality at exponential scale under the product
law $p^{\otimes n}$: the exponent of each typical-word probability differs
from $H(p)$ by at most $\delta$, while
$n^{-1}\log \abs{\mathcal T_{p,\delta}^{(n)}}$ differs from $H(p)$ by at
most $\delta+o_n(1)$.  See \cite[Chap.~3]{CoverThomas} for the
asymptotic equipartition property, which goes back to \cite{Shannon}, and
\cite[Secs.~2.4--2.5]{ElGamalKim} for typical and jointly typical sequences.

The \emph{quantum asymptotic equipartition property} replaces a typical set by a
\emph{typical subspace}.  For a density operator
$\rho\in\operatorname{End}(\cH)$, an orthogonal projector
$\Pi_{\rho,\delta}^{(n)}\in\operatorname{End}(\cH^{\otimes n})$
satisfies, for fixed $\delta>0$,
\[
 \Tr\!\left[\rho^{\otimes n}\Pi_{\rho,\delta}^{(n)}\right]=1-o_n(1)
\]
and
\[
 \e^{-n(H(\rho)+\delta)}\Pi_{\rho,\delta}^{(n)}
 \leq
 \Pi_{\rho,\delta}^{(n)}\rho^{\otimes n}\Pi_{\rho,\delta}^{(n)}
 \leq
 \e^{-n(H(\rho)-\delta)}\Pi_{\rho,\delta}^{(n)};
\]
see
\cite{Schumacher,JozsaSchumacher,NielsenChuang,Wilde,Tomamichel}.

There is a simple classical reason to expect these estimates to hold for all
marginals at once.  Let $p$ now be a joint distribution on
$\mathcal X_1\times\cdots\times\mathcal X_m$, and describe a word by its
\emph{empirical distribution}, or \emph{type}, $\widehat p_{x^n}$, which is a
probability vector on the same product alphabet.  The set of probability
vectors $r$ on that alphabet satisfying $\norm{r-p}_1\leq\delta$ is convex, and
the map taking a joint distribution to any marginal is affine and
$\ell^1$-contractive.  Hence
\[
 \norm{\widehat p_{x_S^n}-p_S}_1
 \leq\norm{\widehat p_{x^n}-p}_1,
 \qquad \varnothing\ne S\subseteq[m].
\]
In particular, one jointly typical word is marginally typical for every
$S$ simultaneously.  Put
$\mathcal T_n:=\{x^n\in(\mathcal X_1\times\cdots\times\mathcal X_m)^n:
\norm{\widehat p_{x^n}-p}_1\leq\delta\}$, and let
$q_n$ be obtained by conditioning $p^{\otimes n}$ on
$\mathcal T_n$.  The law of large numbers gives
$p^{\otimes n}(\mathcal T_n)\to1$, and
\[
 \norm{q_n-p^{\otimes n}}_1
 =2\bigl(1-p^{\otimes n}(\mathcal T_n)\bigr)\longrightarrow0.
\]
For every $S$, the marginal $(q_n)_S$ is supported on marginally typical
words and, for such a word $x_S^n$,
\[
 (q_n)_S(x_S^n)
 \leq\frac{p_S^{\otimes n}(x_S^n)}
 {p^{\otimes n}(\mathcal T_n)}
 \leq
 \frac{\e^{-n(H(p_S)-\gamma(\delta))}}
 {p^{\otimes n}(\mathcal T_n)}
 =\e^{-n(H(p_S)-\gamma(\delta)-o_n(1))},
\]
where $\gamma(\delta)\to0$ as $\delta\downarrow0$.  Indeed, on
the support of each $p_S$, the cross-entropy against $p_S$ is a continuous
linear function of the empirical distribution.  Words containing a symbol
outside $\supp p_S$ have probability zero under both $p_S^{\otimes n}$
and $(q_n)_S$.
There are only finitely many marginals.
Thus one conditioning operation gives the required max-probability, and
hence collision-probability, exponent for every marginal.  This is the
classical simultaneous-smoothing mechanism behind many multi-user coding
arguments.

The corresponding quantum statement is both natural and nontrivial.  It
asks for one nearby global state whose every marginal has the entropy scale
predicted by the original state.  Such a state would provide a common
typical object for multiparty state merging, randomness extraction,
decoupling, channel simulation, and coding for quantum networks.  But for
overlapping subsystems the lifted typical projectors need not commute, there
is no literal joint event on which to condition, and applying one projection
can destroy the typicality created by another.  The classical affine and
convex argument therefore has no immediate quantum analogue.  For the
operational origin of state merging, see \cite{HorodeckiOppenheimWinter}.

The multiparty typicality conjecture was formulated in
\cite[Conjecture~3.2.7]{DutilThesis}; see also
\cite{DutilHayden,Noetzel,DrescherFawzi}.  The conjecture predicts that one
can make a single small perturbation of $\rho^{\otimes n}$ that is typical
for every marginal at once.

\Needspace{8\baselineskip}
\begin{informalconjecture}[cf.\ \cref{conj:purity,conj:op} for details]
Given a multipartite state $\rho$, many copies $\rho^{\otimes n}$ can be
changed by an arbitrarily small amount in trace norm so that, for every
nonempty collection of parties $S$, the purity of the $S$-marginal is at
most $\e^{-n(H(\rho_S)-\delta)}$.  Equivalently, all marginal collision
entropies per copy can simultaneously be made at least their von Neumann
entropy rates, up to an arbitrarily small fixed slack $\delta$.  A stronger
version asks for the analogous bound on the largest eigenvalue of every
marginal.
\end{informalconjecture}

Together with the positive cases for fewer parties, the following result
gives the exact thresholds and maximal trace-distance separation.

\Needspace{8\baselineskip}
\begin{informalresult}[Main result; cf.\ \cref{thm:main} for details]
There is a unit vector $\Omega\in(\mathbb C^2)^{\otimes4}$ and a fixed
$\delta_0>0$ such that, for every fixed $0<\delta\leq\delta_0$, any
sequence of states satisfying the conjectured purity bounds with slack
$\delta$ for only the three overlapping marginals $AB$, $AC$, and
$BC$ becomes asymptotically perfectly distinguishable from
$\proj{\Omega}^{\otimes n}$: its trace distance from $\proj{\Omega}^{\otimes n}$
tends to the maximum value $2$.  Consequently the purity conjecture, its
stronger operator-norm version, and the pure-vector versions are all false.
Moreover, the mixed tripartite state $\Tr_R\proj{\Omega}$ is itself a
counterexample, and the one-shot simultaneous min-entropy conjectures fail as
well.
\end{informalresult}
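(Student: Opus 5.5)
We will take $\Omega$ to be the Higuchi--Sudbery state on parties $A,B,C,R$. The property we use is that every two-party marginal is maximally mixed on $\mathbb C^2\otimes\mathbb C^2$, so $H(\Omega_{AB})=H(\Omega_{AC})=H(\Omega_{BC})=2\log2$, while $H(\Omega_{R})=\log 2$. The state is also stabilised by a large local symmetry group $G\subseteq U(2)^{\otimes 4}$.

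\emph{Step 1: symmetric reduction.} Suppose $\sigma_n$ satisfies $\Tr[(\sigma_n)_{S}^2]\le \e^{-n(2\log 2-\delta)}$ for $S\in\{AB,AC,BC\}$ and $\tfrac12\|\sigma_n-\proj{\Omega}^{\otimes n}\|_1\le 1-\eta$ along a subsequence. We average $\sigma_n$ over $G^{\otimes n}$ and over $S_n$. Purity is convex and invariant under these actions, and $\proj\Omega^{\otimes n}$ is fixed by them, so both the purity bounds and the distance bound survive averaging. Overlap with $\Omega^{\otimes n}$ bounded below forces $\sigma_n$ to carry weight at least $c(\eta)>0$ on the symmetric sector. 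In that sector the Schur--Weyl decomposition leaves a single $\mathbb C^2$ degree of freedom per copy, which we identify with $\Sym^n(\mathbb C^2)$. There we use coherent states $\ket{z}^{\otimes n}$, $z\in\CP^1$, and their frame identity $\int \proj{z}^{\otimes n}\,\dd\mu(z)=\1/(n+1)$.

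\emph{Step 2: frame inequality.} The sum of the three two-party purities, restricted to the symmetric sector, should be expressible as a quadratic form in the density $f(z)=\langle z^{\otimes n}|\sigma|z^{\otimes n}\rangle$. This yields a lower bound
\[
 \sum_{S}\Tr[\sigma_S^2]\ \ge\ \e^{-2n\log 2}\int\!\!\int K_n(z,w)f(z)f(w)\,\dd\mu\,\dd\mu .
\]
The kernel $K_n$ behaves like $\e^{n\kappa(z,w)}$, where $\kappa$ is strictly positive near the diagonal; it comes from the linear entropy deficit of the product structure. Meanwhile, staying $1-\eta$-close to $\Omega^{\otimes n}$ forces $f$ to concentrate near the point $z_\Omega$ with a non-negligible mass.

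\emph{Step 3: the main obstacle.} We must show that no polynomial density of degree $n$ can have non-negligible mass near $z_\Omega$ while keeping the quadratic form at most $\e^{n\delta}$. We will use Bernstein--Markov estimates on $\CP^1$, namely $\sup|p|^2\le (n+1)\int|p|^2\dd\mu$ for degree-$n$ polynomials. These convert the mass condition into a pointwise lower bound on a disc of radius of order $1$ around $z_\Omega$. A subharmonic barrier then propagates this bound: $\frac1n\log|p|$ is subharmonic, and this controls how fast concentration can decay. The resulting quadratic cost is at least $\e^{n c_0}$ for some $c_0>0$ depending only on $\eta$-independent geometry. Taking $\delta_0<c_0$ gives a contradiction, so $\eta\to0$ and the trace distance tends to $2$.

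\emph{Remaining claims.} Openness in the full four-qubit state space follows because every constant above depends continuously on the input state. For the mixed tripartite state, we pass to $\Tr_R\proj\Omega$ by monotonicity of trace distance under the partial trace. The min-entropy and operator-norm versions follow because $\lambda_{\max}^2\ge$ purity times $\lambda_{\max}$, with purity at most $\lambda_{\max}$; hence they are stronger requirements.
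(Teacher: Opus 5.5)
There is a genuine gap, and it starts with the choice of $\Omega$. The Higuchi--Sudbery state does \emph{not} have maximally mixed two-qubit marginals, and no four-qubit pure state can: there is no four-qubit absolutely maximally entangled state, as the paper notes via Huber--Siewert. Its three pair marginals have spectrum $(\tfrac12,\tfrac16,\tfrac16,\tfrac16)$. Hence $h=\log2+\tfrac12\log3<2\log2$, and the pair purity is $\tfrac13=\e^{-\log3}$.

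Worse, the property you assume would turn $\Omega$ into a \emph{positive} instance. If $\Omega_{XY}$ were maximally mixed, then $\Tr[(\Omega_{XY}^{\otimes n})^2]=4^{-n}=\e^{-nH(\Omega_{XY})}$. So $\sigma_n=\proj{\Omega}^{\otimes n}$ would meet all three purity bounds at distance zero. The obstruction needs a non-flat pair spectrum, that is, the gap $H_2(\Omega_{XY})=\log3<h$, and your plan never uses it.

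Your Step 1 is essentially the paper's reduction: average over a per-copy diagonal $SU(2)$ and over $S_n$, then land in $\Sym^n$ of the two-dimensional singlet space. You still need to check that the compressed operator's pair marginals are dominated by those of the averaged state.

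Steps 2--3 do not contain the mechanism. You never derive the kernel $K_n$ or the exponent $\kappa$. The Bernstein--Markov inequality you quote, $\sup_{\CP^1}W\le(n+1)\int W\dd\mu$, is the trivial global one and says nothing about concentration. No such estimate can give a pointwise lower bound on a disc either, since the polynomial may vanish there. What the paper actually uses has two parts.
\begin{itemize}
\item[(i)] The cyclic recoupling matrix $\Gamma$ on the singlet multiplicity space yields three rotated Dicke bases. In the $a$-th basis, the $S_a$ marginal of the compressed state is $\sum_k p^{(a)}_k Q^{S_a}_{k,n}/D_k$, where $p^{(a)}_k$ is its Dicke distribution in that basis and $D_k=\binom nk3^k$. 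The types with $k/n\le\frac12-\eta$ span a space of rank at most $(n+1)\e^{n(h-\Delta(\eta))}$. Cauchy--Schwarz against this rank turns the purity bounds into cap weights at most $\sqrt{n+1}\,\e^{-n(\Delta(\eta)-\delta)/2}$, with $\Delta(\eta)\approx\eta\log3$ \emph{linear} in the frequency tolerance.
\item[(ii)] A frame bound $\sum_aP^{(a)}_{\eta,n}\ge\e^{-140n\eta^2}$ holds. Its proof combines three ingredients. First, Bernstein--Markov on the complement $\widetilde K_t$ of two small triangular holes about $[v],[\bar v]$, not on all of $\CP^1$. Second, a torsion-function barrier giving $\sup_{\CP^1}W\le\e^{\frac43n\tau^2}\sup_{\widetilde K_t}W$. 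Third, a binomial leakage estimate.
\end{itemize}
The contradiction is linear versus quadratic in $\eta$, which is why $\delta_0\approx5.5\cdot10^{-5}$ rather than an order-one geometric constant $c_0$.

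The remaining claims also need more than you give.
\begin{itemize}
\item For the tripartite claim, monotonicity of trace distance under $\Tr_R$ goes the wrong way. You must \emph{lift} a tripartite competitor to a four-party state with the same pair marginals and overlap at least $F(\rho_{ABC}^{\otimes n},\sigma_n)^2$ with $\Omega^{\otimes n}$, which Uhlmann's theorem provides.
\item The one-shot conjectures need the smooth min-entropy AEP, renormalization of subnormalized smoothings, and the dimension independence of $g_m,h_m$. Comparing purity with operator norm is not enough.
\item Your openness claim is unsupported, since the reduction uses the symmetry of $\Omega$. The paper proves robustness by a separate defect-counting argument.
\end{itemize}
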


The failure is stable under small single-copy perturbations.
\Cref{prop:robustness} shows that an open trace-norm neighborhood of
$\proj{\Omega}$ consists entirely of counterexamples, with a common
positive entropy slack and the same limiting trace distance $2$.
The neighborhood is taken in the full four-qubit state space, so it
includes nearby pure states, full-rank noisy states, and states outside
the singlet space.

The proof exhibits a mechanism for the failure of simultaneous typicality:
a linear entropy deficit competes with a quadratic cost of polynomial
concentration.  After symmetry reduction, the marginal purity bounds force
exponentially small weight in three frequency caps, with an exponent
linear in a small tolerance $\eta$.  The associated cap projections
nevertheless form a frame with a lower bound whose exponent is only
quadratic in $\eta$.  The latter estimate comes from representing
symmetric tensors by homogeneous polynomials and controlling concentration
in two small holes of $\CP^1$ through a Bernstein--Markov inequality and
a subharmonic barrier.  For sufficiently
small tolerance and entropy slack, the comparison forces the overlap with
the tensor-power input to vanish.  This brings polynomial approximation
and potential theory into the study of incompatible marginal entropy
constraints.

Let us review a few historical developments related to quantum typicality.
Quantum typical subspaces and noiseless quantum
coding were developed in the 1990s, notably by Schumacher and by
Jozsa--Schumacher \cite{Schumacher,JozsaSchumacher}.  Against this
background, Dutil's 2011 thesis isolated the multiparty typicality conjecture in the study of
assisted entanglement distillation and multiparty state merging
\cite{DutilThesis,DutilHayden}.  Dutil also proved the two-party case.

N\"otzel later gave a representation-theoretic proof of the two-party case
and formulated Schatten-$p$ extensions \cite{Noetzel}; our
mixed-tripartite example refutes those extensions for every fixed $p\geq2$.
His analysis made
explicit why the direct three-party argument runs into noncommuting
projectors for overlapping marginals.  In a one-shot setting, Drescher and
Fawzi formulated a simultaneous min-entropy-smoothing conjecture and proved
it for two quantum parties.  They also discuss commuting cases, state the
result for non-overlapping families, and note its consequence for pure
tripartite states \cite[Secs.~V--VI]{DrescherFawzi}.  The pure-tripartite
statement is proved in Drescher's report \cite[Corollary~5.5]{DrescherReport};
we give a direct proof of the i.i.d.\ result in \cref{prop:pure-tripartite}.
In the classical case, Drescher and Fawzi also establish sharp dependence of
the smoothing error on the number of marginals
\cite[Theorem~V.2]{DrescherFawzi}; the general optimality proof appears in
\cite[Theorem~4.3]{DrescherReport}.  For a fixed number of parties, this
error still tends to zero with the smoothing parameter.
The related difficulty of simultaneous decoding for quantum interference
channels was highlighted by Fawzi, Hayden, Savov, Sen and Wilde in work
first circulated in 2011 \cite{FawziEtAlInterference}.
Related work on one-shot smoothing and entropy inequalities includes
\cite{AnshuBertaJainTomamichel,RegulaLamiDatta}.
Together these results
identify genuine overlap, rather than merely the number of marginals, as the
essential difficulty.

Other joint-typicality techniques solved important operational problems
without settling simultaneous smoothing itself.  Sen developed one-shot
quantum joint-typicality, tilting, and augmentation methods for simultaneous
decoding \cite{SenJointTypicality,SenInnerBounds}, and Ding, Gharibyan,
Hayden, and Walter used related ideas in a quantum multiparty packing lemma
\cite{DingEtAl}.  More recently, Colomer and Winter obtained general
multi-user decoupling results without assuming simultaneous smoothing, while
still describing the conjecture as unresolved \cite{ColomerWinter};
multipartite convex splitting provides another route around overlapping
joint-typicality questions in channel simulation \cite{BertaChengGao}.
The counterexample identifies a limitation of a proposed proof method,
without contradicting operational rate regions established independently.
For example, Dutil's direct multiparty merging analysis beyond two senders
uses the typicality conjecture \cite[Sec.~3.2.4]{DutilThesis}, while
Dutil and Hayden conjecture independently smoothed one-shot merging costs
\cite[Conjecture~12]{DutilHayden}.  Chakraborty, Nema and Sen's passage
from their one-shot quantum multiple-access bound to the optimal i.i.d.\
region likewise assumes simultaneous smoothing
\cite[Sec.~1]{ChakrabortyNemaSen}.  Colomer and Winter prove merging and
multiple-access achievability results without that assumption
\cite[Sec.~6]{ColomerWinter}; Sen also obtains fully smoothed multipartite
covering and decoupling bounds by telescoping \cite{SenTelescoping}.
Thus these operational conclusions can survive even though the general
simultaneous-smoothing route to them is unavailable.

\subsection{Organization}

\Cref{sec:definitions} states the conjecture and its variants, formulates the
main counterexample theorem, derives its mixed-tripartite and one-shot
consequences, and identifies the party and dimension thresholds.
\Cref{sec:candidate} constructs the four-qubit state, identifies it with the
Higuchi--Sudbery state, and computes its cyclic recoupling and pair spectra.
\Cref{sec:reduction} reduces the problem to three frequency projections on
$\Sym^n(\mathbb C^2)$.  \Cref{sec:entropy} develops the probability
geometry and triangular containment used in the proof; its entropy-certainty
inequality provides geometric motivation.
\Cref{sec:toeplitz} combines coherent states, an elementary Bernstein--Markov
inequality, the triangle barrier, and binomial leakage to prove the frame
bound.  \Cref{sec:contradiction} completes the proof with explicit positive
entropy slack and extends it to larger systems.  \Cref{sec:discussion}
proves robustness in the full state space and explains why the quadratic
order of the frame exponent is optimal.  \Cref{app:matrices}
gives exact computational-basis matrices.

\section{Definitions and preliminaries}
\label{sec:definitions}

\subsection{Finite-dimensional quantum systems}

All quantum systems have finite-dimensional complex Hilbert spaces.
For such a space $E$, $\operatorname{End}(E)$ denotes its algebra of
\emph{complex-linear endomorphisms}, $I_E$ its \emph{identity operator}, and $X^*$
the \emph{Hilbert-space adjoint} of an operator $X$.  We write $\Tr$ for the
ordinary, \emph{unnormalized trace}.  A \emph{quantum state}, or
\emph{density operator}, on $\cH$ is a positive semidefinite operator
$\rho\in\operatorname{End}(\cH)$ with $\Tr\rho=1$.
A \emph{subnormalized state} is a positive operator in the same algebra with
trace at most one.  A \emph{pure state} is a rank-one state
$\rho=\proj{\psi}$, where $\psi\in\cH$ is a unit vector.  We freely
identify a named quantum system, such as $A$, with its Hilbert space
$\cH_A$; juxtaposition of system labels denotes their tensor product.
Inner products are antilinear in the first argument.  For vectors
$x_j,y_j\in E_j$ in Hilbert spaces $E_j$, the \emph{tensor-product inner
product} is determined by
\[
 \langle x_1\otimes\cdots\otimes x_n,
         y_1\otimes\cdots\otimes y_n\rangle
 =\prod_{j=1}^n\langle x_j,y_j\rangle.
\]
For a Hilbert space $E$, the symmetric group $S_n$ acts unitarily on
$E^{\otimes n}$ by
\[
 U_\pi^E(x_1\otimes\cdots\otimes x_n)
 =x_{\pi^{-1}(1)}\otimes\cdots\otimes x_{\pi^{-1}(n)},
 \qquad \pi\in S_n,\quad x_j\in E.
\]
We realize the \emph{symmetric power} as the invariant subspace
\[
 \Sym^n(E)=\{\xi\in E^{\otimes n}:U_\pi^E\xi=\xi
                       \text{ for every }\pi\in S_n\},
\]
equipped with the restriction of the tensor-product inner product.

For $m$ systems $A_1,\ldots,A_m$, write
\[
 A_{[m]}=A_1\otimes\cdots\otimes A_m,\qquad [m]=\{1,\ldots,m\}.
\]
If $T\subseteq[m]$, then
\[
 A_T=\bigotimes_{i\in T}A_i,\qquad T^c=[m]\setminus T.
\]
The \emph{partial trace} is a linear map
\[
 \Tr_{T^c}:\operatorname{End}(A_{[m]})\longrightarrow
           \operatorname{End}(A_T).
\]
The \emph{reduced state}, or \emph{marginal}, of $\rho\in\operatorname{End}(A_{[m]})$
on $A_T$ is
\[
 \rho_T=\Tr_{T^c}\rho\in\operatorname{End}(A_T).
\]
The partial trace is characterized by
\[
 \Tr\!\left[X_T\,\Tr_{T^c}Y\right]
 =\Tr\!\left[(X_T\otimes\1_{T^c})Y\right]
\]
for $X_T\in\operatorname{End}(A_T)$ and
$Y\in\operatorname{End}(A_{[m]})$.  For a unit vector
$\psi\in A_{[m]}$, we also write
$\psi_T:=\Tr_{T^c}\proj{\psi}\in\operatorname{End}(A_T)$
for the marginal of its pure state.

Our subscript records the systems that are \emph{retained}.
For $n$ independent copies, it is convenient to reorder tensor factors as
\[
 (A_1\otimes\cdots\otimes A_m)^{\otimes n}
 \cong A_1^{\otimes n}\otimes\cdots\otimes A_m^{\otimes n}.
\]
For $\sigma\in\operatorname{End}(A_{[m]}^{\otimes n})$, we denote the
corresponding marginal by
\[
 \sigma_{T^n}:=\Tr_{(T^c)^n}\sigma
              \in\operatorname{End}(A_T^{\otimes n}).
\]

For $X\in\operatorname{End}(E)$, its \emph{trace norm},
\emph{Hilbert--Schmidt norm}, and \emph{operator norm} are, respectively,
\[
 \norm{X}_1=\Tr\sqrt{X^*X},\qquad
 \norm{X}_2=(\Tr X^*X)^{1/2},\qquad
 \norm{X}_\infty=\sup_{\xi\in E,\,\norm{\xi}=1}\norm{X\xi}.
\]
For $X\geq0$, $\norm{X}_\infty$ is its largest eigenvalue.
More generally, its \emph{Schatten norm} for $1\leq p<\infty$ is
$\norm X_p=(\Tr[(X^*X)^{p/2}])^{1/p}$.
The \emph{von Neumann entropy}, \emph{collision entropy}, and
\emph{min-entropy} of a state are
\[
\begin{aligned}
 H(\rho)&=-\Tr(\rho\log \rho),\\
 H_2(\rho)&=-\log \Tr(\rho^2),\\
 H_{\min}(\rho)&=-\log \norm{\rho}_\infty.
\end{aligned}
\]
The quantity $\Tr(\rho^2)$ is its \emph{purity}.  The elementary inequality
\begin{equation}
 \Tr(\rho^2)\leq\norm{\rho}_\infty
\label{eq:purity-operator}
\end{equation}
shows that an operator-norm estimate is stronger than the corresponding
purity estimate.

We use the unnormalized \emph{trace-norm distance} $\norm{\rho-\sigma}_1$.
If $\rho=\proj{\psi}$ is pure, then
\begin{equation}
 \norm{\sigma-\proj{\psi}}_1\leq\eps
 \quad\Longrightarrow\quad
 \langle\psi,\sigma\psi\rangle\geq1-\frac{\eps}{2}.
\label{eq:trace-overlap}
\end{equation}
This follows by applying the two-outcome measurement
$\{\proj{\psi},\1-\proj{\psi}\}$.  For unit vectors,
\begin{equation}
 \norm{\proj{\psi}-\proj{\phi}}_1
 =2\sqrt{1-\abs{\langle\psi,\phi\rangle}^2}
 \leq2\norm{\psi-\phi}.
\label{eq:pure-state-distance}
\end{equation}
Thus a counterexample allowing arbitrary density-operator smoothings is
automatically a counterexample to a formulation allowing only nearby pure
vectors in Hilbert norm.
The precise relation to phase-optimized vector distance is
\begin{equation}
 \min_{\theta\in\mathbb R}\norm{\psi-\e^{i\theta}\phi}^2
 =2-2|\langle\psi,\phi\rangle|.
\label{eq:phase-distance}
\end{equation}

For subnormalized states $\rho,\sigma$ on the same Hilbert space, define
their \emph{root fidelity}, \emph{generalized root fidelity}, and
\emph{purified distance} by
\begin{equation}
\begin{aligned}
 F(\rho,\sigma)&=\norm{\sqrt\rho\sqrt\sigma}_1,\\
 \bar F(\rho,\sigma)&=F(\rho,\sigma)
       +\sqrt{(1-\Tr\rho)(1-\Tr\sigma)},\\
 \mathsf P(\rho,\sigma)&=\sqrt{1-\bar F(\rho,\sigma)^2}.
\end{aligned}
\label{eq:fidelity-distance}
\end{equation}
These quantities lie in $[0,1]$, and $\bar F=F$ if either state is
normalized.  Hilbert--Schmidt Cauchy--Schwarz, applied in the variational
formula for the trace norm, gives
$F(\rho,\sigma)^2\leq\Tr\rho\,\Tr\sigma$.
The Fuchs--van de Graaf inequalities for normalized states and their
subnormalized extension imply
\begin{equation}
 1-F(\rho,\sigma)\leq\tfrac12\norm{\rho-\sigma}_1
 \quad(\Tr\rho=\Tr\sigma=1),\qquad
 \tfrac12\norm{\rho-\sigma}_1\leq\mathsf P(\rho,\sigma).
\label{eq:fidelity-trace-bounds}
\end{equation}
See \cite{FuchsVanDeGraaf} and \cite[Chap.~3]{Tomamichel}.
For $0<\alpha<1$, the unconditioned \emph{smooth min-entropy} is
\[
 H_{\min}^{\alpha}(\rho)
 :=\sup_{\substack{\tau\geq0,\ \Tr\tau\leq1\\
                   \mathsf P(\tau,\rho)\leq\alpha}}
      -\log\norm\tau_\infty.
\]
For a marginal $\rho_T$, we also write
$H_{\min}^{\alpha}(T)_\rho=H_{\min}^{\alpha}(\rho_T)$.

\subsection{The conjecture and main result}

We start by stating the conjecture that motivated this paper, with its
entropy and tolerance expressed in the natural-logarithm convention.

\begin{conjecture}[Multiparty quantum typicality]
\label[conjecture]{conj:purity}
Let $\rho\in\operatorname{End}(A_{[m]})$ be a state.  For every
$\eps>0$ and $\delta>0$, and all sufficiently large $n$, there is a
state $\sigma_n\in\operatorname{End}(A_{[m]}^{\otimes n})$ such that
\begin{equation}
 \norm{\sigma_n-\rho^{\otimes n}}_1\leq\eps
\label{eq:typicality-distance}
\end{equation}
and, for every nonempty $T\subseteq[m]$,
\begin{equation}
 \Tr\!\left[(\sigma_n)_{T^n}^2\right]
 \leq \e^{-n(H(\rho_T)-\delta)}.
\label{eq:typicality-purity}
\end{equation}
\end{conjecture}

For $m=1$, this follows from the usual typical-subspace theorem.  For
$m=2$, it was proved by Dutil and later by N\"otzel
\cite{DutilThesis,Noetzel}.  For mixed inputs, the conjecture was open for
general overlapping families when $m\geq3$; the pure tripartite case was
established in \cite[Sec.~VI.B]{DrescherFawzi} and
\cite[Corollary~5.5]{DrescherReport}.

\begin{remark}
Dutil's original statement, written here in natural-log form, includes a
constant prefactor.  For every fixed $\eps'>0$ and every choice of
positive marginal slacks $\delta_T$, it asks, for all sufficiently large
$n$, for a state satisfying
\[
 \Tr\!\left[(\sigma_n)_{T^n}^2\right]
 \leq(1-\mu(\eps'))^{-2}
 \e^{-n(H(\rho_T)-\delta_T)},
\]
where $\mu(\eps')\to0$, while its trace-distance bound is
$\nu(\eps')$ with $\nu(\eps')\to0$
\cite[Conjecture~3.2.7]{DutilThesis}.
Both the entropy and $\delta_T$ here are $\log2$ times their binary-log
counterparts in that source.  This formulation implies \cref{conj:purity}:
given $\eps,\delta>0$, choose $\eps'$ with
$\nu(\eps')\leq\eps$ and $\mu(\eps')<1$, set every
$\delta_T=\delta/2$, and absorb the fixed prefactor into
$\e^{n\delta/2}$.  Refuting \cref{conj:purity} therefore also refutes
Dutil's original formulation.
\end{remark}

The following stronger formulation is the i.i.d.\ consequence of the
one-shot min-entropy conjecture of Drescher and Fawzi, as they note in
\cite[Sec.~III]{DrescherFawzi}.

\begin{conjecture}[Simultaneous operator-norm typicality]
\label[conjecture]{conj:op}
Under the hypotheses of \cref{conj:purity}, one can choose $\sigma_n$
such that
\begin{equation}
 \norm{(\sigma_n)_{T^n}}_\infty
 \leq\e^{-n(H(\rho_T)-\delta)}
\label{eq:typicality-operator}
\end{equation}
for every nonempty $T\subseteq[m]$.
\end{conjecture}

\Cref{conj:op} implies \cref{conj:purity}, since
$\Tr(\tau^2)\leq\norm\tau_\infty$ for every density operator $\tau$
by \cref{eq:purity-operator}.

It is therefore enough to refute \cref{conj:purity}, and we work with that
purity formulation throughout the paper.  The
logarithmic form of \cref{eq:typicality-operator} is
\begin{equation}
 -\frac1n\log \norm{(\sigma_n)_{T^n}}_\infty
 \geq H(\rho_T)-\delta.
\label{eq:typicality-min-entropy}
\end{equation}

\begin{definition}[Pure-vector variants]
\label[definition]{def:pure}
For a pure input $\rho=\proj{\psi}$ with $\psi\in A_{[m]}$, the
\emph{pure-vector variants} of
\cref{conj:purity,conj:op} require the smoothing to have the form
$\sigma_n=\proj{\psi_n}$, where $\psi_n\in A_{[m]}^{\otimes n}$
is a unit vector and, after a choice of global phase,
\[
 \norm{\psi_n-\psi^{\otimes n}}\leq\eps.
\]
The marginal purity bounds in \cref{eq:typicality-purity}, or respectively the operator-norm
bounds in \cref{eq:typicality-operator}, are otherwise unchanged.
\end{definition}

\Needspace{8\baselineskip}
We can now state the main result.

\begin{theorem}[Counterexample]
\label[theorem]{thm:main}
The unit vector $\Omega\in(\mathbb C^2)^{\otimes4}=
\cH_A\otimes\cH_B\otimes\cH_C\otimes\cH_R
$ defined in \cref{eq:target-vector} has common pair entropy
\[
 h:=H(\Omega_{AB})=H(\Omega_{AC})=H(\Omega_{BC})
   =\log2+\tfrac12\log3.
\]
There exist $\delta_0\in(0,h)$, $r>0$, and $n_0\in\mathbb N$
such that, for every $n\geq n_0$, every $0<\delta\leq\delta_0$, and
every density operator
$\sigma_n\in\operatorname{End}\!\left(
 ((\mathbb C^2)^{\otimes4})^{\otimes n}\right)$ satisfying the three
purity bounds
\begin{equation}
 \Tr\!\left[(\sigma_n)_{(AB)^n}^2\right],
 \ \Tr\!\left[(\sigma_n)_{(AC)^n}^2\right],
 \ \Tr\!\left[(\sigma_n)_{(BC)^n}^2\right]
 \leq\e^{-n(h-\delta)},
\label{eq:main-purity}
\end{equation}
one has
\begin{equation}
 \langle\Omega^{\otimes n},\sigma_n\Omega^{\otimes n}\rangle
 \leq3\sqrt{n+1}\,\e^{-rn}.
\label{eq:main-quantitative}
\end{equation}
Consequently, every sequence satisfying these bounds for all sufficiently
large $n$ obeys
\begin{equation}
 \langle\Omega^{\otimes n},\sigma_n\Omega^{\otimes n}\rangle
 \xrightarrow[n\to\infty]{}0,
 \qquad\text{hence}\qquad
 \norm{\sigma_n-\proj{\Omega}^{\otimes n}}_1
 \xrightarrow[n\to\infty]{}2.
\label{eq:main-limits}
\end{equation}
One may take $\delta_0=\tfrac12[h-f(1/2-10^{-4})]$, where
$f:[0,1]\to\mathbb R$ is given by
$f(u)=-u\log u-(1-u)\log(1-u)+u\log3$, with $0\log0=0$.  This gives
$\delta_0\simeq5.49406\cdot10^{-5}$.  One may then take
$r=\delta_0/2-140\cdot10^{-8}\simeq2.60703\cdot10^{-5}$.
The threshold $n_0$ is independent of $\delta$ and of the candidate state.
\end{theorem}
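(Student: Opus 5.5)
The plan follows the outline in the introduction: reduce to a frame inequality on $\Sym^n(\mathbb C^2)$, then play a linear entropy deficit against a quadratic concentration cost.

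\textbf{Step 1: symmetric reduction.} We first check that the overlap $\langle\Omega^{\otimes n},\sigma_n\Omega^{\otimes n}\rangle$ depends only on the compression of $\sigma_n$ to the span of $\Omega^{\otimes n}$. Twirling over $S_n$ and over the local symmetry group of the Higuchi--Sudbery state leaves this overlap unchanged. By convexity of purity, twirling also does not increase the three pair purities. We may therefore assume $\sigma_n$ is invariant under these symmetries.

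Next we write $\Omega$ via its cyclic recoupling. Each pair marginal $\Omega_{AB}$, $\Omega_{AC}$, $\Omega_{BC}$ has spectrum with a distinguished eigenvalue structure, and the entropy $h=\log2+\tfrac12\log3$ is the value at $u=1/2$ of the function $f$ in the statement. We then identify the relevant invariant sector with $\Sym^n(\mathbb C^2)$. Under this identification, the spectral projections of $(\Omega_{XY})^{\otimes n}$ onto types with frequency $u$ become three \emph{frequency projections} $P_1,P_2,P_3$ on $\Sym^n(\mathbb C^2)$. These are associated with three points of $\CP^1$ arranged symmetrically, one for each pair.

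\textbf{Step 2: linear entropy deficit.} Fix a tolerance $\eta=10^{-4}$. Let $Q_j$ be the cap projection onto frequencies with $|u-1/2|\le\eta$ for pair $j$. Its complement carries eigenvalues of $\rho_{XY}^{\otimes n}$ of size $\e^{-nf(u)}$ with $f(u)\le f(1/2-\eta)$ up to polynomial multiplicities; the sign conventions must make "bad" weight correspond to large eigenvalues.

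Let $w_j$ be the weight of $\sigma_n$ on the typical frequency region of pair $j$. A pinching argument, $\Tr[(\sigma)^2]\ge\sum_k \Tr[(P_k\sigma P_k)^2]\ge \sum_k w_k^2/\dim P_k$, shows that the purity bound with $\delta\le\delta_0=\tfrac12[h-f(1/2-\eta)]$ forces the weight of $\sigma_n$ in the high-purity region to be at most $\mathrm{poly}(n)\,\e^{-n(h-f(1/2-\eta)-\delta)/2}\le \mathrm{poly}(n)\,\e^{-n\delta_0/2}$. This gives the $\delta_0/2$ term in $r$, and its exponent is linear in the gap $h-f(1/2-\eta)\asymp\eta$.

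\textbf{Step 3: quadratic frame bound (the main obstacle).} We must show $Q_1+Q_2+Q_3\ge c_n\,\1$ on the relevant part of $\Sym^n(\mathbb C^2)$ with $c_n\ge \e^{-Cn\eta^2}$, and in fact $\e^{-n\cdot140\cdot10^{-8}}$ at $\eta=10^{-4}$. Granting this, for the vector $\Omega^{\otimes n}$, or rather its symmetric image $v$,
\[
\langle v,\sigma v\rangle\le c_n^{-1}\sum_j\langle v,\sigma^{1/2}Q_j^\perp\sigma^{1/2}v\rangle\cdots,
\]
or more cleanly
\[
\langle v,\sigma v\rangle\le \Bigl(\sum_j \norm{Q_j^{\perp}\sigma^{1/2}}\Bigr)^2 c_n^{-1},
\]
with three terms, which produces the constant $3\sqrt{n+1}$.

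For the frame bound, represent $\xi\in\Sym^n$ as a degree-$n$ polynomial $p$, with Bargmann/coherent-state norm $\int_{\CP^1}|p|^2\,d\mu_n$. Weight outside cap $j$ corresponds, up to binomial leakage of Chernoff type, to the mass of $|p|^2$ concentrated in a small disc around a point of $\CP^1$. If all three complementary weights are small, the polynomial would have to place almost all of its mass in a small region avoiding all three caps.

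A Bernstein--Markov inequality bounds $\sup|p|$ by its $L^2$ mass on a slightly larger set at cost $\e^{O(n\eta^2)}$, since the sets differ by $O(\eta)$ in angle and the Green function grows quadratically there. A subharmonic barrier argument, applying the maximum principle to $\log|p|-n\,g$ with $g$ a triangle-shaped comparison function, then prevents $|p|$ from being exponentially larger in the holes than on the triangle boundary.

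I expect the explicit constant $140\cdot10^{-8}$ to be the delicate part, since it requires tracking the leakage and barrier constants numerically. Finally, $r=\delta_0/2-140\cdot10^{-8}>0$. The trace-distance limit follows by applying \cref{eq:fidelity-trace-bounds} to pure $\proj{\Omega}^{\otimes n}$.
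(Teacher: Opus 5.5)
Your architecture is the paper's: compression to $\Sym^n(\mathbb C^2)$, a purity-to-cap bound by rank counting and Cauchy--Schwarz, a three-projector frame bound via coherent states, Bernstein--Markov and a triangle barrier, and the squeeze $\e^{-140n\eta^2}\Tr\tau_n\le\Tr(\tau_nL_{\eta,n})$. However, Step~2 fails as written, and Step~3 bounds the wrong operator.

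You take two-sided caps $|u-1/2|\le\eta$ and assert that their complements carry only types with $f(u)\le f(1/2-\eta)$. This is false on the upper side. Since $f$ increases on $[0,3/4]$, for $1/2+\eta<u<3/4$ the type spaces have dimension $D_k$ of order $\e^{nf(u)}>\e^{nh}$. The physical Dicke vector $J_n\ket{k}$ with such $k/n$ has $AB$ marginal $Q_{k,n}^{AB}/D_k$. Its purity is $1/D_k<\e^{-nh}$ for large $n$, yet all of its weight lies outside your $AB$ cap. So the purity bound does not force small weight on your complements.

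What the purity bound controls is only the small-dimension, large-eigenvalue side. These are the one-sided caps $P_{\eta,n}=\sum_{k/n\le1/2-\eta}\proj{k}$ and their rotations $P^{(a)}_{\eta,n}=\Gamma_n^aP_{\eta,n}\Gamma_n^{-a}$. This is what your sign-convention hedge is groping for, and the one-sidedness is also what later makes leakage a binomial tail.

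The inequality you need is therefore $\sum_aP^{(a)}_{\eta,n}\ge c_nI$, not $Q_1+Q_2+Q_3\ge c_nI$ for the typical caps. With it, $\langle v^{\otimes n},\tau_nv^{\otimes n}\rangle\le\Tr\tau_n\le c_n^{-1}\sum_a\Tr(\tau_nP^{(a)}_{\eta,n})$, and the three summands give $3\sqrt{n+1}$ directly. Your squared-sum form would give $9\sqrt{n+1}$.

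The frame sketch also inverts the geometry and omits the step that converts a coherent-state estimate into a frame estimate.

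\emph{Geometry.} The coherent states captured (up to leakage) by the bad cap of basis $a$ are those in the large region $\{u_a(z)\le1/2-t\}$ of $\CP^1$, not in a small disc. The small regions are the two triangular holes about $[v]$ and $[\bar v]$, where all three $u_a>1/2-t$.

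\emph{Toeplitz lower bound.} The paper bounds the Toeplitz operator of the complement $\widetilde K_t$ of slightly enlarged chart triangles from below by $C_t^{-1}\e^{-1.35nt^2}$. On this fixed set, Bernstein--Markov costs only $\e^{n\zeta}$ with $\zeta$ arbitrarily small, at the price of a constant depending on $t$. The entire quadratic cost $\frac43(1.001t)^2$ comes from the torsion barrier $\ell_0\ell_1\ell_2/(3\tau)$, not from Bernstein--Markov.

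\emph{Comparison with the frame.} The paper then uses Chernoff leakage: if $u_a(z)\le1/2-t$, the mass of $e_z^{(n)}$ outside the cap at level $1/2-\eta$ is at most $\e^{-2n(t-\eta)^2}$. This needs a second scale $t$ well above $\eta$ (about $t>5.5\eta$), because the leakage rate must beat the barrier rate. The paper takes $t=10\eta$, so $135<140<162$; this is where the constant $140$ comes from. Without separating $t$ from $\eta$, your comparison does not close.

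\emph{Step~1.} A twirled state is not supported on the singlet-symmetric sector. You must add that it commutes with the projection $P$ onto $\Sym^n(\cS)$, so that $P\bar\sigma_nP\le\bar\sigma_n$ and the compressed pair marginals inherit the purity bounds. The paper instead uses $P\sigma_nP\le\cE(\sigma_n)$.
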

Note that the same conclusion holds, a fortiori, if the purities in \cref{eq:main-purity} are
replaced by operator norms.

Consequently, \cref{conj:purity,conj:op} are false already for four
qubit parties: they fail for this $\Omega$ with every $0<\delta\leq\delta_0$ and
with \emph{every} $\eps<2$.  The pure-vector variants of
\cref{def:pure} fail for every $\eps<\sqrt2$.  This is their entire
nontrivial tolerance range, since the squared distance in
\cref{eq:phase-distance} is always at most $2$.

\begin{remark}
Two is the largest possible value of $\norm{\cdot}_1$ between two states, so
\cref{eq:main-limits} is the strongest failure the definitions allow: a state meeting the
three purity constraints is not merely far from $\proj{\Omega}^{\otimes n}$, it is
asymptotically perfectly distinguishable from it by a single measurement.  In
particular, the conclusion holds for every fixed trace-distance tolerance
strictly below $2$.
\end{remark}

\begin{corollary}[Mixed tripartite counterexample]
\label[corollary]{cor:mixed-tripartite}
Let $\rho_{ABC}=\Tr_R\proj{\Omega}\in\operatorname{End}(ABC)$.
Every sequence of states
$\sigma_n\in\operatorname{End}((ABC)^{\otimes n})$ satisfying
\[
 \Tr[(\sigma_n)_{(XY)^n}^2]\leq\e^{-n(h-\delta_0)},
 \qquad XY\in\{AB,AC,BC\},
\]
for all sufficiently large $n$ obeys
$\norm{\sigma_n-\rho_{ABC}^{\otimes n}}_1\to2$.
Thus both simultaneous typicality conjectures fail already for a mixed
state of three qubits, with maximal trace-distance separation.
\end{corollary}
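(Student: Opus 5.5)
The corollary will follow from \cref{thm:main} by purification/extension: lift each candidate $\sigma_n$ on $(ABC)^{\otimes n}$ to a state on $((ABCR))^{\otimes n}$ with the same $ABC$-marginal and then compare overlaps with $\Omega^{\otimes n}$. Concretely, given $\sigma_n$ on $(ABC)^{\otimes n}$, set $\tilde\sigma_n:=\sigma_n\otimes\tau_n$ with $\tau_n$ any state on $R^{\otimes n}$, say $\tau_n=(\1/2)^{\otimes n}$. After the tensor-factor reordering, its marginals on $(AB)^n$, $(AC)^n$, $(BC)^n$ coincide with those of $\sigma_n$. Hence $\tilde\sigma_n$ satisfies \cref{eq:main-purity} with $\delta=\delta_0$, and for $n\ge n_0$ \cref{thm:main} gives
\[
 \langle\Omega^{\otimes n},\tilde\sigma_n\Omega^{\otimes n}\rangle\le3\sqrt{n+1}\,\e^{-rn}.
\]
The only issue is that this overlap is not directly the quantity controlling $\norm{\sigma_n-\rho_{ABC}^{\otimes n}}_1$, so I would instead use fidelity and Uhlmann's theorem.

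By Uhlmann, $F(\sigma_n,\rho_{ABC}^{\otimes n})$ is the maximum of $\abs{\langle\phi,\Omega^{\otimes n}\rangle}$ over purifications $\phi$ of $\sigma_n$ on $(ABC)^{\otimes n}\otimes R^{\otimes n}$. The point is that $\Omega^{\otimes n}$ is a purification of $\rho_{ABC}^{\otimes n}$ with purifying system $R^{\otimes n}$. Although $R$ is a single qubit, $\rho_{ABC}$ has rank at most $2$, so purifications of $\sigma_n$ on $R^{\otimes n}$ may not exist when $\rank\sigma_n>2^n$. To handle this, I would project onto $\mathrm{supp}$. Let $P_n$ be the support projector of $\rho_{ABC}^{\otimes n}$, which has rank $\le2^n$, and compute
\[
 F(\sigma_n,\rho^{\otimes n}_{ABC})=F(P_n\sigma_nP_n,\rho^{\otimes n}_{ABC}).
\]
Indeed, $\sqrt{\rho}\sqrt{\sigma}$ only sees $P\sigma P$ through $\norm{\sqrt\rho\sqrt\sigma}_1^2=\norm{\sqrt\rho\,\sigma\sqrt\rho}_{1/2}$. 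The subnormalized state $\omega_n:=P_n\sigma_nP_n$ has rank $\le2^n$, so it admits a purification $\phi_n\in(ABC)^{\otimes n}\otimes R^{\otimes n}$ with $\norm{\phi_n}^2=\Tr\omega_n\le1$ and $\abs{\langle\phi_n,\Omega^{\otimes n}\rangle}=F(\omega_n,\rho_{ABC}^{\otimes n})$.

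Next I would build the four-party state $\hat\sigma_n:=\proj{\phi_n}+\sigma_n^\perp\otimes\tau_n$, where $\sigma_n^\perp:=\sigma_n-\omega_n$ is not necessarily positive. To avoid that, use instead the pinching $\omega_n+(\1-P_n)\sigma_n(\1-P_n)$, which is positive. Pinching is unital, so it can only decrease purities; but this is applied to the full $ABC$ operator, not to the pair marginals, and that is the delicate point. The safer route is therefore to avoid modifying $\sigma_n$ at all and bound the fidelity directly through the theorem. Apply \cref{thm:main} to $\hat\sigma_n:=\sigma_n^{1/2}$-type extensions: take the canonical extension $\hat\sigma_n=(\sqrt{\sigma_n}\otimes\1)\,\Gamma\,(\sqrt{\sigma_n}\otimes\1)$ with $\Gamma$ the unnormalized projector onto a maximally entangled vector between $(ABC)^{\otimes n}$ and a copy. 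This needs a larger reference space, which is allowed if the theorem's proof extends to larger $R$.

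\Cref{sec:contradiction} mentions extending to larger systems, and that is exactly what I would invoke to get the overlap bound with an arbitrary reference $R'$ containing $R^{\otimes n}$. Then
\[
 F(\sigma_n,\rho^{\otimes n}_{ABC})^2\le\langle\Omega^{\otimes n}\otimes\xi,\hat\sigma_n\,\Omega^{\otimes n}\otimes\xi\rangle\to0,
\]
and Fuchs--van de Graaf \cref{eq:fidelity-trace-bounds} gives $\tfrac12\norm{\sigma_n-\rho_{ABC}^{\otimes n}}_1\ge1-F\to1$. The final clause, that both conjectures fail, follows since \cref{conj:op} implies \cref{conj:purity}. The main obstacle is this purification step: ensuring the overlap bound of \cref{thm:main} applies when the purifying register is larger than $R^{\otimes n}$. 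I would check that the theorem's proof only uses the three pair marginals and the overlap with $\Omega^{\otimes n}$, so that a general reference system is harmless.
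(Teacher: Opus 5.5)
There is a genuine gap, and it sits exactly at the step you flag as the main obstacle. Your Uhlmann and Fuchs--van de Graaf skeleton is the right one. You are also right that a purification on $R^{\otimes n}$ alone need not exist, and that pinching by the support projection of $\rho_{ABC}^{\otimes n}$ loses control of the pair marginals.

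The problem is what comes next. You leave the overlap bound for a purification with an enlarged reference $R'\supseteq R^{\otimes n}$ to an unperformed check of the proof of \cref{thm:main}. The result you propose to invoke, \cref{cor:larger}, does not supply it. That corollary is qualitative, and it only treats tensor-power targets $(Q^{\otimes4}\Omega)^{\otimes n}$, possibly tensored with fixed product states on extra parties, not an arbitrary reference vector. Its normalization step also consumes half the slack: it assumes pair purities at most $\e^{-n(h-\delta_0/2)}$. Routed through it, your argument would at best prove the statement under that stronger hypothesis, not under the stated $\e^{-n(h-\delta_0)}$.

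The missing idea is that no extension of \cref{thm:main} is needed, because the extra register can simply be traced out. Take $R'=R^{\otimes n}\otimes E$ and a fixed unit vector $e\in E$. By Uhlmann, choose a unit-vector purification $\Psi_n$ of $\sigma_n$ with $|\langle\Omega^{\otimes n}\otimes e,\Psi_n\rangle|=F(\rho_{ABC}^{\otimes n},\sigma_n)$.

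Now set $\widehat\sigma_n=\Tr_E\proj{\Psi_n}$. This is a density operator on $(ABCR)^{\otimes n}$ whose $(ABC)^n$-marginal is $\sigma_n$. Hence its three pair marginals coincide with those of $\sigma_n$ and satisfy \cref{eq:main-purity} with $\delta=\delta_0$. Expanding the partial trace in an orthonormal basis $(e_j)$ of $E$ containing $e$ gives
\[
 \langle\Omega^{\otimes n},\widehat\sigma_n\Omega^{\otimes n}\rangle
 =\sum_j\bigl|\langle\Omega^{\otimes n}\otimes e_j,\Psi_n\rangle\bigr|^2
 \geq F(\rho_{ABC}^{\otimes n},\sigma_n)^2 .
\]
So \cref{thm:main} applies verbatim, with no renormalization and no loss of slack, and your Fuchs--van de Graaf step finishes the proof. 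This is the paper's argument.

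A smaller point: a canonical purification built from a fixed maximally entangled vector does not attain the Uhlmann bound by itself. You need the optimizing unitary on the reference, which is what choosing $\Psi_n$ via Uhlmann's theorem provides.
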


\begin{proof}
Uhlmann's theorem characterizes root fidelity as the maximal absolute
overlap of purifications \cite[Theorem~9.4]{NielsenChuang}.
Thus one can choose a finite-dimensional ancillary space
$E$ and a unit-vector purification
$\Psi_n\in(ABCR)^{\otimes n}\otimes E$ of $\sigma_n$ with
\[
 |\langle\Omega^{\otimes n}\otimes e,\Psi_n\rangle|
 =F(\rho_{ABC}^{\otimes n},\sigma_n)
\]
for a fixed unit vector $e\in E$.
The state $\widehat\sigma_n=\Tr_E\proj{\Psi_n}
\in\operatorname{End}((ABCR)^{\otimes n})$ has the same three
pair marginals as $\sigma_n$, and
\[
 \langle\Omega^{\otimes n},\widehat\sigma_n\Omega^{\otimes n}\rangle
 \geq F(\rho_{ABC}^{\otimes n},\sigma_n)^2.
\]
The left side tends to zero by \cref{thm:main}, so
$F(\rho_{ABC}^{\otimes n},\sigma_n)\to0$.  The Fuchs--van de Graaf
inequality \cite{FuchsVanDeGraaf}, recalled in
\cref{eq:fidelity-trace-bounds}, gives
\[
 2\geq\norm{\sigma_n-\rho_{ABC}^{\otimes n}}_1
 \geq2\bigl(1-F(\rho_{ABC}^{\otimes n},\sigma_n)\bigr)
 \longrightarrow2.
\]
No normalization or additional entropy slack is needed in this
extension, since \cref{thm:main} allows mixed competitors.
\end{proof}

This also refutes the Schatten-norm extension in
\cite[Conjecture~2]{Noetzel}, and indeed its version for every fixed real
$p\geq2$.  For a density operator $\tau$, H\"older interpolation gives
\[
 \Tr\tau^2\leq(\Tr\tau^p)^{1/(p-1)},\qquad 2\leq p<\infty.
\]
Thus a bound $\Tr\tau^p\leq\e^{-n(p-1)(H-\delta)}$ implies
$\Tr\tau^2\leq\e^{-n(H-\delta)}$.  The case $p=\infty$ is
\cref{eq:purity-operator}; no assertion for $1<p<2$ follows from this implication.

In unconditioned form, both one-shot conjectures assert the following.
For each number of parties $m$, there are
dimension-independent functions $g_m(\alpha)\to0$ as $\alpha\downarrow0$
and $0\leq h_m(\alpha)<\infty$ such that, for every normalized multipartite
state $\rho$ and every $0<\alpha<1$, there is a subnormalized state
$\sigma$ satisfying
\begin{equation}
 \mathsf P(\sigma,\rho)\leq g_m(\alpha),\qquad
 H_{\min}(\sigma_T)\geq H_{\min}^{\alpha}(T)_\rho-h_m(\alpha)
 \quad(\emptyset\ne T\subseteq[m]).
\label{eq:one-shot-demand}
\end{equation}
Drescher--Fawzi ask for $h_m=0$
\cite[Conjecture~III.1]{DrescherFawzi}.  Taking the conditioning system
to be trivial in Colomer--Winter gives \cref{eq:one-shot-demand} with a
finite additive loss \cite[Eq.~(2.1)]{ColomerWinter}.
Allowing subnormalized competitors weakens a formulation that requires
normalized ones, so refuting this common statement suffices for both.

\begin{corollary}[Failure of one-shot simultaneous smoothing]
\label[corollary]{cor:one-shot}
The simultaneous min-entropy-smoothing conjecture of
\cite[Conjecture~III.1]{DrescherFawzi} and its conditional variant
\cite[Eq.~(2.1)]{ColomerWinter} are false in general.  Failure already occurs
for three unconditioned parties, each a tensor power of a qubit system.
\end{corollary}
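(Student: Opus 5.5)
The plan is to reduce \cref{cor:one-shot} to \cref{cor:mixed-tripartite} by applying the hypothetical one-shot statement \cref{eq:one-shot-demand} with $m=3$ to the i.i.d.\ states $\rho_{ABC}^{\otimes n}$, regarded as tripartite states on $A^n\otimes B^n\otimes C^n$, each party a tensor power of a qubit system. We argue by contradiction: assume functions $g_3,h_3$ as in \cref{eq:one-shot-demand} exist, dimension-independent.

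\textbf{Step 1 (choosing $\alpha$).} Fix $\alpha\in(0,1)$ small enough that $g_3(\alpha)<1$. Since $1-\sqrt{1-g^2}\le g^2$ for $g\in[0,1]$, we may instead ask $g_3(\alpha)^2<1/2$. This strengthens the requirement slightly but keeps the argument clean.

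\textbf{Step 2 (lower-bounding the smooth min-entropy).} Use the quantum asymptotic equipartition property for smooth min-entropy, as in \cite{Tomamichel}. For fixed $\alpha$,
\[
 H_{\min}^{\alpha}\bigl(\rho_{XY}^{\otimes n}\bigr)\ge n h - O(\sqrt n),
 \qquad XY\in\{AB,AC,BC\}.
\]
An elementary typical-projector argument also works: project $\rho_{XY}^{\otimes n}$ onto its typical subspace, whose eigenvalues are at most $\e^{-n(h-\delta_0/4)}$. The projected state is within purified distance $\alpha$ for large $n$, so $H_{\min}^{\alpha}\ge n(h-\delta_0/4)$ eventually.

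\textbf{Step 3 (purity bounds for the competitor).} The conjecture then provides subnormalized states $\tau_n$ with $\mathsf P(\tau_n,\rho_{ABC}^{\otimes n})\le g_3(\alpha)$ and
\[
 \norm{(\tau_n)_{XY}}_\infty\le \e^{h_3(\alpha)}\e^{-n(h-\delta_0/4)}\le \e^{-n(h-\delta_0/2)}
\]
for large $n$, since $h_3(\alpha)$ is a fixed constant. Normalize by setting $\sigma_n=\tau_n/\Tr\tau_n$. Because $\bar F\ge\sqrt{1-g_3^2}$ and $F^2\le\Tr\tau_n$, the trace $t_n=\Tr\tau_n$ is bounded below by a constant $c=1-g_3(\alpha)^2>0$. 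Purity is at most the operator norm by \cref{eq:purity-operator}, so
\[
 \Tr[(\sigma_n)_{XY}^2]\le c^{-1}\e^{-n(h-\delta_0/2)}\le\e^{-n(h-\delta_0)}
\]
for large $n$.

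\textbf{Step 4 (the contradiction).} By \cref{cor:mixed-tripartite}, $\norm{\sigma_n-\rho_{ABC}^{\otimes n}}_1\to2$, and the proof there in fact shows $F(\rho_{ABC}^{\otimes n},\sigma_n)\to0$. Hence $F(\rho_{ABC}^{\otimes n},\tau_n)=\sqrt{t_n}\,F(\rho_{ABC}^{\otimes n},\sigma_n)\to0$. Since $\rho_{ABC}^{\otimes n}$ is normalized, $\bar F=F$, so $\mathsf P(\tau_n,\rho_{ABC}^{\otimes n})\to1$. This contradicts $\mathsf P\le g_3(\alpha)<1$.

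The conclusion holds with $h_3=0$ (Drescher--Fawzi) and with any finite loss (Colomer--Winter, with a trivial conditioning system), as explained before the corollary.

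The main obstacle is Step 2: pinning down a clean smooth-min-entropy lower bound in this paper's purified-distance convention. I would establish it directly via the typical projector $\Pi$, using the gentle-measurement bound $\mathsf P(\Pi\rho^{\otimes n}\Pi,\rho^{\otimes n})\le\sqrt{2\Tr[\rho^{\otimes n}(\1-\Pi)]}\to0$. This avoids citing the second-order asymptotic equipartition property.
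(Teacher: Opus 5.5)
Your proposal is correct and follows essentially the same route as the paper: lower-bound the three pair smooth min-entropies of $\rho_{ABC}^{\otimes n}$ by $n(h-\delta_0/4)$, where the paper also notes the typical-subspace alternative. You then use $t_n\geq F^2\geq1-g_3(\alpha)^2$ to normalize at a constant cost, pass from operator norms to purities, and invoke \cref{cor:mixed-tripartite}. The only difference is cosmetic: you close the argument via $F(\rho_{ABC}^{\otimes n},\tau_n)=\sqrt{t_n}\,F(\rho_{ABC}^{\otimes n},\sigma_n)\to0$, whereas the paper uses that normalization does not increase purified distance together with $\tfrac12\norm{\cdot}_1\leq\mathsf P$; your extra requirement $g_3(\alpha)^2<1/2$ in Step 1 is harmless but never used.
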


\begin{proof}
Let $\rho=\rho_{ABC}$ be the state of \cref{cor:mixed-tripartite}, and
suppose that \cref{eq:one-shot-demand} holds for $m=3$.
Fix $0<\alpha<1$ with $g_3(\alpha)<1$.  The fixed-error lower bound
in the quantum asymptotic equipartition property
\cite{TomamichelColbeckRenner}, in the purified-distance formulation of
\cite[Sec.~6.4]{Tomamichel}, gives, for the three pairs and all sufficiently
large $n$,
\[
 H_{\min}^{\alpha}((XY)^n)_{\rho^{\otimes n}}
 \geq n(h-\delta_0/4).
\]
For these unconditioned marginals, the displayed bound also follows by
restricting to an ordinary spectral typical subspace: its mass tends to one,
and its largest retained eigenvalue has the required exponential rate.
A conjectured simultaneous smoothing
$\sigma_n\in\operatorname{End}((ABC)^{\otimes n})$ would satisfy
\[
 \mathsf P(\sigma_n,\rho^{\otimes n})\leq g_3(\alpha),\qquad
 \norm{(\sigma_n)_{(XY)^n}}_\infty
 \leq\exp[-n(h-\delta_0/4)+h_3(\alpha)].
\]
The original conjecture permits subnormalized states.  Write
$t_n=\Tr\sigma_n\leq1$.  Since the target is normalized, its
generalized fidelity with $\sigma_n$ is the usual root fidelity, and
\[
 t_n\geq F(\rho^{\otimes n},\sigma_n)^2
 \geq1-g_3(\alpha)^2>0.
\]
Normalization $\widetilde\sigma_n=\sigma_n/t_n$ increases fidelity and
therefore does not increase purified distance from the normalized target.
It changes the logarithmic marginal bounds by at most the fixed number
$-\log(1-g_3(\alpha)^2)$.  This number and $h_3(\alpha)$ are absorbed
into $n\delta_0/4$.  Thus for all sufficiently large $n$,
\[
 \norm{(\widetilde\sigma_n)_{(XY)^n}}_\infty
 \leq\e^{-n(h-\delta_0/2)}.
\]
These bounds imply the purity hypotheses of \cref{cor:mixed-tripartite},
so $\norm{\widetilde\sigma_n-\rho^{\otimes n}}_1\to2$.  But
$\tfrac12\norm{\widetilde\sigma_n-\rho^{\otimes n}}_1
\leq\mathsf P(\widetilde\sigma_n,\rho^{\otimes n})\leq g_3(\alpha)<1$
by \cref{eq:fidelity-trace-bounds}, a contradiction.
The number of parties stays three while their dimensions
grow with $n$; this is precisely where the dimension independence of
$g_m$ and $h_m$ is used.
\end{proof}

Drescher and Fawzi state the one-shot result for non-overlapping families,
omitting the proof for reasons of space, and note the pure-tripartite consequence
\cite[Sec.~VI.B]{DrescherFawzi}.  The proofs are given in Drescher's report
\cite[Theorem~5.4 and Corollary~5.5]{DrescherReport}.
The i.i.d.\ statement needed for the party threshold has the following direct proof.

\begin{proposition}[Pure tripartite typicality]
\label[proposition]{prop:pure-tripartite}
For every pure tripartite input, \cref{conj:op,conj:purity} and their
pure-vector variants hold.
\end{proposition}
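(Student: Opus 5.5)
The approach is to smooth only the three single-party marginals. Each two-party marginal is then handled automatically, because the global state stays pure. Let $\psi\in ABC$ be a unit vector and fix $\delta>0$. Let $\Pi_X=\Pi^{(n)}_{\psi_X,\delta}$ be the typical projector of $\psi_X$ on $X^{\otimes n}$ for $X\in\{A,B,C\}$. Put $P=\Pi_A\otimes\Pi_B\otimes\Pi_C$; these three commute because they act on different tensor factors. The candidate is
\[
 \psi_n=P\psi^{\otimes n}/\norm{P\psi^{\otimes n}},\qquad \sigma_n=\proj{\psi_n}.
\]
Proving this single pure vector works gives both conjectures and their pure-vector variants at once. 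Indeed, the purity bounds follow from the operator-norm bounds by \cref{eq:purity-operator}, and \cref{eq:pure-state-distance} converts vector distance into trace distance.

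\emph{Closeness.} The projectors commute, so telescoping gives
\[
 \norm{(1-P)\psi^{\otimes n}}\leq\sum_{X}\norm{(1-\Pi_X)\psi^{\otimes n}}=\sum_X\bigl(1-\Tr[\psi_X^{\otimes n}\Pi_X]\bigr)^{1/2}\to0.
\]
Writing $N=\norm{P\psi^{\otimes n}}\to1$, we get
\[
 \norm{\psi_n-\psi^{\otimes n}}\leq\norm{(1-P)\psi^{\otimes n}}+(1-N)\to0,
\]
with no phase adjustment needed. Hence $\norm{\psi_n-\psi^{\otimes n}}\leq\eps$ for large $n$.

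\emph{Single-party marginals.} The key operator inequality is the following. If $Q=\Pi_B\otimes\Pi_C\leq\1$ acts on $(BC)^{\otimes n}$, then
\[
 \Tr_{(BC)^n}\bigl[(\1\otimes Q)\Phi(\1\otimes Q)\bigr]\leq\Tr_{(BC)^n}\Phi
\]
for every $\Phi\geq0$. To see this, pair both sides against an arbitrary $Y\geq0$ on $A^{\otimes n}$ and use the partial-trace identity. Since $Q^2=Q$, the left side gives $\Tr[(Y\otimes Q)\Phi]$ and the right side gives $\Tr[(Y\otimes\1)\Phi]$. Their difference is $\Tr[(Y\otimes(\1-Q))\Phi]\geq0$, because $Y\otimes(\1-Q)\geq0$. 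Applying this with $\Phi=\proj{\psi}^{\otimes n}$ yields
\[
 (\sigma_n)_{A^n}\leq N^{-2}\,\Pi_A\psi_A^{\otimes n}\Pi_A\leq N^{-2}\e^{-n(H(\psi_A)-\delta)}\Pi_A.
\]
Since $N\to1$, the factor $N^{-2}$ is absorbed by replacing $\delta$ with $\delta/2$ in the construction. The same argument applies to $B$ and $C$.

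\emph{Pairs and the full system.} Since $\psi_n$ is pure, $(\sigma_n)_{(AB)^n}$ has the same nonzero spectrum as $(\sigma_n)_{C^n}$. The same holds for $AC$ with $B$ and for $BC$ with $A$, and the analogous identity $H(\psi_{AB})=H(\psi_C)$ holds for the input. So each pair bound is exactly the single-party bound already proved. For $T=ABC$ the target is $\norm{\sigma_n}_\infty=1\leq\e^{n\delta}$, which is trivial because $H(\psi)=0$.

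The only step needing care is the operator inequality for the compressed partial trace, since it is what keeps the projections on $B$ and $C$ from spoiling the bound on $A$. The point is that the other parties' projectors sit on traced-out systems and commute with everything retained. I expect the remaining steps to be routine.
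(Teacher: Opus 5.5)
Your proposal is correct and follows the paper's proof essentially step for step. It uses the same product of single-party typical projections applied to $\psi^{\otimes n}$, the same compressed-partial-trace operator inequality with $\Pi_A$ kept outside, absorption of the normalization through a $\delta/2$ tolerance, and the identification of each pair marginal's spectrum with that of the complementary single marginal. The differences are only cosmetic. You bound closeness by a telescoping triangle inequality rather than the paper's exact formulas $2\sqrt{1-p_n}$ and $2-2\sqrt{p_n}$. You also verify the operator inequality by duality rather than by the identity $\Tr_{BC}Y-\Tr_{BC}[(I\otimes Q)Y(I\otimes Q)]=\Tr_{BC}[(I\otimes(I-Q))Y(I\otimes(I-Q))]$.
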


\begin{proof}
Let $\psi\in A\otimes B\otimes C$ be a unit vector and fix $\delta>0$.
For each $X\in\{A,B,C\}$, let $\Pi_X$ be the spectral typical
projection of $\psi_X^{\otimes n}$ with tolerance $\delta/2$.  Then
\[
 \Tr(\Pi_X\psi_X^{\otimes n})\longrightarrow1,\qquad
 \Pi_X\psi_X^{\otimes n}\Pi_X
 \leq\e^{-n(H(\psi_X)-\delta/2)}\Pi_X.
\]
The three lifted projections commute.  Their union bound shows that
\[
 p_n:=\norm{(\Pi_A\otimes\Pi_B\otimes\Pi_C)\psi^{\otimes n}}^2
 \longrightarrow1.
\]
For all sufficiently large $n$, define
\[
 \phi_n=p_n^{-1/2}(\Pi_A\otimes\Pi_B\otimes\Pi_C)\psi^{\otimes n},
 \qquad \sigma_n=\proj{\phi_n}.
\]
For a positive operator $Y$ on $A\otimes BC$ and a projection $Q$
on $BC$, cyclicity of the partial trace gives
\[
 \Tr_{BC}Y-\Tr_{BC}[(I_A\otimes Q)Y(I_A\otimes Q)]
 =\Tr_{BC}[(I_A\otimes(I_{BC}-Q))Y(I_A\otimes(I_{BC}-Q))]\geq0.
\]
Apply this on $A^{\otimes n}\otimes(BC)^{\otimes n}$ with
$Q=\Pi_B\otimes\Pi_C$, keeping $\Pi_A$ outside the partial trace.  It follows that
\[
 (\sigma_n)_{A^n}\leq p_n^{-1}\Pi_A\psi_A^{\otimes n}\Pi_A,
 \qquad
 \norm{(\sigma_n)_{A^n}}_\infty
 \leq p_n^{-1}\e^{-n(H(\psi_A)-\delta/2)}
 \leq\e^{-n(H(\psi_A)-\delta)}
\]
eventually.  The same argument applies to $B,C$.  Purity of both
$\sigma_n$ and $\proj{\psi}$ identifies the nonzero spectra of each
pair marginal with those of its complementary single marginal, giving
the required pair bounds.  The full-system bound is automatic since
$H(\proj{\psi})=0$.  Finally,
\[
 \norm{\sigma_n-\proj{\psi}^{\otimes n}}_1=2\sqrt{1-p_n}\longrightarrow0,
 \qquad
 \norm{\phi_n-\psi^{\otimes n}}^2=2-2\sqrt{p_n}\longrightarrow0.
\]
This proves operator-norm typicality and its pure-vector version; purity
bounds follow from \cref{eq:purity-operator}.
\end{proof}

\begin{remark}[Party and dimension thresholds]
\label[remark]{rem:pure-tripartite}
The known one- and two-party cases, \cref{prop:pure-tripartite}, and
\cref{thm:main} make four parties the exact threshold for failure on pure
inputs.  Together with the two-party positive result,
\cref{cor:mixed-tripartite} gives the threshold of three for mixed inputs.

The theorem extends to every local dimension $d\geq2$ and every number of
parties $m\geq4$; see \cref{cor:larger}.  Thus qubits already realize the
obstruction in the smallest nontrivial local dimension.
\end{remark}

\section{The counterexample}
\label{sec:candidate}

\subsection{Spin-one-half singlets}

Let $V=\mathbb C^2$ carry the fundamental representation of $SU(2)$,
with orthonormal \emph{computational basis}
\[
 \ket{0}=\begin{pmatrix}1\\0\end{pmatrix},\qquad
 \ket{1}=\begin{pmatrix}0\\1\end{pmatrix}.
\]
Write $U(g):V\to V$ for the defining unitary action of $g\in SU(2)$.
Its \emph{diagonal action} on $V^{\otimes n}$ is $U(g)^{\otimes n}$.
A nonzero vector $\psi\in V^{\otimes n}$ is called a \emph{singlet} if
\[
 U(g)^{\otimes n}\psi=\psi\qquad\text{for every }g\in SU(2).
\]
Equivalently, it has total spin zero and spans a copy of the
one-dimensional trivial representation.
The four physical systems are $A=B=C=R=V$, and their joint Hilbert space is
\[
 \cK=A\otimes B\otimes C\otimes R=V^{\otimes4}.
\]
The Clebsch--Gordan decomposition yields
\begin{equation}
 V\otimes V=V_0\oplus V_1,\qquad d_0=1,\quad d_1=3.
\label{eq:clebsch-gordan}
\end{equation}
Here $V_\ell\subset V\otimes V$ is the \emph{spin-$\ell$ subspace} and
$d_\ell=\dim V_\ell$.  In the computational basis, write
\[
 s=\frac{\ket{01}-\ket{10}}{\sqrt2},\qquad
 t_+=\ket{00},\quad t_0=\frac{\ket{01}+\ket{10}}{\sqrt2},\quad
 t_-=\ket{11}.
\]
Thus $s\in V_0$ and $t_+,t_0,t_-\in V_1$.
In the order $A,B,C,R$, define
\begin{equation}
 e_0=s_{AB}\otimes s_{CR},\qquad
 e_1=\frac{t_+^{AB}\otimes t_-^{CR}-t_0^{AB}\otimes t_0^{CR}
                 +t_-^{AB}\otimes t_+^{CR}}{\sqrt3}.
\label{eq:singlet-basis}
\end{equation}
These are orthonormal singlets and form a basis of the invariant subspace:
\begin{equation}
 \cS:=\Inv_{SU(2)}(\cK)=\operatorname{span}\{e_0,e_1\}
 \subset\cK,\qquad \dim\cS=2.
\label{eq:singlet-space}
\end{equation}
Thus the trivial representation occurs with multiplicity two in $\cK$;
$\cS$ is the corresponding \emph{singlet space}.
For details, see, for example, \cite{Varshalovich}.

We use a separate copy $\mathcal M:=\mathbb C^2$, called the
\emph{singlet multiplicity space}, for the coordinates in this basis.
Let $f_0,f_1$ be its standard orthonormal basis and define the isometry
\[
 J:\mathcal M\longrightarrow\cK,\qquad Jf_\ell=e_\ell,
 \qquad \ell=0,1.
\]
Then $\operatorname{Im}J=\cS$; viewed as a map onto $\cS$,
$J$ is unitary for the standard inner product on $\mathcal M$ and
the inherited inner product on $\cS$.

\subsection{Cyclic recoupling and the target vector}

Let $\gamma=(B\ C\ R)$ be the three-cycle fixing $A$, and let
$U_\gamma\in\operatorname{End}(\cK)$ be the unitary that
permutes its tensor factors:
\[
 U_\gamma(\ket{abcr})
 =\ket{arbc},\qquad
 a,b,c,r\in\{0,1\}.
\]

Substitution in \cref{eq:singlet-basis} gives
\[
 U_\gamma e_0=-\tfrac12e_0-\tfrac{\sqrt3}{2}e_1,\qquad
 U_\gamma e_1=\tfrac{\sqrt3}{2}e_0-\tfrac12e_1.
\]
Its restriction to $\cS$, expressed in the ordered basis $(e_0,e_1)$,
is represented on $\mathcal M$ by
\begin{equation}
 \Gamma=\begin{pmatrix}-\frac12&\frac{\sqrt3}{2}\\
                  -\frac{\sqrt3}{2}&-\frac12\end{pmatrix}
 \in\operatorname{End}(\mathcal M),
 \qquad \Gamma^T\Gamma=I,\quad \Gamma^3=I.
\label{eq:cyclic-matrix}
\end{equation}
The inverse cycle transposes this matrix and reverses the order of the
three bases, with no effect on the argument.  Define the scalar
$\omega\in\mathbb C$ and vectors $v,\bar v\in\mathcal M$ by
\begin{equation}
 \omega=\e^{2\pi i/3},\qquad
 v=\frac1{\sqrt2}\begin{pmatrix}1\\-i\end{pmatrix},\qquad
 \bar v=\frac1{\sqrt2}\begin{pmatrix}1\\i\end{pmatrix}.
\label{eq:cyclic-eigenvectors}
\end{equation}
Then $(v,\bar v)$ is orthonormal and
\begin{equation}
 \Gamma v=\bar\omega v,\qquad \Gamma\bar v=\omega\bar v.
\label{eq:cyclic-eigenvalues}
\end{equation}
Our candidate for a counterexample is the vector
\begin{equation}
 \boxed{\ \Omega=\frac{e_0-i e_1}{\sqrt2}=Jv\in\cS.\ }
\label{eq:target-vector}
\end{equation}

This is the Higuchi--Sudbery state \cite[Sec.~3]{HiguchiSudbery}, up to
an overall phase.  More explicitly, with the same tensor order,
\begin{equation}
 \Omega=\frac{-i}{\sqrt6}\bigl[
 \ket{0011}+\ket{1100}
 +\omega(\ket{1010}+\ket{0101})
 +\omega^2(\ket{1001}+\ket{0110})\bigr].
\label{eq:target-computational}
\end{equation}
Higuchi and Sudbery introduced this state as a candidate for maximizing
the average entanglement entropy across the three two-versus-two
bipartitions; they proved stationarity and gave numerical evidence for
maximality.  Their construction also identifies it through the symmetry
of the two-dimensional singlet space.  Brierley and Higuchi subsequently
proved local maximality \cite{BrierleyHiguchi}.  Gour and Wallach proved
maximality, uniquely up to local unitaries, within their family
$\mathcal A$ of superpositions of matching Bell-pair products
\cite[Theorem~10]{GourWallach}.  These results do not assert unrestricted
global maximality of the average von Neumann entropy.
For a recent account of the impossibility of four-qubit pure states with
all two-qubit marginals maximally mixed, see \cite{HuberSiewert}.
Exact computational-basis matrices for $\Omega$ and its marginals are
given in \cref{app:matrices}.

\subsection{Pair channels and entropy}

For a pair $XY$, let $\Pi_\ell^{XY}\in\operatorname{End}(X\otimes Y)$
be the orthogonal projector onto its spin-$\ell$ subspace.
The complementary spin spaces in $e_0,e_1$ are orthogonal, so partial
trace kills the off-diagonal singlet matrix elements.  Since $e_\ell$
is maximally entangled between the two spin-$\ell$ spaces,
\begin{equation}
 \Tr_{CR}\ketbra{e_\ell}{e_{\ell'}}
 =\delta_{\ell\ell'}\frac{\Pi_\ell^{AB}}{d_\ell}.
\label{eq:singlet-partial-trace}
\end{equation}
The isometry $J$ defines the embedding
\[
 \phi:\operatorname{End}(\mathcal M)\longrightarrow
       \operatorname{End}(\cK),\qquad \phi(X)=JXJ^*,
\]
which is trace preserving and nonunital: $\phi(I)=JJ^*=\Pi_{\mathrm{inv}}$, the
orthogonal projection in $\operatorname{End}(\cK)$ onto $\cS$.
For every pair $XY$, define its
\emph{marginal channel} on the multiplicity space by
\[
 \cN_{XY}:\operatorname{End}(\mathcal M)\longrightarrow
          \operatorname{End}(X\otimes Y),\qquad
 \cN_{XY}(M)=\Tr_{(XY)^c}(JMJ^*).
\]
In particular, \cref{eq:singlet-partial-trace} gives
\begin{equation}
 \cN_{AB}(X)=X_{00}\Pi_0^{AB}+X_{11}\frac{\Pi_1^{AB}}3.
\label{eq:ab-channel}
\end{equation}

\begin{lemma}[Cyclic covariance of the pair channels]
\label[lemma]{lem:cyclic-channels}
The isometry $J$ intertwines the cyclic actions:
\[
 U_\gamma J=J\Gamma,\qquad
 \phi(\Gamma X\Gamma^*)=U_\gamma\phi(X)U_\gamma^*.
\]
Writing $S_0=AB$, $S_1=AC$, and $S_2=BC$, one has, for every
$X\in\operatorname{End}(\mathcal M)$,
\begin{equation}
 \cN_{S_a}(X)=\sum_{\ell=0}^1
 (\Gamma^{-a}X\Gamma^a)_{\ell\ell}\frac{\Pi_\ell^{S_a}}{d_\ell},
 \qquad a=0,1,2.
\label{eq:cyclic-channels}
\end{equation}
Moreover, $\proj{\Omega}$ is fixed by conjugation with $U_\gamma$
and with every diagonal spin rotation $U(g)^{\otimes4}$.
\end{lemma}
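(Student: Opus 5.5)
The plan is to derive everything from the single identity $U_\gamma J=J\Gamma$. Both sides are linear maps $\mathcal M\to\cK$, so it suffices to check them on $f_0,f_1$. We have $U_\gamma Jf_\ell=U_\gamma e_\ell$, and the displayed substitution formulas give $U_\gamma e_0=-\tfrac12e_0-\tfrac{\sqrt3}{2}e_1$ and $U_\gamma e_1=\tfrac{\sqrt3}{2}e_0-\tfrac12e_1$. These are exactly $J$ applied to the columns $\Gamma f_0=(-\tfrac12,-\tfrac{\sqrt3}2)^T$ and $\Gamma f_1=(\tfrac{\sqrt3}2,-\tfrac12)^T$ of \cref{eq:cyclic-matrix}. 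This intertwining is the only input that must be checked by direct substitution into \cref{eq:singlet-basis}, and I regard it as the main (purely computational) obstacle. The remaining steps are formal. The conjugation identity follows because $\Gamma$ is real orthogonal, so $\Gamma^*=\Gamma^T=\Gamma^{-1}$. Hence
\[
\phi(\Gamma X\Gamma^*)=J\Gamma X\Gamma^*J^*=U_\gamma JXJ^*U_\gamma^*,
\]
where the second equality uses $J\Gamma=U_\gamma J$ and its adjoint $\Gamma^*J^*=J^*U_\gamma^*$.

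For \cref{eq:cyclic-channels}, the next step is to track how $U_\gamma$ moves the pair labels. Since $U_\gamma\ket{abcr}=\ket{arbc}$, the content of slot $B$ moves to slot $C$, $C$ moves to $R$, and $R$ moves to $B$. Consequently, for an operator $Y$ on $\cK$, the partial trace satisfies $\Tr_{(AC)^c}(U_\gamma YU_\gamma^*)=\Tr_{(AB)^c}Y$, once the factors are identified by the obvious relabeling $A\otimes B\cong A\otimes C$. This identification carries $\Pi_\ell^{AB}$ to $\Pi_\ell^{AC}$, because the spin projectors are defined the same way on every pair. Iterating, $U_\gamma^a$ carries $S_0=AB$ to $S_a$, since the orbit is $AB\to AC\to AR\simeq$, and one must confirm the pair orbit.

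Indeed $AB\mapsto AC\mapsto AR$, so the $a=2$ case as literally cycled gives $AR$, not $BC$. I would fix this by using purity: $\Omega$, and indeed every $JXJ^*$ with $X$ a one-dimensional projector, is pure, and the complementary pair marginal has the same spectrum. More directly, by \cref{eq:singlet-partial-trace} applied with the roles $AB\leftrightarrow CR$, tracing out $AR$ or $BC$ gives the same diagonal coefficients. So one checks $\cN_{BC}(X)=\sum_\ell X'_{\ell\ell}\Pi_\ell^{BC}/d_\ell$ and $\cN_{AR}(X)=\sum_\ell X'_{\ell\ell}\Pi_\ell^{AR}/d_\ell$ with the same $X'=\Gamma^{-2}X\Gamma^2$. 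This works because a singlet that is a combination of $e_\ell$-type recouplings for the pairing $\{AR\},\{BC\}$ has its $AR$ and $BC$ spin labels forced equal, since the total spin is zero.

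So the induction reads $\cN_{S_a}(X)=\cN_{AB}(\Gamma^{-a}X\Gamma^a)$ after relabeling, which reduces to \cref{eq:ab-channel}. Concretely, $\phi(X)=U_\gamma^a\phi(\Gamma^{-a}X\Gamma^a)U_\gamma^{-a}$. Taking the appropriate pair marginal and using \cref{eq:ab-channel} for $AB$ yields the claimed formula, with the $AR\leftrightarrow BC$ equality handling $a=2$.

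The invariance statements are then immediate. Every element of $\cS$ is fixed by $U(g)^{\otimes4}$ by definition of a singlet, so $\proj\Omega$ is fixed under conjugation by it. For the cycle, $U_\gamma\Omega=U_\gamma Jv=J\Gamma v=\bar\omega Jv=\bar\omega\Omega$ by \cref{eq:cyclic-eigenvalues}. The phase $\bar\omega$ cancels under conjugation, so $U_\gamma\proj\Omega U_\gamma^*=\proj\Omega$.
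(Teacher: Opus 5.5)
Your proposal is correct and follows essentially the paper's route: you check the intertwining by substitution, and you transport the $AB$ formula by powers of the cycle to obtain $AC$ and $AR$. For $BC$ you use the fact that the two sides of the $AR|BC$ coupling carry equal spin weights, which is equivalent to transporting the $CR$ marginal by $U_\gamma^2$. Your initial purity remark would give only spectra, not the operator identity, but the direct argument you give afterwards is exactly what the paper uses.
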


\begin{proof}
The intertwining identity is the definition of the matrix $\Gamma$ in
\cref{eq:cyclic-matrix}, and the covariance of $\phi$ follows.  After relabeling the
output pair, the $AC$ marginal of $\phi(X)$ is the $AB$ marginal
of $U_\gamma^{-1}\phi(X)U_\gamma=\phi(\Gamma^{-1}X\Gamma)$.
Similarly, the second inverse cycle gives the $AR$ marginal and
the matrix $\Gamma^{-2}X\Gamma^2$.  In the $AR|BC$ coupling, each
singlet has the form $\ket{(j,j);0}$, $j=0,1$, and is maximally
entangled between the two spin-$j$ spaces.  Tracing either side of
$\ketbra{(j,j);0}{(j',j');0}$ gives zero if $j\ne j'$, because
the spin spaces on the traced side are orthogonal; if $j=j'$, it gives
$\Pi_j/d_j$ on the retained side.  The two marginals therefore have
the same weights $(\Gamma^{-2}X\Gamma^2)_{jj}$.
Thus the $BC$ channel has the same formula
with output projectors on $BC$.  Together with \cref{eq:ab-channel}, this proves
\cref{eq:cyclic-channels}.  Finally, $\Omega=Jv$ is a singlet and
$U_\gamma\Omega=J\Gamma v=\bar\omega\Omega$, proving both invariances.
\end{proof}

For the target, all three spin distributions are $p=(1/2,1/2)$, and
\begin{equation}
 \operatorname{spec}(\Omega_{XY})
 =\left(\tfrac12,\tfrac16,\tfrac16,\tfrac16\right),
 \qquad XY\in\{AB,AC,BC\}.
\label{eq:pair-spectra}
\end{equation}
In particular,
\begin{equation}
 h=\log2+\tfrac12\log3\simeq1.24245332489,
 \qquad \Tr(\Omega_{XY}^2)=\tfrac13.
\label{eq:pair-entropy-purity}
\end{equation}
The marginal collision entropy is $\log3<h$; the positive gap explains
why the unmodified tensor power does not already satisfy the desired
purity rates.

\section{Reduction to a three-projector frame}
\label{sec:reduction}

\subsection{Compression to the symmetric singlet space}

In this section, we reduce an arbitrary physical state
$\sigma_n\in\operatorname{End}(\cK^{\otimes n})$ to an operator
$\tau_n$ on an $(n+1)$-dimensional space.  The reduction preserves
the target overlap, and its pair-channel outputs inherit the required
marginal bounds.

For integers $n\geq1$, write
\[
 \mathcal H_n:=\Sym^n(\mathcal M)\subset\mathcal M^{\otimes n},
 \qquad N_n:=\dim\mathcal H_n=n+1,
\]
with the inherited tensor-product inner product of \cref{sec:definitions}.
The isometry
\[
 J_n:=J^{\otimes n}|_{\mathcal H_n}:
 \mathcal H_n\longrightarrow\cK^{\otimes n}
\]
has image $\Sym^n(\cS)$: every physical copy is a singlet, and the
vector is invariant under permutations of the copies.  These two
conditions are imposed by the orthogonal projections
$\Pi_{\mathrm{inv}}^{\otimes n}$ onto $\cS^{\otimes n}$ and
$\Pi_{\Sym}=\frac1{n!}\sum_{\pi\in S_n}U_\pi$ onto $\Sym^n(\cK)$,
where $U_\pi=U_\pi^{\cK}$ permutes the full copies.
Since every $U_\pi$ commutes with the identical tensor power
$\Pi_{\mathrm{inv}}^{\otimes n}$, the projections commute, and
\[
 P:=J_nJ_n^*=\Pi_{\Sym}\Pi_{\mathrm{inv}}^{\otimes n},
 \qquad J_n^*J_n=I_{\mathcal H_n}.
\]
Define the \emph{compression} of $\sigma_n$ by
\begin{equation}
 \tau_n:=J_n^*\sigma_nJ_n\in\operatorname{End}(\mathcal H_n),
 \qquad J_n\tau_nJ_n^*=P\sigma_nP.
\label{eq:compression}
\end{equation}
When applying $\cN_{XY}^{\otimes n}$ to an operator on $\mathcal H_n$,
we extend it by zero on $\mathcal H_n^\perp$ in
$\mathcal M^{\otimes n}$.

\begin{lemma}[Symmetric singlet compression]
\label[lemma]{lem:singlet-compression}
The operator $\tau_n$ in \cref{eq:compression} is positive, with
$\Tr\tau_n\leq1$, and preserves the target overlap:
\begin{equation}
 a_n:=\langle v^{\otimes n},\tau_nv^{\otimes n}\rangle
 =\langle\Omega^{\otimes n},\sigma_n\Omega^{\otimes n}\rangle,
 \qquad a_n\leq\Tr\tau_n.
\label{eq:compressed-overlap}
\end{equation}
For every pair $XY$ of distinct parties and $p\in[1,\infty]$,
\begin{equation}
 \norm{\cN_{XY}^{\otimes n}(\tau_n)}_p
 \leq\norm{(\sigma_n)_{(XY)^n}}_p.
\label{eq:compression-marginals}
\end{equation}
In particular, the pair-channel outputs inherit the purity and
operator-norm bounds of the original state.
\end{lemma}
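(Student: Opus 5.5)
Three facts drive the argument: the compression is a corner of $\sigma_n$; $\Omega^{\otimes n}$ lies in the range of $J_n$; and the pair channel applied to $\tau_n$ equals the corresponding marginal of the compressed state $P\sigma_nP$.

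\emph{Positivity, trace and overlap.} Positivity of $\tau_n=J_n^*\sigma_nJ_n$ is immediate. Since $J_n$ is an isometry, $\Tr\tau_n=\Tr(P\sigma_n)\leq\Tr\sigma_n=1$. Next, $v^{\otimes n}\in\mathcal H_n$ and $J_nv^{\otimes n}=(Jv)^{\otimes n}=\Omega^{\otimes n}$, so
\[
a_n=\langle J_nv^{\otimes n},\sigma_nJ_nv^{\otimes n}\rangle=\langle\Omega^{\otimes n},\sigma_n\Omega^{\otimes n}\rangle .
\]
Because $\norm{v^{\otimes n}}=1$ and $\tau_n\geq0$, we also get $a_n\leq\norm{\tau_n}_\infty\leq\Tr\tau_n$.

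\emph{Marginal comparison.} By definition, $\cN_{XY}^{\otimes n}(\tau_n)=\Tr_{((XY)^c)^n}(J^{\otimes n}\tau_nJ^{\otimes n*})$, with $\tau_n$ extended by zero off $\mathcal H_n$. Hence $J^{\otimes n}\tau_nJ^{\otimes n*}=J_n\tau_nJ_n^*=P\sigma_nP$, and so $\cN_{XY}^{\otimes n}(\tau_n)=(P\sigma_nP)_{(XY)^n}$. It therefore suffices to show
\[
\norm{(P\sigma_nP)_{(XY)^n}}_p\leq\norm{(\sigma_n)_{(XY)^n}}_p .
\]

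\emph{The main obstacle.} In general, compressing by a projection does not decrease marginal norms. Here, however, $P$ is built from two commuting pieces, and we treat them separately.

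First, $\Pi_{\mathrm{inv}}^{\otimes n}$ is the twirl projection onto invariants of $SU(2)^n$ acting copywise. Write $\Pi_{\mathrm{inv}}^{\otimes n}=\int U_g\,dg$ with $U_g=\bigotimes_k U(g_k)^{\otimes4}$ and Haar measure on $SU(2)^n$. Then compressing by it is dominated by a mixture: the map $Y\mapsto\int U_gYU_g^*\,dg$ is a unital, trace-preserving (hence $p$-norm contractive) channel commuting with partial traces. It maps $\sigma_n$ to a state whose compression to the invariant subspace is $\Pi_{\mathrm{inv}}^{\otimes n}\sigma_n\Pi_{\mathrm{inv}}^{\otimes n}$ plus positive cross terms. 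That is not yet domination, so a cleaner route is preferable.

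The cleaner route is the pinching inequality: for any projection $Q$, $Q\sigma Q\leq Q\sigma Q+Q^\perp\sigma Q^\perp=\mathcal D_Q(\sigma)$, where $\mathcal D_Q$ is a unital, trace-preserving pinching. The marginal of $Q\sigma Q$ is therefore bounded above, in operator order, by the marginal of $\mathcal D_Q(\sigma)$. Since the Schatten norms are monotone on positive operators, $\norm{(Q\sigma Q)_T}_p\leq\norm{(\mathcal D_Q\sigma)_T}_p$.

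It remains to compare $(\mathcal D_Q\sigma)_T$ with $\sigma_T$. Here we use that both $P$-factors are averages of local unitary symmetries of the right type. $\mathcal D_Q$ is the average of $\sigma\mapsto W\sigma W^*$ over $W\in\{I,\,2Q-I\}$. The task is to check that, for $Q=\Pi_{\mathrm{inv}}^{\otimes n}$ and $Q=\Pi_{\Sym}$, the marginal of $\mathcal D_Q\sigma$ is obtained from $\sigma_{(XY)^n}$ by a unital channel.

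For $\Pi_{\Sym}$, I would instead bound directly: $\Pi_{\Sym}\sigma\Pi_{\Sym}\leq\frac1{n!}\sum_\pi U_\pi\sigma U_\pi^*$ is false in general. The alternative is to replace $\mathcal D_Q$ by the full group twirls, writing $P=\int U_g\,dg$ over the compact group $G=SU(2)^n\times S_n$. Positivity plus the operator convexity argument gives $\norm{(P\sigma P)_T}_p\leq\norm{(\int U_g\sigma U_g^*\,dg)_T}_p$. This works because the twirl $\mathcal T$ satisfies $\mathcal T(\sigma)\geq P\mathcal T(\sigma)P=P\sigma'P$, and one checks $P\sigma P=P\mathcal T(\sigma)P$ only up to the invariance of $P$. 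That invariance holds since $U_gP=P$.

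Concretely: $P\sigma P=P\,\mathcal T(\sigma)\,P\leq\mathcal D_P(\mathcal T\sigma)$, and $\mathcal T\sigma$ commutes with $P$, so $\mathcal D_P(\mathcal T\sigma)=\mathcal T\sigma$. Thus $P\sigma P\leq\mathcal T(\sigma)$. Each $U_g$ acts on $(XY)^n$ as a tensor of local unitaries composed with a permutation of copies, so $(\mathcal T\sigma)_{(XY)^n}=\int V_g\sigma_{(XY)^n}V_g^*\,dg$. This is a mixed-unitary channel, and its $p$-norm is at most $\norm{\sigma_{(XY)^n}}_p$ by convexity. Verifying the identity $P\sigma P=P\mathcal T(\sigma)P$ is the step I expect to require the most care.
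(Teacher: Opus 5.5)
Your final argument is correct and is essentially the paper's. The paper likewise dominates $P\sigma_nP$ by the twirl $\cE(\sigma_n)=\mathbb E[W\sigma_nW^*]$ over $W=U_\pi S_g$, then uses that each $W$ factors across the pair cut. Its identity $\cE(X)-PXP=\mathbb E[(W-P)X(W-P)^*]\geq0$ is the same operator as your $P^\perp\mathcal T(X)P^\perp\geq0$, because $PW=WP=P$ gives $W-P=P^\perp WP^\perp$.

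Delete the false starts, in particular your assertion that $\Pi_{\Sym}\sigma\Pi_{\Sym}\leq\frac1{n!}\sum_\pi U_\pi\sigma U_\pi^*$ fails in general: it is true, and your own final argument run with $S_n$ alone proves it. Also, the group is the semidirect product $SU(2)^n\rtimes S_n$, not a direct product, though everything you use follows directly from $PW=WP=P$ for each $W$.
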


\begin{proof}
\emph{Trace and overlap.}
Positivity follows from \cref{eq:compression}, and
$\Tr\tau_n=\Tr(P\sigma_n)\leq\Tr\sigma_n=1$.
Since $J_nv^{\otimes n}=\Omega^{\otimes n}$, the overlap identity
is immediate.  Positivity of $\tau_n$ and
$\norm{v^{\otimes n}}=1$ give $a_n\leq\Tr\tau_n$.

\emph{An average dominating the compression.}
For $g=(g_1,\ldots,g_n)\in SU(2)^n$ and $\pi\in S_n$, set
\begin{equation}
 S_g:=\bigotimes_{j=1}^nU(g_j)^{\otimes4},\qquad
 W_{\pi,g}:=U_\pi S_g
 \in\operatorname{End}(\cK^{\otimes n}).
\label{eq:physical-spin-action}
\end{equation}
Let $\mathbb E$ average over independent normalized Haar variables
$g_j$ and an independent uniform permutation $\pi$.
Haar averaging of the vector action projects onto the fixed vectors, so
$\mathbb E_g S_g=\Pi_{\mathrm{inv}}^{\otimes n}$.  Hence
\[
 \mathbb E W_{\pi,g}
 =\left(\frac1{n!}\sum_\pi U_\pi\right)\Pi_{\mathrm{inv}}^{\otimes n}=P.
\]
The corresponding average on operators is the linear map
\[
 \cE:\operatorname{End}(\cK^{\otimes n})\longrightarrow
       \operatorname{End}(\cK^{\otimes n}),\qquad
 \cE(X):=\mathbb E[W_{\pi,g}XW_{\pi,g}^*].
\]
Using $\mathbb EW_{\pi,g}=\mathbb EW_{\pi,g}^*=P$, expansion gives
\begin{equation}
 \cE(X)-PXP
 =\mathbb E[(W_{\pi,g}-P)X(W_{\pi,g}-P)^*]\geq0,
 \qquad X\geq0.
\label{eq:averaged-compression}
\end{equation}
In particular, $P\sigma_nP\leq\cE(\sigma_n)$.

\emph{Transfer to pair marginals.}
After grouping factors by party, both $S_g$ and $U_\pi$ are tensor
products of unitaries on $A^{\otimes n},B^{\otimes n},C^{\otimes n},
R^{\otimes n}$.  Thus each $W=W_{\pi,g}$ factors across every
pair/complement cut as $W_{XY}\otimes W_{(XY)^c}$, and
\[
 \bigl(\cE(\sigma_n)\bigr)_{(XY)^n}
 =\mathbb E[W_{XY}(\sigma_n)_{(XY)^n}W_{XY}^*].
\]
On the other hand, the zero-extension convention and positivity of
partial trace give
\[
 0\leq\cN_{XY}^{\otimes n}(\tau_n)
 =\Tr_{((XY)^c)^n}(P\sigma_nP)
 \leq\bigl(\cE(\sigma_n)\bigr)_{(XY)^n}.
\]
For positive $X,Y$ on a common Hilbert space,
\begin{equation}
 \norm X_\infty\leq\norm{X+Y}_\infty,
 \qquad \Tr X^2\leq\Tr(X+Y)^2,
\label{eq:positive-norm-monotonicity}
\end{equation}
where the second inequality follows from
$\Tr(X+Y)^2-\Tr X^2=2\Tr XY+\Tr Y^2\geq0$.
More generally, the min--max principle gives
$\lambda_j(X)\leq\lambda_j(X+Y)$ for the decreasingly ordered
eigenvalues, so $\norm X_p\leq\norm{X+Y}_p$ for all
$p\in[1,\infty]$.  Combining this monotonicity with convexity and
unitary invariance gives, for every such $p$,
\[
\begin{aligned}
 \norm{\cN_{XY}^{\otimes n}(\tau_n)}_p
 &\leq\norm{\bigl(\cE(\sigma_n)\bigr)_{(XY)^n}}_p\\
 &\leq\mathbb E\norm{W_{XY}(\sigma_n)_{(XY)^n}W_{XY}^*}_p
 =\norm{(\sigma_n)_{(XY)^n}}_p.
\end{aligned}
\]
Squaring the $p=2$ inequality gives the purity bound.
\end{proof}

Every $W_{\pi,g}$ fixes $\operatorname{Im}J_n$ pointwise, so
$J_n^*\cE(\sigma_n)J_n=\tau_n$: averaging leaves the compressed
operator unchanged.  Averaging alone need not restrict the support to
$\operatorname{Im}P$; for example, it fixes the full-rank maximally
mixed state.  The rest of the proof uses only $\tau_n$ and the conclusions
of \cref{lem:singlet-compression}.

\subsection{Binary types and physical support dimensions}

For $\bm\ell=(\ell_1,\ldots,\ell_n)\in\{0,1\}^n$, set
$\ket{\bm\ell}=f_{\ell_1}\otimes\cdots\otimes f_{\ell_n}
\in\mathcal M^{\otimes n}$.
For integers $0\leq k\leq n$, the normalized \emph{Dicke vector}
$\ket{k}\in\mathcal H_n$ \cite{Dicke} is
\begin{equation}
 \ket{k}=\binom nk^{-1/2}
 \sum_{\substack{\bm\ell\in\{0,1\}^n\\\sum_j\ell_j=k}}\ket{\bm\ell}
 \in\mathcal H_n\subset\mathcal M^{\otimes n}.
\label{eq:dicke-vectors}
\end{equation}
These vectors form an orthonormal basis of $\mathcal H_n$ for the
inherited inner product.
For $\ell\in\{0,1\}$, write
$V_\ell^{AB}=\operatorname{Im}\Pi_\ell^{AB}\subset A\otimes B$ for
the physical spin-$\ell$ subspace.  Define
\begin{equation}
 \mathcal W_{k,n}^{AB}
 :=\bigoplus_{\substack{\bm\ell\in\{0,1\}^n\\\sum_j\ell_j=k}}
       V_{\ell_1}^{AB}\otimes\cdots\otimes V_{\ell_n}^{AB}
 \subset(A\otimes B)^{\otimes n}.
\label{eq:type-output-space}
\end{equation}
Each tensor factor occupies its corresponding physical copy.  Write
$Q_{k,n}^{AB}$ for the orthogonal projection onto
$\mathcal W_{k,n}^{AB}$, namely
\begin{equation}
 Q_{k,n}^{AB}
 :=\sum_{\substack{\bm\ell\in\{0,1\}^n\\\sum_j\ell_j=k}}
       \bigotimes_{j=1}^n\Pi_{\ell_j}^{AB}
 \in\operatorname{End}((A\otimes B)^{\otimes n}).
\label{eq:type-output-projection}
\end{equation}
Define $\mathcal W_{k,n}^{S_a}$ and $Q_{k,n}^{S_a}$ in the same way
for $S_1=AC$ and $S_2=BC$, using the physical projectors
$\Pi_\ell^{S_a}$.  Their dimensions are the same as for $S_0=AB$.

\begin{lemma}[Binary types and their physical marginals]
\label[lemma]{lem:type-marginals}
The spaces $\mathcal W_{k,n}^{AB}$, $0\leq k\leq n$, are mutually
orthogonal, with dimensions
\begin{equation}
 D_k:=\dim\mathcal W_{k,n}^{AB}
 =\rank Q_{k,n}^{AB}=\binom nk3^k.
\label{eq:type-dimensions}
\end{equation}
The marginal of the physical Dicke vector $J_n\ket{k}$ is
\begin{equation}
\begin{aligned}
 \Tr_{(CR)^n}\!\left(J_n\proj{k}J_n^*\right)
 &=\cN_{AB}^{\otimes n}(\proj{k})\\
 &=\frac{Q_{k,n}^{AB}}{D_k}.
\end{aligned}
\label{eq:dicke-marginal}
\end{equation}
Thus its image is $\mathcal W_{k,n}^{AB}$ and its rank is $D_k$.
For an arbitrary operator $\tau_n\in\operatorname{End}(\mathcal H_n)$,
one has
\begin{equation}
 \cN_{AB}^{\otimes n}(\tau_n)
 =\sum_{k=0}^n\langle k,\tau_n k\rangle\,
       \frac{Q_{k,n}^{AB}}{D_k}.
\label{eq:type-output-decomposition}
\end{equation}
In particular, the physical type weights recover the Dicke diagonal:
\[
 \Tr\!\left[Q_{k,n}^{AB}\cN_{AB}^{\otimes n}(\tau_n)\right]
 =\langle k,\tau_n k\rangle.
\]
\end{lemma}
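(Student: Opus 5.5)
The plan is to compute $\cN_{AB}^{\otimes n}$ on the product basis $\ketbra{\bm\ell}{\bm\ell'}$ of $\mathcal M^{\otimes n}$ and then restrict to $\mathcal H_n$. Since $\cN_{AB}$ is the partial trace of $J(\cdot)J^*$, its $n$-fold tensor power is the partial trace over $(CR)^n$ of $J^{\otimes n}(\cdot)J^{\otimes n*}$, after the reordering of factors fixed in \cref{sec:definitions}. By \cref{eq:ab-channel}, or equivalently \cref{eq:singlet-partial-trace} applied copy by copy, we have
\[
 \cN_{AB}^{\otimes n}(\ketbra{\bm\ell}{\bm\ell'})
 =\prod_j\delta_{\ell_j\ell_j'}\bigotimes_{j=1}^n\frac{\Pi_{\ell_j}^{AB}}{d_{\ell_j}}.
\]
This product formula gives the first equality in \cref{eq:dicke-marginal}, because $J_n=J^{\otimes n}|_{\mathcal H_n}$. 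Orthogonality of the spaces $\mathcal W_{k,n}^{AB}$ is immediate: two distinct sequences $\bm\ell\ne\bm\ell'$ differ in some copy $j$, where $\Pi_0^{AB}\Pi_1^{AB}=0$. This also shows that $Q_{k,n}^{AB}$ is a sum of mutually orthogonal projections, and hence is the orthogonal projection onto $\mathcal W_{k,n}^{AB}$. Counting the $\binom nk$ sequences of weight $k$, each contributing a summand of dimension $1^{n-k}3^k$, gives $D_k=\binom nk3^k$.

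Next, I expand $\proj{k}$ using \cref{eq:dicke-vectors} as $\binom nk^{-1}\sum_{\bm\ell,\bm\ell'}\ketbra{\bm\ell}{\bm\ell'}$, summed over pairs of sequences of weight $k$. The Kronecker deltas kill all off-diagonal terms. Each surviving diagonal term contributes $\bigotimes_j\Pi_{\ell_j}^{AB}/d_{\ell_j}=3^{-k}\bigotimes_j\Pi_{\ell_j}^{AB}$. Summing these terms yields $\binom nk^{-1}3^{-k}Q_{k,n}^{AB}=Q_{k,n}^{AB}/D_k$, which is the second equality in \cref{eq:dicke-marginal}. The statements about the image and the rank follow at once.

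For a general $\tau_n$, I write $\tau_n=\sum_{k,k'}\langle k,\tau_nk'\rangle\ketbra{k}{k'}$ and apply the product formula to the off-diagonal terms $\ketbra{k}{k'}$ with $k\ne k'$. Every pair $(\bm\ell,\bm\ell')$ appearing in such a term has different Hamming weights, so $\bm\ell\ne\bm\ell'$ and the term vanishes. This leaves \cref{eq:type-output-decomposition}. The final identity comes from multiplying \cref{eq:type-output-decomposition} by $Q_{k,n}^{AB}$ and using $Q_{k,n}^{AB}Q_{k',n}^{AB}=\delta_{kk'}Q_{k,n}^{AB}$ together with $\Tr Q_{k,n}^{AB}=D_k$.

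The only delicate step is the bookkeeping of tensor orderings. I must check that the identification $(ABCR)^{\otimes n}\cong(AB)^{\otimes n}\otimes(CR)^{\otimes n}$ makes $\Tr_{(CR)^n}\circ\,\mathrm{Ad}_{J^{\otimes n}}$ factor copywise as $\cN_{AB}^{\otimes n}$. This holds because partial trace over a tensor product of copies is the tensor product of the single-copy partial traces. Everything else is routine.
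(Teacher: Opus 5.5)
Your proposal is correct and follows essentially the same route as the paper: the copywise formula $\cN_{AB}^{\otimes n}(\ketbra{\bm\ell}{\bm\ell'})=\delta_{\bm\ell,\bm\ell'}\bigotimes_j\Pi_{\ell_j}^{AB}/d_{\ell_j}$, survival of only the $\binom nk$ diagonal terms in $\proj{k}$, vanishing of the Dicke off-diagonal terms for $k\ne k'$, and the trace identity from $Q_{k,n}^{AB}Q_{k',n}^{AB}=\delta_{kk'}Q_{k,n}^{AB}$. Your explicit remark on the tensor reordering, which justifies the first equality in \cref{eq:dicke-marginal}, spells out a step the paper leaves implicit.
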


\begin{proof}
The spaces belonging to different spin sequences are orthogonal because
$V_0^{AB}\perp V_1^{AB}$.  A sequence with $k$ entries equal to one
contributes a tensor product of dimension $3^k$, and there are
$\binom nk$ such sequences.  This proves \cref{eq:type-dimensions} and the asserted
orthogonality.

For $\bm\ell,\bm m\in\{0,1\}^n$, tensoring \cref{eq:ab-channel} gives
\[
 \cN_{AB}^{\otimes n}(\ketbra{\bm\ell}{\bm m})
 =\delta_{\bm\ell,\bm m}
   \bigotimes_{j=1}^n\frac{\Pi_{\ell_j}^{AB}}{d_{\ell_j}}.
\]
In the expansion of $\proj{k}$, precisely the $\binom nk$ diagonal
terms survive, each with coefficient $\binom nk^{-1}3^{-k}=D_k^{-1}$.
Their sum is $Q_{k,n}^{AB}/D_k$, proving \cref{eq:dicke-marginal}.
If $k\ne k'$, the spin sequences in $\ket{k}$ and $\ket{k'}$
are disjoint, so
$\cN_{AB}^{\otimes n}(\ketbra{k}{k'})=0$.
Expanding $\tau_n$ in the Dicke basis proves
\cref{eq:type-output-decomposition}.  Finally,
$Q_{k,n}^{AB}Q_{k',n}^{AB}=\delta_{kk'}Q_{k,n}^{AB}$ and
$\Tr Q_{k,n}^{AB}=D_k$ give the type-weight identity.
\end{proof}

Define the \emph{binary entropy} $b:[0,1]\to\mathbb R$ and the function
$f:[0,1]\to\mathbb R$ by
\begin{equation}
 b(u)=-u\log u-(1-u)\log(1-u),\qquad f(u)=b(u)+u\log3,
\label{eq:binary-entropy}
\end{equation}
with $0\log0=0$.  The standard binomial estimate
$\binom nk\leq\e^{nb(k/n)}$ follows by bounding the $k$-th term of a
binomial probability sum by one, including the endpoint cases.  Hence
\begin{equation}
 D_k\leq\e^{nf(k/n)},\qquad f(1/2)=h.
\label{eq:type-dimension-bound}
\end{equation}
For $0<\eta<1/2$, define the \emph{entropy deficit}
\begin{equation}
 \Delta(\eta)=h-f(1/2-\eta)
 =\log2-b(1/2-\eta)+\eta\log3
 =\eta\log3+2\eta^2+O(\eta^4).
\label{eq:deficit}
\end{equation}
In particular $\Delta(\eta)\geq\eta\log3>0$.

\subsection{One-sided frequency caps and the frame bound}

Define the following endomorphisms of $\mathcal H_n$:
\begin{equation}
 P_{\eta,n}=\sum_{k/n\leq1/2-\eta}\proj{k},\qquad
 \Gamma_n=\Gamma^{\otimes n}|_{\mathcal H_n},\qquad
 P_{\eta,n}^{(a)}=\Gamma_n^aP_{\eta,n}\Gamma_n^{-a},
\label{eq:frequency-caps}
\end{equation}
\begin{equation}
 L_{\eta,n}=\sum_{a=0}^2P_{\eta,n}^{(a)}.
\label{eq:frame-operator}
\end{equation}
Here $\Gamma_n$ is unitary, and $P_{\eta,n}$ and $P_{\eta,n}^{(a)}$
are orthogonal projections.  The latter are \emph{spectral caps} for the spin-one
frequency in the three coupling bases.  Their sum
$L_{\eta,n}\in\operatorname{End}(\mathcal H_n)$ is the associated
positive \emph{unit-weight fusion-frame operator}
\cite{CasazzaKutyniokLi}.
To interpret its expectation, put $s_n=\Tr\tau_n$.  If $s_n>0$,
let $\widehat\tau_n=\tau_n/s_n$, and let $K_a\in\{0,\ldots,n\}$
be the outcome of measuring this normalized state in the rotated Dicke
basis $(\Gamma_n^a\ket{k})_{k=0}^n$.  Then
\begin{equation}
 \Tr(\tau_nL_{\eta,n})
 =s_n\sum_{a=0}^2\Pr_{\widehat\tau_n}\{K_a\leq n(1/2-\eta)\}.
\label{eq:frame-probability}
\end{equation}
If $s_n=0$, then $\tau_n=0$.

Since $f$ is increasing on $[0,1/2]$, the
physical marginal of every type in a cap has entropy per copy at most
$h-\Delta(\eta)$.  More precisely, \cref{lem:type-marginals,eq:type-dimension-bound} give
\[
 \frac1n H\!\left(\frac{Q_{k,n}^{AB}}{D_k}\right)
 =\frac{\log D_k}{n}
 \leq f(k/n)\leq h-\Delta(\eta),\qquad k/n\leq1/2-\eta,
\]
and the same bound holds in the other two coupling bases.  We do not need
all types with such low entropy per copy; the one-sided cutoff makes
coherent leakage a binomial tail event.

\begin{lemma}[Marginal bounds imply cap bounds]
\label[lemma]{lem:cap}
Let $0<\eta<1/2$, $\delta>0$, and let $\sigma_n$ be a density
operator on $\cK^{\otimes n}$, with compression
$\tau_n=J_n^*\sigma_nJ_n$.  If its three pair purities are at most
$\e^{-n(h-\delta)}$, then
\begin{equation}
 \Tr(\tau_nL_{\eta,n})
 \leq3\sqrt{n+1}\,\e^{-n(\Delta(\eta)-\delta)/2}.
\label{eq:purity-cap-bound}
\end{equation}
If instead their operator norms are at most $\e^{-n(h-\delta)}$, then
\begin{equation}
 \Tr(\tau_nL_{\eta,n})
 \leq3(n+1)\e^{-n(\Delta(\eta)-\delta)}.
\label{eq:operator-cap-bound}
\end{equation}
\end{lemma}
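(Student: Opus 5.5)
We split $\Tr(\tau_nL_{\eta,n})=\sum_{a=0}^2\Tr(\tau_nP^{(a)}_{\eta,n})$ and bound each of the three terms separately. The bound for $a=0$ comes directly from the $AB$ marginal, and the bounds for $a=1,2$ come from the $AC$ and $BC$ marginals via the cyclic covariance of \cref{lem:cyclic-channels}.

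For $a=0$ we have $\Tr(\tau_nP_{\eta,n})=\sum_{k\le n(1/2-\eta)} w_k$ with $w_k=\langle k,\tau_nk\rangle$. By \cref{eq:type-output-decomposition}, the operator $Y:=\cN_{AB}^{\otimes n}(\tau_n)=\sum_k w_kQ_{k,n}^{AB}/D_k$ is block diagonal with orthogonal blocks. Hence
\[
 \Tr Y^2=\sum_k w_k^2/D_k,\qquad \norm{Y}_\infty=\max_k w_k/D_k .
\]
By \cref{lem:singlet-compression}, both quantities are controlled by the corresponding quantities for $(\sigma_n)_{(AB)^n}$.

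In the purity case, Cauchy--Schwarz over the at most $n+1$ cap indices gives
\[
 \sum_{\text{cap}} w_k\le\Bigl(\sum_{\text{cap}}D_k\Bigr)^{1/2}\Bigl(\sum_k w_k^2/D_k\Bigr)^{1/2}.
\]
Since $f$ is increasing on $[0,1/2]$, \cref{eq:type-dimension-bound} gives $D_k\le \e^{n(h-\Delta(\eta))}$ throughout the cap. Therefore
\[
 \sum_{\text{cap}}w_k\le \sqrt{n+1}\,\e^{n(h-\Delta)/2}\e^{-n(h-\delta)/2}=\sqrt{n+1}\,\e^{-n(\Delta-\delta)/2}.
\]
In the operator-norm case, each term satisfies $w_k\le D_k\e^{-n(h-\delta)}\le\e^{-n(\Delta-\delta)}$, and summing over at most $n+1$ indices gives $(n+1)\e^{-n(\Delta-\delta)}$.

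For $a=1,2$ we transport the same computation to the other pairs. By \cref{eq:cyclic-channels} applied tensorwise, $\cN_{S_a}^{\otimes n}(\tau)$ is given by the $AB$-type formula applied to $\Gamma_n^{-a}\tau\Gamma_n^{a}$, with output projectors $Q_{k,n}^{S_a}$. Here one must check that $\Gamma^{\otimes n}$ preserves $\mathcal H_n$, so that $\Gamma_n$ is well defined and the zero-extension convention causes no trouble; this holds because $\Gamma^{\otimes n}$ commutes with permutations. The weights are then $w^{(a)}_k=\langle k,\Gamma_n^{-a}\tau_n\Gamma_n^{a}k\rangle$, and
\[
 \sum_{\text{cap}}w^{(a)}_k=\Tr(\tau_n\Gamma_n^aP_{\eta,n}\Gamma_n^{-a})=\Tr(\tau_nP^{(a)}_{\eta,n}).
\]
The projectors $Q^{S_a}_{k,n}$ have the same ranks $D_k$, so the argument above applies verbatim with the $S_a$ marginal bound. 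Summing the three contributions gives the factor $3$ in both \cref{eq:purity-cap-bound} and \cref{eq:operator-cap-bound}.

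The only step needing real care is the index bookkeeping in the cyclic covariance: matching the sign of $a$ in $\Gamma^{-a}X\Gamma^a$ with the definition $P^{(a)}=\Gamma_n^aP\Gamma_n^{-a}$. A sign mismatch would only permute which pair controls which cap, and the total bound is unchanged since all three pair purities are assumed bounded.
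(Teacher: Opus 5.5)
Your proposal is correct and is essentially the paper's proof. The paper writes $\Tr(\tau_nP_{\eta,n})=\Tr[R^{AB}_{\eta,n}\cN_{AB}^{\otimes n}(\tau_n)]$, with the cap projection $R^{AB}_{\eta,n}=\sum_{k/n\leq1/2-\eta}Q^{AB}_{k,n}$ of rank at most $(n+1)\e^{n(h-\Delta(\eta))}$, and applies Hilbert--Schmidt Cauchy--Schwarz (respectively $\Tr(RX)\leq\rank(R)\norm{X}_\infty$); this is your weighted Cauchy--Schwarz over the type weights $w_k$ written at the operator level, and the transfer to $AC$ and $BC$ through the tensorized cyclic channel formula and the rotated Dicke basis, including your identity $\sum_{\text{cap}}w^{(a)}_k=\Tr(\tau_nP^{(a)}_{\eta,n})$, matches the paper's convention exactly.
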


\begin{proof}
For $AB$, define the orthogonal projection
\[
 R_{\eta,n}^{AB}=\sum_{k/n\leq1/2-\eta}Q_{k,n}^{AB}
 \in\operatorname{End}((A\otimes B)^{\otimes n}).
\]
Its image is the orthogonal sum of the spaces $\mathcal W_{k,n}^{AB}$
with $k/n\leq1/2-\eta$.  By \cref{eq:type-dimension-bound,eq:deficit}, each such type
satisfies
\[
 D_k\leq\e^{nf(k/n)}\leq\e^{nf(1/2-\eta)}
     =\e^{n(h-\Delta(\eta))},
\]
where the second inequality uses
$f'(u)=\log(3(1-u)/u)>0$ for $0<u\leq1/2$, together with
continuity at zero.  Summing the dimensions of at most $n+1$ orthogonal
types and applying \cref{lem:type-marginals} gives
\begin{equation}
\begin{aligned}
 \rank R_{\eta,n}^{AB}
 &=\sum_{k/n\leq1/2-\eta}D_k
 \leq(n+1)\e^{n(h-\Delta(\eta))},\\
 \Tr(\tau_nP_{\eta,n})
 &=\Tr[R_{\eta,n}^{AB}\cN_{AB}^{\otimes n}(\tau_n)].
\end{aligned}
\label{eq:ab-cap-rank-weight}
\end{equation}
The marginal bound from \cref{lem:singlet-compression} and
Hilbert--Schmidt Cauchy--Schwarz yield
\begin{equation}
 \Tr(\tau_nP_{\eta,n})
 \leq\sqrt{\rank R_{\eta,n}^{AB}}\,
       \norm{\cN_{AB}^{\otimes n}(\tau_n)}_2
 \leq\sqrt{n+1}\,\e^{-n(\Delta(\eta)-\delta)/2}.
\label{eq:ab-purity-cap-bound}
\end{equation}
For operator norms, use $\Tr(RX)\leq\rank(R)\norm X_\infty$.
For $S_a\in\{AB,AC,BC\}$, put
$R_{\eta,n}^{S_a}=\sum_{k/n\leq1/2-\eta}Q_{k,n}^{S_a}$.
Tensoring \cref{eq:cyclic-channels} and using the rotated Dicke basis
$(\Gamma_n^a\ket{k})_k$ gives
\begin{equation}
 \Tr(\tau_nP_{\eta,n}^{(a)})
 =\Tr[R_{\eta,n}^{S_a}\cN_{S_a}^{\otimes n}(\tau_n)],
 \qquad a=0,1,2.
\label{eq:rotated-cap-identity}
\end{equation}
The ranks of these three physical cap projections are identical.
The same estimates therefore apply to each pair, and summing proves both
assertions.
\end{proof}

\begin{proposition}[A quadratic frame cost]
\label[proposition]{prop:frame}
For every fixed $0<\eta\leq10^{-4}$, there is $n_0(\eta)$ such that
\begin{equation}
 L_{\eta,n}\geq\e^{-140n\eta^2}I_{\mathcal H_n},
 \qquad n\geq n_0(\eta).
\label{eq:frame-bound}
\end{equation}
\end{proposition}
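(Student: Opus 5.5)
The plan is to represent $\mathcal H_n=\Sym^n(\mathcal M)$ by homogeneous polynomials of degree $n$ in two variables, i.e. by holomorphic sections over $\CP^1$, and to compare the three caps with Toeplitz operators built from coherent states $u^{\otimes n}$, $u\in\mathcal M$ a unit vector. The resolution of identity $(n+1)\int_{\CP^1}\proj{u^{\otimes n}}\dd\mu(u)=I_{\mathcal H_n}$, with $\mu$ the normalized Fubini--Study measure, lets us write any $\xi\in\mathcal H_n$ in terms of its Husimi density $|\langle u^{\otimes n},\xi\rangle|^2$. For a coherent vector, the Dicke distribution $|\langle k,u^{\otimes n}\rangle|^2$ is binomial$(n,|u_1|^2)$, where $u_1$ is the $f_1$ coordinate. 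Hence $u^{\otimes n}$ lies almost entirely in $\operatorname{Im}P_{\eta,n}$ as soon as $|u_1|^2\leq1/2-K\eta$ with $K>1$. Hoeffding gives the leakage bound $\norm{(I-P_{\eta,n})u^{\otimes n}}^2\leq\e^{-2n(K-1)^2\eta^2}$. Since $\Gamma$ is a real rotation, $P_{\eta,n}^{(a)}$ is the same statement for $|(\Gamma^{-a}u)_1|^2$.

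\emph{Geometry.} On the Bloch sphere, $\{|u_1|^2<1/2\}$ is an open hemisphere. $\Gamma$ acts as a rotation by $\pm 2\pi/3$ about the axis through the eigenvectors $v,\bar v$ of \cref{eq:cyclic-eigenvalues}. The three rotated hemispheres therefore have coplanar normals at mutual angles $2\pi/3$. They cover the sphere except for the two poles $[v],[\bar v]$, where every coupling gives exactly $1/2$. Thickening by the margin $K\eta$, the points not in any good region $G_a=\{|(\Gamma^{-a}u)_1|^2\leq1/2-K\eta\}$ form two geodesic discs $D_\pm$ of radius $\rho\asymp cK\eta$ around $[v],[\bar v]$; an explicit trigonometric computation gives $c$. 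Let $T_a=(n+1)\int_{G_a}\proj{u^{\otimes n}}\dd\mu$. Then $0\leq T_a\leq I$, and the splitting $T_a\leq2P^{(a)}T_aP^{(a)}+2(I-P^{(a)})T_a(I-P^{(a)})$ yields
\[
 T_a\leq2P_{\eta,n}^{(a)}+2(n+1)\e^{-2n(K-1)^2\eta^2}I.
\]
Summing over $a$ and using $\sum_aT_a\geq(n+1)\int_{(D_+\cup D_-)^c}\proj{u^{\otimes n}}\dd\mu$, it will suffice to prove the concentration bound
\[
 (n+1)\int_{D_+\cup D_-}|\langle u^{\otimes n},\xi\rangle|^2\dd\mu
 \leq(1-\e^{-Cn\rho^2})\norm{\xi}^2 .
\]
This says that a degree-$n$ polynomial cannot place all but an $\e^{-Cn\rho^2}$ fraction of its Fubini--Study mass in two discs of radius $\rho$.

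\emph{Concentration bound (main obstacle).} In an affine chart centred at a pole, the weighted function $\log|p(z)|^2-n\log(1+|z|^2)$ equals a subharmonic function minus a potential whose Laplacian is of order $n$. Adding the barrier $Cn|z|^2$ makes it subharmonic on a disc of radius $2\rho$. The maximum principle then bounds the weighted density on $D_\pm$ by $\e^{O(n\rho^2)}$ times its supremum on the annulus $\rho\leq|z|\leq2\rho$, which lies outside the holes. A Bernstein--Markov type inequality, using sub-mean value over discs of radius $\sim n^{-1/2}$, converts that supremum back to an $L^2$ mass on a slightly larger annulus, at polynomial cost in $n$. Iterating, or comparing areas, shows that the mass outside the holes is at least $\e^{-C'nK^2\eta^2}$ times the total, up to $\mathrm{poly}(n)$ factors. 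The two holes are far apart, so their barriers do not interact.

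\emph{Closing the constants.} Combining the pieces gives
\[
 L_{\eta,n}\geq\tfrac12\e^{-C'nc^2K^2\eta^2}/\mathrm{poly}(n)-3(n+1)\e^{-2n(K-1)^2\eta^2}.
\]
Both exponents are quadratic in $\eta$, so the argument works only if $K$ can be chosen with $2(K-1)^2>C'c^2K^2$. I expect this to be the delicate point: the barrier constant must be computed sharply, and the smallness $\eta\leq10^{-4}$ is used to linearize the sphere geometry and to absorb the $O(\eta^4)$ corrections. With admissible $K$ fixed, the bound $\e^{-140n\eta^2}$ follows once $n\geq n_0(\eta)$ absorbs the polynomial factors.
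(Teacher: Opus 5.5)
Your architecture matches the paper's: coherent states with the resolution of the identity, Hoeffding leakage out of a cap, the splitting $T_a\leq2P_{\eta,n}^{(a)}+2(n+1)\e^{-2n(t-\eta)^2}I$ with $t=K\eta$, then a subharmonic barrier and a Bernstein--Markov step. But the geometric step is wrong, and it is fatal exactly at the point you flag as delicate.

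The points lying in no $G_a$ do not form geodesic discs. In the affine chart $x$ about $[v]$ one has $u_a=\tfrac12+\operatorname{Re}(\omega^ax)/(1+|x|^2)$. So each hole is, to leading order, the equilateral triangle $\{\operatorname{Re}(\omega^ax)>-t,\ a=0,1,2\}$, with inradius $t$ and circumradius $2t$. Your inclusion $(D_+\cup D_-)^c\subseteq\bigcup_aG_a$ forces the discs to contain these triangles, so their chart radius satisfies $\rho\geq2t$ up to higher-order terms.

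No concentration bound over such discs can then beat $\e^{-n\rho^2}$. Take $\xi=v^{\otimes n}$. In this chart $W_{\xi,n}(x)=(1+|x|^2)^{-n}$, hence
$(n+1)\int_{(D_+\cup D_-)^c}W_{\xi,n}\dd\mu_{FS}\leq(n+1)\int_{|x|\geq\rho}W_{\xi,n}\dd\mu_{FS}=(1+\rho^2)^{-n-1}\leq\e^{-n(4t^2-O(t^4))}$.
The leakage term you subtract is of order $\e^{-2n(t-\eta)^2}\geq\e^{-2nt^2}$. So your condition $2(K-1)^2>C'c^2K^2$ reads at best $2(K-1)^2>4K^2$, which fails for every $K$.

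Tuning the thresholds to the disc does not help. Outside a disc of radius $\rho$ one can only guarantee $\min_au_a\approx\tfrac12-\rho/2$ (in the vertex directions), so the leakage exponent is about $2(\rho/2-\eta)^2<\rho^2/2<\rho^2$. The smallness of $\eta$ only controls $O(t^4)$ corrections and cannot repair this leading-order mismatch. Running the maximum principle on a disc of radius $2\rho$, with the supremum taken on an annulus, only enlarges the loss.

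The missing idea is to keep the triangular shape. \Cref{lem:shape} places the holes inside the chart triangles $\mathcal T_\tau$, $\tau=1.001t$. The barrier is four times the Dirichlet torsion function of the equilateral triangle, $\mathcal U_\tau=\ell_0\ell_1\ell_2/(3\tau)$. Its maximum is $\tau^2/3$ by AM--GM, one third of the value $\tau^2$ for the circumscribed disc.

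Put $M=\sup_{\widetilde K_t}W_{\psi,n}$ and apply the maximum principle to $\frac1n\log(W_{\psi,n}/M)-4\mathcal U_\tau$ on the triangle itself, whose boundary lies in $\widetilde K_t$. This gives $\sup_{\CP^1}W_{\psi,n}\leq\e^{\frac43n\tau^2}M$. Bernstein--Markov on the fixed Lipschitz set $\widetilde K_t$, with corners handled by inward parallelograms, then yields $\lambda_{\min}(T_n(\1_{\widetilde K_t}))\geq C_t^{-1}\e^{-1.35nt^2}$.

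Because $\tfrac43<2$, the leakage exponent $2(t-\eta)^2$ dominates once $t/\eta$ is large. The paper's choice $t=10\eta$ gives $135\eta^2<140\eta^2<162\eta^2$.
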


The proof uses the probability geometry and triangular containment of
\cref{sec:entropy}, followed by the estimates of \cref{sec:toeplitz}.
Each individual
projection has a kernel; the assertion concerns their sum and is uniform
in the vector being tested.  The lower bound is exponentially small, but
its exponent is quadratic in the frequency tolerance.  This is sufficient
because $\Delta(\eta)$ has a positive linear term.

\section{Probability geometry and the two holes}
\label{sec:entropy}

The geometric construction in this section proceeds in two stages.
A point of the Bloch sphere of the singlet multiplicity space $\mathcal M$
determines three nonnegative spin-one probabilities whose sum is $3/2$, so their
triples lie in the triangle $\{u_a\geq0,\ \sum_a u_a=3/2\}$.
The attainable triples form a disk inside this triangle.  A scalar
function on this disk gives the average pair entropy and singles out its
center, whose preimages are two antipodal points of the Bloch sphere.  The frequency
cutoffs used in the frame argument then remove a small triangle about this
center; its preimage consists of two holes in the Bloch sphere of $\mathcal M$.

\subsection{The probability disk}

Recall the singlet multiplicity space $\mathcal M=\mathbb C^2$, the
domain of $J:\mathcal M\to\cK$.  Its \emph{projective line} is
\[
 \CP^1=\{[z]:z\in\mathcal M\setminus\{0\}\},\qquad
 [z]=\{(\lambda z_0,\lambda z_1):\lambda\in\mathbb C\}.
\]
We identify each line with the rank-one orthogonal projector onto it:
\begin{equation}
 [z]\longleftrightarrow\pi_z:=\frac{zz^*}{z^*z},\qquad
 \CP^1\simeq
 \{\pi\in\operatorname{End}(\mathcal M):\pi^2=\pi=\pi^*,\ \Tr\pi=1\}.
\label{eq:projective-projectors}
\end{equation}
This also identifies $\CP^1$ with the \emph{Bloch sphere}.  Indeed, in the
basis $f_0,f_1$, the \emph{Pauli matrices} are
\[
 \sigma_x=\begin{pmatrix}0&1\\1&0\end{pmatrix},\qquad
 \sigma_y=\begin{pmatrix}0&-i\\i&0\end{pmatrix},\qquad
 \sigma_z=\begin{pmatrix}1&0\\0&-1\end{pmatrix}.
\]
Every rank-one orthogonal projector has the unique form
\[
 \pi=\tfrac12(I_{\mathcal M}+r_x\sigma_x+r_y\sigma_y+r_z\sigma_z),
 \qquad r=(r_x,r_y,r_z)\in\mathbb R^3,\quad |r|=1.
\]
The projectors $\pi_v$ and $\pi_{\bar v}$ are antipodal points of
this sphere.  Their physical counterparts are $J\pi_vJ^*$ and
$J\pi_{\bar v}J^*$.

For $a=0,1,2$, define the \emph{spin distribution} in the $a$-th
coupling basis by
\begin{equation}
\begin{aligned}
 q^{(a)}&:\CP^1\longrightarrow
 \{(p_0,p_1)\in[0,1]^2:p_0+p_1=1\},\\
 q^{(a)}([z])_\ell
 &=\Tr\!\left(\pi_z\Gamma^a\proj{f_\ell}\Gamma^{-a}\right)
 =\frac{|(\Gamma^{-a}z)_\ell|^2}{z^*z},\qquad \ell=0,1.
\end{aligned}
\label{eq:spin-distributions}
\end{equation}
For a unit representative $z$, its two components give the spin-zero
and spin-one probabilities for the $S_a$ pair marginal of $\proj{Jz}$,
where $S_0=AB,S_1=AC,S_2=BC$ as in \cref{lem:cyclic-channels}.
Its \emph{spin-one probability} is the function
\begin{equation}
 u_a:\CP^1\longrightarrow[0,1],\qquad
 u_a([z]):=q^{(a)}([z])_1,\qquad a=0,1,2.
\label{eq:spin-one-probability}
\end{equation}
Both definitions are independent of the nonzero representative $z$.
We abbreviate their values at $[z]$, or equivalently at $\pi_z$,
as $q^{(a)}(z)$ and $u_a(z)$.  The physical pair entropy is
$f(u_a(z))$.

Write $\mathcal F([z])=(u_0(z),u_1(z),u_2(z))$ and
$c=(1/2,1/2,1/2)$.  Substituting the Bloch coordinates into
\cref{eq:spin-distributions,eq:spin-one-probability} gives
\begin{equation}
 u_0=\frac12-\frac{r_z}{2},\qquad
 u_1=\frac12+\frac{r_z-\sqrt3r_x}{4},\qquad
 u_2=\frac12+\frac{r_z+\sqrt3r_x}{4}.
\label{eq:bloch-probabilities}
\end{equation}
Thus $\mathcal F$ forgets $r_y$, and
\begin{equation}
 \sum_{a=0}^2u_a=\frac32,\qquad
 \norm{\mathcal F([z])-c}^2=\frac38(r_x^2+r_z^2).
\label{eq:probability-radius}
\end{equation}
Projection of the sphere onto the $(r_x,r_z)$ plane fills the unit
disk, and the remaining affine map scales all distances by
$\sqrt{3/8}$.  Consequently the exact image of the Bloch sphere under
$\mathcal F$ is the filled \emph{probability disk}
\begin{equation}
 D_{\mathrm{prob}}:=\mathcal F(\CP^1)
 =\left\{u\in\mathbb R^3:\sum_{a=0}^2u_a=\frac32,
                   \ \norm{u-c}^2\leq\frac38\right\}.
\label{eq:probability-disk}
\end{equation}
Every interior point has two preimages, distinguished by the sign of
$r_y$; every boundary point has one.  The two preimages are related by
complex conjugation in the basis $f_0,f_1$.  In particular, the center
$c$ lifts to $[v]$ and $[\bar v]$, the poles $r_y=-1$ and
$r_y=1$, whose physical representatives are $\Omega$ and
$\bar\Omega$.

The sum constraint and nonnegativity alone describe the larger triangle
$\{u_a\geq0,\ \sum_a u_a=3/2\}$.  Adding $u_a\leq1$ cuts off
its corners and leaves a hexagon.  The disk inside this hexagon is the
additional compatibility restriction imposed by realizing all three
probabilities at the same point of the Bloch sphere of $\mathcal M$; see
\cref{fig:probability-disk}.  For example, $z=f_0$, whose physical
state is $e_0$, gives the boundary point $(0,3/4,3/4)$.
The triple $(1,0,1/2)$ lies in the hexagon but cannot occur, since its
squared distance from $c$ is $1/2>3/8$.

We normalize the \emph{Fubini--Study distance}
$d_{FS}:\CP^1\times\CP^1\to[0,\pi/2]$ by
$d_{FS}([z],[w])=\arccos|\langle z,w\rangle|$ for unit vectors.

\subsection{Entropy certainty as geometric motivation}

This subsection explains the distinguished role of the two target
directions.  Its entropy inequality is not needed for the frame estimate,
which uses the individual frequency constraints and triangular containment
below.

The second map of this construction assigns the \emph{average pair entropy} to each attainable
probability triple:
\begin{equation}
 \mathscr H:D_{\mathrm{prob}}\longrightarrow\mathbb R,\qquad
 \mathscr H(u)=\frac13\sum_{a=0}^2 f(u_a)
             =\frac12\log3+\frac13\sum_{a=0}^2 b(u_a).
\label{eq:average-entropy}
\end{equation}
The last identity uses $\sum_a u_a=3/2$.  For the physical pure state
$\proj{Jz}$, the composition $\mathscr H(\mathcal F([z]))$ is the average
of the entropies of its $AB,AC,BC$ marginals.  Equivalently, it is the
mean entanglement entropy across the three two-versus-two cuts.
Each binary entropy in \cref{eq:average-entropy} is at most $\log2$,
so the unique maximizing triple is $c$, with value
$h=\log2+\tfrac12\log3$.  Its two lifts are precisely the maximizing
directions within the pure singlet family.  The following proposition
also quantifies the loss away from them.

\begin{proposition}[Entropy certainty and its equality set]
\label[proposition]{prop:entropy}
For every unit $z\in\mathcal M$,
\begin{equation}
 \frac13\sum_{a=0}^2u_a(z)=\frac12,\qquad
 \frac13\sum_{a=0}^2f(u_a(z))\leq h.
\label{eq:entropy-certainty}
\end{equation}
Equality in the entropy inequality holds exactly at $[v]$ and
$[\bar v]$.  More precisely, if $z=\alpha v+\beta\bar v$ with
$\alpha,\beta\in\mathbb C$ and
$|\alpha|^2+|\beta|^2=1$, then
\begin{equation}
 h-\frac13\sum_{a=0}^2f(u_a(z))\geq|\alpha\beta|^2.
\label{eq:entropy-stability}
\end{equation}
\end{proposition}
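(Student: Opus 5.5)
The plan is to reduce everything to the Bloch coordinates in \cref{eq:bloch-probabilities,eq:probability-radius}, together with a quadratic lower bound on the binary-entropy deficit. The first identity, $\frac13\sum_a u_a(z)=\frac12$, is just the sum relation $\sum_a u_a=3/2$ in \cref{eq:probability-radius}. That relation follows from \cref{eq:bloch-probabilities}, because the $r_z$ and $r_x$ terms cancel in the sum.

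For the entropy inequality I will use the rewriting in \cref{eq:average-entropy}: since $f(u)=b(u)+u\log3$ and $\sum_a u_a=3/2$,
\[
 \frac13\sum_{a}f(u_a)=\tfrac12\log3+\tfrac13\sum_a b(u_a).
\]
Each $b(u_a)\leq\log2$, so the average is at most $h$. To get equality and stability together, I will prove the elementary bound
\[
 \log2-b(\tfrac12+t)\geq2t^2,\qquad |t|\leq\tfrac12 .
\]
It comes from the power series
\[
 \log2-b(\tfrac12+t)=\sum_{k\geq1}\frac{(2t)^{2k}}{2k(2k-1)},
\]
whose terms are all nonnegative and whose first term is $2t^2$. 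The endpoints $|t|=1/2$ follow by continuity. Applying this with $t=u_a-\tfrac12$ and using \cref{eq:probability-radius} gives
\[
 h-\frac13\sum_a f(u_a)\geq\frac23\sum_a\bigl(u_a-\tfrac12\bigr)^2
 =\frac23\cdot\frac38\,(r_x^2+r_z^2)=\frac14\bigl(1-r_y^2\bigr).
\]

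It remains to express $r_y$ in terms of $\alpha,\beta$. For a unit vector $z$, one has $r_y=z^*\sigma_yz=2\operatorname{Im}(\bar z_0z_1)$. A direct check gives $r_y=-1$ at $v$ and $r_y=+1$ at $\bar v$, so $v$ and $\bar v$ are the eigenvectors of $\sigma_y$ with eigenvalues $-1$ and $+1$. Hence for $z=\alpha v+\beta\bar v$ we get $r_y=|\beta|^2-|\alpha|^2$. With $|\alpha|^2+|\beta|^2=1$, this yields
\[
 1-r_y^2=(|\alpha|^2+|\beta|^2)^2-(|\beta|^2-|\alpha|^2)^2=4|\alpha\beta|^2,
\]
which is \cref{eq:entropy-stability}.

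For the equality set, note that the right side of \cref{eq:entropy-stability} vanishes only when $\alpha\beta=0$, that is, at $[v]$ or $[\bar v]$. Conversely, at these points all $u_a=\tfrac12$, so equality holds. No step is a genuine obstacle. The only points needing care are the sign convention identifying $v$ with the pole $r_y=-1$, and the nonnegativity of the series coefficients in the quadratic entropy bound.
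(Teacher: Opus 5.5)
Your proposal is correct and follows essentially the same route as the paper: you rewrite the average via $\sum_a u_a=3/2$, bound each deficit $\log2-b(u_a)\geq2(u_a-\tfrac12)^2$, and convert $\tfrac14(r_x^2+r_z^2)=\tfrac14(1-r_y^2)$ into $|\alpha\beta|^2$ using $r_y=|\beta|^2-|\alpha|^2$. The differences are cosmetic. You get the quadratic bound from the nonnegative power series rather than from $b''\leq-4$, and you read off the equality set from the stability inequality rather than from the condition $\mathcal F([z])=c$.
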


\begin{proof}
The probability identity is \cref{eq:probability-radius}, and the entropy
bound follows from \cref{eq:average-entropy}.  For the quantitative
version, $b'(1/2)=0$ and $b''(u)=-1/[u(1-u)]\leq-4$ give
\[
 b(1/2+d)\leq\log2-2d^2,\qquad -1/2\leq d\leq1/2,
\]
including the endpoints by continuity.  Averaging and using
\cref{eq:probability-radius} yields the geometric form of the bound:
\begin{equation}
 h-\mathscr H(\mathcal F([z]))\geq\frac23\norm{\mathcal F([z])-c}^2
     =\frac14(r_x^2+r_z^2)=|\alpha\beta|^2.
\label{eq:entropy-radial-bound}
\end{equation}
For the last equality, $r_y=|\beta|^2-|\alpha|^2$ in the orthonormal
basis $v,\bar v$, so $r_x^2+r_z^2=1-r_y^2=4|\alpha\beta|^2$.
Equality in the entropy inequality forces $\mathcal F([z])=c$; its two
preimages both attain equality.
\end{proof}

In particular, if $\vartheta=d_{FS}([z],\{[v],[\bar v]\})$, then
$0\leq\vartheta\leq\pi/4$ and
$|\alpha\beta|^2=\tfrac14\sin^2(2\vartheta)\geq4\vartheta^2/\pi^2$.
This supplies global quadratic stability.  Near the center of the
probability disk, the more precise expansion is
\begin{equation}
 h-\mathscr H(c+d)=\frac23\norm d^2+O(\norm d^4),
 \qquad \sum_{a=0}^2d_a=0.
\label{eq:entropy-contours}
\end{equation}
The linear terms from $f$ cancel, and the binary entropy is even about
$1/2$, so there is no cubic term.  Thus nearby entropy contours are
approximately circles.  The triangular geometry needed below instead
comes from imposing three separate frequency inequalities.

\subsection{Exact coordinates and triangular containment}

For $0<t\leq10^{-3}$, put
\begin{equation}
 K_t=\{[z]\in\CP^1:\min_a u_a(z)\leq1/2-t\},\qquad
 \cD_t=\CP^1\setminus K_t.
\label{eq:Kt}
\end{equation}
Here $t$ is a \emph{frequency gap}, as is $\eta$ in \cref{eq:frequency-caps}; the
corresponding entropy gap is $\Delta(t)$.
In the probability plane, these three strict lower bounds define the
open equilateral \emph{frequency triangle}
\begin{equation}
 \mathscr T_t=\left\{c+d:\sum_{a=0}^2d_a=0,\ d_a>-t
                                      \text{ for }a=0,1,2\right\},
 \qquad \cD_t=\mathcal F^{-1}(\mathscr T_t).
\label{eq:probability-triangle}
\end{equation}
Its closure has vertices $c+(2t,-t,-t)$ and their cyclic permutations,
and circumradius $\sqrt6\,t$.  Since $D_{\mathrm{prob}}$ has radius
$\sqrt{3/8}=\sqrt6/4$, the closed triangle lies strictly inside the
disk whenever $0<t<1/4$, in particular throughout our range of $t$.
The map $\mathcal F$ restricts to a diffeomorphism from each open hemisphere
$r_y>0$ or $r_y<0$ onto the interior of the disk.  Hence
$\cD_t$ has exactly two components in the Bloch sphere of $\mathcal M$:
the two \emph{holes}, one about each of $[v]$ and $[\bar v]$.

\begin{figure}[htbp]
\centering
\begin{tikzpicture}[x=1.65cm,y=1.65cm,font=\small]
 \filldraw[fill=qtgray!4,draw=qtgray,thick]
   (0,2)--({-sqrt(3)},-1)--({sqrt(3)},-1)--cycle;
 \filldraw[fill=qtgray!10,draw=qtgray,dashed,thick]
   ({1/sqrt(3)},1)--({-1/sqrt(3)},1)--({-2/sqrt(3)},0)
   --({-1/sqrt(3)},-1)--({1/sqrt(3)},-1)--({2/sqrt(3)},0)--cycle;
 \filldraw[fill=qtblue!12,draw=qtblue,thick] (0,0) circle (1);
 \filldraw[fill=qtorange!25,draw=qtorange!90!black,thick]
   (0,0.56)--({-0.28*sqrt(3)},-0.28)--({0.28*sqrt(3)},-0.28)--cycle;
 \fill (0,0) circle (1.2pt);
 \node[below=2pt] at (0,0) {$c$};
 \node[above=3pt] at (0,2) {$(\tfrac32,0,0)$};
 \node[below=3pt] at ({-sqrt(3)},-1) {$(0,\tfrac32,0)$};
 \node[below=3pt] at ({sqrt(3)},-1) {$(0,0,\tfrac32)$};
 \fill[qtblue] (0,-1) circle (1.5pt);
 \node[below=3pt,align=center] at (0,-1)
   {$(0,\tfrac34,\tfrac34)$\\attainable};
 \draw[qtred,very thick]
   ({1/sqrt(3)-0.035},0.965)--({1/sqrt(3)+0.035},1.035)
   ({1/sqrt(3)-0.035},1.035)--({1/sqrt(3)+0.035},0.965);
 \draw[qtred,thin] ({1/sqrt(3)+0.05},1.04)--(1.05,1.38);
 \node[anchor=west,qtred,align=left] at (1.04,1.4)
   {$(1,0,\tfrac12)$\\unattainable};
 \begin{scope}[xshift=4.6cm]
  \draw[qtgray,thick] (0,0.85)--(0.3,0.85);
  \node[anchor=west,align=left] at (0.4,0.85)
    {Nonnegative triangle\\$u_a\geq0,\ \sum_a u_a=3/2$};
  \draw[qtgray,dashed,thick] (0,0.15)--(0.3,0.15);
  \node[anchor=west,align=left] at (0.4,0.15)
    {Hexagon\\also $u_a\leq1$};
  \draw[qtblue,thick] (0,-0.55)--(0.3,-0.55);
  \node[anchor=west,align=left] at (0.4,-0.55)
    {Probability disk $D_{\mathrm{prob}}$\\compatible triples};
  \draw[qtorange!90!black,thick] (0,-1.25)--(0.3,-1.25);
  \node[anchor=west,align=left] at (0.4,-1.25)
    {Frequency triangle $\mathscr T_t$\\$u_a>1/2-t$ for every $a$};
 \end{scope}
\end{tikzpicture}
\caption{The probability plane, up to a uniform change of scale.
The disk is the exact image of the Bloch sphere of $\mathcal M$.  Its center has the two
preimages $[v]$ and $[\bar v]$, and the open orange triangle lifts
to the two holes on that sphere.  The triangle is drawn with $t=0.14$ for visibility;
the estimates use $t\leq10^{-3}$.}
\label{fig:probability-disk}
\end{figure}

The probability triangle $\mathscr T_t$ is a small equilateral triangle,
but the two components of its preimage $\cD_t$ have curved sides
in affine coordinates on $\CP^1$.  We now enclose these components in slightly
larger affine triangles, whose straight sides will be used in the
polynomial restriction argument.
Use the \emph{affine parametrizations} $\chi_\pm:\mathbb C\to\CP^1$ given by
$\chi_\pm(x)=[z_\pm(x)]$, with unit-vector representatives
$z_\pm(x)\in\mathcal M$ defined for $x\in\mathbb C$ by
\begin{equation}
 z_+(x)=\frac{v-x\bar v}{\sqrt{1+|x|^2}},\qquad
 z_-(x)=\frac{\bar v-xv}{\sqrt{1+|x|^2}}.
\label{eq:charts}
\end{equation}
The maps $\chi_+$ and $\chi_-$ parametrize the complements of
$[\bar v]$ and $[v]$, respectively.  The images of the two closed
coordinate disks $|x|\leq1$ cover $\CP^1$.
Using \cref{eq:cyclic-eigenvalues}, one obtains the exact identities
\begin{equation}
 u_a(z_+(x))=\frac12+
       \frac{\operatorname{Re}(\omega^a x)}{1+|x|^2},\qquad
 u_a(z_-(x))=\frac12+
       \frac{\operatorname{Re}(\omega^{-a}x)}{1+|x|^2}.
\label{eq:trine}
\end{equation}
Thus the two local pictures differ only by cyclic orientation.
For $\tau>0$, define the real-affine functions
$\ell_a:\mathbb C\to\mathbb R$ and the open equilateral triangle
$\mathcal T_\tau\subset\mathbb C$ of inradius $\tau$ by
\begin{equation}
 \ell_a(x)=\tau+\operatorname{Re}(\omega^a x),\qquad
 \mathcal T_\tau=\{x\in\mathbb C:\ell_a(x)>0\text{ for }a=0,1,2\}.
\label{eq:triangle}
\end{equation}

\begin{lemma}[Two small triangular holes]
\label[lemma]{lem:shape}
Every point of $\cD_t$, in one of the charts $\chi_\pm$ with
$|x|\leq1$, satisfies
\begin{equation}
 |x|<4t,\qquad
 \operatorname{Re}(\omega^a x)>-t(1+16t^2),\quad a=0,1,2.
\label{eq:containment}
\end{equation}
In particular, the holes lie in the two disjoint chart images
$\chi_+(\mathcal T_{1.001t})$ and $\chi_-(\mathcal T_{1.001t})$
about $[v]$ and $[\bar v]$.
\end{lemma}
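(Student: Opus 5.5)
The plan is to work entirely in the chart coordinates, using the exact identities \cref{eq:trine}. By the symmetry between $\chi_+$ and $\chi_-$, which differ only by $\omega^a\leftrightarrow\omega^{-a}$ and hence only by a relabelling of $a$, it suffices to treat $\chi_+$. Fix a point of $\cD_t$ written as $\chi_+(x)$ with $|x|\leq1$, and put $\rho=|x|$ and $s=1+\rho^2\in[1,2]$. Membership in $\cD_t$ means $u_a>1/2-t$ for every $a$. By \cref{eq:trine} this is equivalent to
\[
 \operatorname{Re}(\omega^a x)>-t\,s,\qquad a=0,1,2.
\]

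\emph{Step 1: the radius bound.} The three numbers $\omega^a x$ are the vertices of an equilateral triangle centred at $0$. Their real parts sum to zero, and the smallest of them is at most $-\rho/2$. Indeed, some $\omega^a x$ has argument in $[2\pi/3,4\pi/3]$, so its real part is at most $\rho\cos(2\pi/3)=-\rho/2$. Hence $-\rho/2>-t(1+\rho^2)$, that is,
\[
 \rho<2t(1+\rho^2)\leq4t.
\]
Using $\rho\leq1$ gives $\rho<4t$ directly. This already shows that points of $\cD_t$ in the chart disk $|x|\leq1$ lie near $x=0$.

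\emph{Step 2: the refined half-plane bounds.} Feed $\rho<4t$ back into the constraint to get
\[
 \operatorname{Re}(\omega^a x)>-t(1+\rho^2)>-t(1+16t^2).
\]
This is \cref{eq:containment}. For $t\leq10^{-3}$ we have $1+16t^2\leq1.000016<1.001$, so each $\ell_a$ with $\tau=1.001t$ is positive. Thus $x\in\mathcal T_{1.001t}$.

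\emph{Step 3: the two holes and disjointness.} The images of the closed disks $|x|\leq1$ under $\chi_\pm$ cover $\CP^1$, so every point of $\cD_t$ is captured by one of the charts. It therefore lies in $\chi_+(\mathcal T_{1.001t})$ or $\chi_-(\mathcal T_{1.001t})$.

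For disjointness, note that $\chi_+(x)$ has Fubini--Study distance $\arctan|x|$ from $[v]$, and likewise $\chi_-$ measures distance from $[\bar v]$. The triangles lie in $|x|<2\cdot1.001t$, since the circumradius is twice the inradius. So the two images lie within distance about $2t$ of the antipodal points $[v]$ and $[\bar v]$, which are at distance $\pi/2$ apart. They are therefore disjoint. This also identifies the hole about $[v]$ with the $\chi_+$ image and the hole about $[\bar v]$ with the $\chi_-$ image.

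The only point that needs care is Step 1, namely the elementary fact that $\min_a\operatorname{Re}(\omega^a x)\leq-|x|/2$. I would justify it by the argument-sector observation above. Everything else is direct substitution.
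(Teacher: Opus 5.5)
Your proposal is correct and follows the paper's proof. You use the same reduction of $\chi_-$ to $\chi_+$ via $a\mapsto-a$, the same bound $\min_a\operatorname{Re}(\omega^a x)\leq-|x|/2$ giving $|x|<4t$, and the same substitution back into the defining inequalities. The only difference is the disjointness step: you use $d_{FS}(\chi_+(x),[v])=\arctan|x|$ together with $d_{FS}([v],[\bar v])=\pi/2$, whereas the paper notes that a common point would have chart coordinates of reciprocal moduli, both below $2.002t<1$; either argument works.
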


\begin{proof}
In the $\chi_-$ chart, replacing $a$ by $-a$ permutes
$\{0,1,2\}$, so the three inequalities in \cref{eq:trine} define
the same set as those in \cref{eq:containment}.
For any complex $x$, one of the three numbers
$\operatorname{Re}(\omega^a x)$ is at most $-|x|/2$.
Membership in $\cD_t$ and \cref{eq:trine} imply
\[
 \frac{|x|}{2(1+|x|^2)}<t.
\]
Since $|x|\leq1$, this gives $|x|<4t$.  Substitute back into the
three defining inequalities to obtain \cref{eq:containment}.
Finally $1+16t^2\leq1.000016<1.001$.  Each containing triangle has
circumradius $2.002t<1$; their chart images are disjoint, since coordinates
of a point in both charts have reciprocal absolute values.
\end{proof}

The containment and the enlarged triangles used below are
illustrated together in \cref{fig:triangle}.  The triangles
$\mathcal T_\tau\subset\mathbb C$ are containing domains in an affine
chart, whereas $\mathscr T_t$ in \cref{eq:probability-triangle} is the
exact frequency region in the probability plane.

The containment also separates the roles of entropy and frequency in the
proof.
The holes are defined by $\min_a u_a>1/2-t$, and their boundaries are
not level sets of the average entropy: by \cref{eq:entropy-contours},
the average deficit is $t^2+O(t^4)$ at a side midpoint of
$\mathscr T_t$, but $4t^2+O(t^4)$ at a vertex.
On $K_t$, the complement of the two holes, some $u_a\leq1/2-t$, and
monotonicity of $f$ on $[0,1/2]$ gives the individual pair bound
\[
 f(u_a)\leq h-\Delta(t),\qquad
 \Delta(t)=t\log3+2t^2+O(t^4).
\]
These one-sided frequency conditions match the caps in \cref{eq:frequency-caps}.
The next section shows that coherent-state mass cannot concentrate
entirely in the two holes, at a frame cost quadratic in the frequency
tolerance.  It is the linear term of the individual entropy gap that
beats this quadratic cost in \cref{sec:contradiction}.

\section{The coherent-state frame estimate}
\label{sec:toeplitz}

\subsection{Coherent states and polynomial restriction}

Recall $\mathcal H_n=\Sym^n(\mathcal M)$ and
$N_n=\dim\mathcal H_n=n+1$ from \cref{sec:reduction}.
For a unit vector $z\in\mathcal M$, define its \emph{coherent state}
$e_z^{(n)}$ and, for $\psi\in\mathcal H_n$, the \emph{squared coherent overlap}
$W_{\psi,n}$ by
\begin{equation}
 e_z^{(n)}=z^{\otimes n}\in\mathcal H_n,\qquad
 W_{\psi,n}([z])=|\langle e_z^{(n)},\psi\rangle|^2.
\label{eq:coherent-overlaps}
\end{equation}
Although $e_z^{(n)}$ changes by a phase when the representative changes,
its rank-one projector is
$\proj{e_z^{(n)}}=\pi_z^{\otimes n}|_{\mathcal H_n}$ and depends
only on $[z]$.  Likewise,
$W_{\psi,n}:\CP^1\to[0,\infty)$ is well defined.
In either affine chart $\chi_\pm$ from \cref{eq:charts}, write
$W_{\psi,n}(x)=W_{\psi,n}(\chi_\pm(x))$.  Then
\begin{equation}
 W_{\psi,n}(x)=\frac{|P_\psi(x)|^2}{(1+|x|^2)^n},
\label{eq:weighted-poly}
\end{equation}
where $P_\psi\in\mathbb C[x]$ is a polynomial in one complex variable,
of degree at most $n$, whose coefficients depend on the chart.
More explicitly, the two choices are
\[
 P_{\psi,+}(x)=\langle\psi,(v-x\bar v)^{\otimes n}\rangle,\qquad
 P_{\psi,-}(x)=\langle\psi,(\bar v-xv)^{\otimes n}\rangle.
\]
These are holomorphic in $x$, since our inner product is linear in its
second argument, and their squared absolute values give
\cref{eq:weighted-poly}.  Before choosing an affine chart, the overlaps
are represented by \emph{homogeneous binary polynomials} of degree $n$,
meaning homogeneous polynomials in two complex coordinates.

Let $\mu_{FS}$ be the \emph{Fubini--Study probability measure} on $\CP^1$.
Its chart density is
$\pi^{-1}(1+|x|^2)^{-2}$ with respect to planar area.  The auxiliary
$SU(2)$ action on the multiplicity space $\mathcal M$ induces the
unitary representation $g\mapsto g^{\otimes n}|_{\mathcal H_n}$ on
$\mathcal H_n$, which is irreducible of spin $n/2$.  This auxiliary action
is distinct from the physical diagonal spin action of
\cref{eq:physical-spin-action}, which is
trivial on the singlets.  Invariance of $\mu_{FS}$, Schur's lemma and
the trace give the \emph{coherent-state resolution of the identity}:
\begin{equation}
 N_n\int_{\CP^1}\proj{e_z^{(n)}}\,\dd\mu_{FS}(z)=I_{\mathcal H_n}.
\label{eq:coherent-resolution}
\end{equation}
Equivalently, this identity follows by integrating the Dicke monomials
using the binomial theorem.

\Needspace{7\baselineskip}
For a measurable set $K\subset\CP^1$,
write $\1_K:\CP^1\to\{0,1\}$ for its \emph{indicator} and define the positive
\emph{Toeplitz operator}
\begin{equation}
 T_n(\1_K)=N_n\int_K\proj{e_z^{(n)}}\,\dd\mu_{FS}(z)
 \in\operatorname{End}(\mathcal H_n).
\label{eq:toeplitz-definition}
\end{equation}
We will first bound this positive operator from below, then compare it to
$L_{\eta,n}$.

\Needspace{12\baselineskip}

\begin{theorem}[Bernstein--Markov restriction on the projective line]
\label[theorem]{thm:BM}
Let $K\subset\CP^1$ be the closure of a nonempty open set with
Lipschitz boundary.  For each $\zeta>0$, there is $C_{K,\zeta}<\infty$ such that
\begin{equation}
 \sup_KW_{\psi,n}\leq C_{K,\zeta}\e^{n\zeta}
                    \int_KW_{\psi,n}\,\dd\mu_{FS}
\label{eq:bernstein-markov}
\end{equation}
for every $n\geq1$ and $\psi\in\mathcal H_n$.
\end{theorem}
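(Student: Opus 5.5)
The plan is a local-to-global argument: a submean-value (plurisubharmonic) estimate near each point, combined with a covering of $K$ by small charts in which the weight $(1+|x|^2)^{-n}$ is nearly constant. First we reduce to affine charts. Cover $K$ by finitely many chart images $\chi_\pm(\{|x|\le 1\})$, and note that the statement is invariant under the auxiliary $SU(2)$ action on $\mathcal M$. This action acts transitively on $\CP^1$, preserves $\mu_{FS}$, and maps $W_{\psi,n}$ to $W_{g^*\psi,n}\circ g^{-1}$. So it suffices to bound $W_{\psi,n}$ at a point $p\in K$ that we may place at the chart origin $x=0$. Since $\sup_K W_{\psi,n}$ is attained by continuity, fix such a maximizer $p$ and write $W=|P|^2(1+|x|^2)^{-n}$ in a chart centred at $p$.

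\emph{Step 1 (local sub-mean-value).} $\log|P|^2$ is subharmonic, so $|P(0)|^2\le$ the planar average of $|P|^2$ over any set that is star-like in a suitable sense. For a full disk $B_\rho$ this is the mean value inequality. At boundary points of $K$ we instead need a set contained in $K$. The Lipschitz boundary condition gives a uniform interior cone: there is a sector $S_{\rho,\theta}=\{x:|x|<\rho,\ |\arg x-\alpha|<\theta\}$ with vertex $p$ lying in $K$, with $\rho,\theta$ uniform over $p\in K$.

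\emph{Step 2 (subharmonic averaging over a sector).} This is the main obstacle. Subharmonicity controls circle averages, not sector averages, and a polynomial of degree $n$ can be small on a sector while large at its vertex. Only exponentially small loss $\e^{n\zeta}$ is affordable, which is exactly the Bernstein--Markov regime. The route I would try first is polynomial inequalities on the segment. Restrict $P$ to each ray $x=s\e^{i\beta}$, $0\le s\le\rho$, within the sector, and apply the classical Remez/Markov-type estimate for polynomials of degree $n$ on $[0,\rho]$: $|P(0)|\le \sup_{[0,\rho]}|P|$ trivially, and $\sup_{[0,\rho]}|q|^2\le C n^2 \rho^{-1}\int_0^\rho |q|^2\,ds$ (Nikolskii inequality, a polynomial-growth factor $\lesssim n^2$). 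Integrating in the angle $\beta$ over the sector then bounds $|P(0)|^2$ by $O(n^2/(\rho^2\theta))$ times $\int_{S}|P|^2\,dA$.

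This loses only a polynomial factor, which is better than needed. The weight must still be handled. On $S_{\rho,\theta}$ we have $(1+|x|^2)^{-n}\ge \e^{-n\rho^2}$, while $W(0)=|P(0)|^2$. Choose $\rho=\rho(\zeta)\le\sqrt{\zeta/2}$, also below the Lipschitz scale, so that the weight costs at most $\e^{n\zeta/2}$. The chart density of $\mu_{FS}$ is bounded below by $\pi^{-1}2^{-2}$ on $|x|\le1$.

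\emph{Step 3 (assembly).} Combining the two steps gives
\[
 \sup_K W_{\psi,n}=W_{\psi,n}(p)\le C\,\frac{n^2}{\rho^2\theta}\,\e^{n\rho^2}\int_{K}W_{\psi,n}\,\dd\mu_{FS}.
\]
Absorb $n^2\e^{n\rho^2}\le C'_\zeta \e^{n\zeta}$ to obtain \cref{eq:bernstein-markov}, with $C_{K,\zeta}$ depending on the Lipschitz constants of $\partial K$ and on $\zeta$. Uniformity in $p$ comes from compactness of $K$ together with the uniform cone condition, and it is preserved because the rotations to the chart origin are isometries for $d_{FS}$.

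The step needing care is the one-variable Nikolskii inequality on a segment with the vertex at an endpoint, where the constant $\sim n^2$ is sharp. It is uniform in the angle $\beta$ because each restriction is a polynomial of degree at most $n$ in the real variable $s$, applied separately to its real and imaginary parts.
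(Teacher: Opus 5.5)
Your overall architecture is sound: a uniform interior cone from the Lipschitz boundary, a chart centred at the maximiser so that the weight costs at most $\e^{n\rho^2}$ on a cone of radius $\rho\lesssim\sqrt\zeta$, and a one-variable polynomial inequality with only polynomial loss. The averaging in Step~2, however, is wrong. Integrating the ray bound $|P(0)|^2\le Cn^2\rho^{-1}\int_0^\rho|P(s\e^{i\beta})|^2\,\mathrm ds$ over $|\beta-\alpha|<\theta$ controls
\[
 \int_{|\beta-\alpha|<\theta}\int_0^\rho|P(s\e^{i\beta})|^2\,\mathrm ds\,\mathrm d\beta
 =\int_{S}\frac{|P(x)|^2}{|x|}\,\mathrm dA(x),
\]
because the polar area element is $s\,\mathrm ds\,\mathrm d\beta$, not $\mathrm ds\,\mathrm d\beta$. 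Since $|x|^{-1}\ge\rho^{-1}$ on $S$, this quantity is \emph{larger} than $\rho^{-1}\int_S|P|^2\,\mathrm dA$. So your claimed bound $O(n^2/(\rho^2\theta))\int_S|P|^2\,\mathrm dA$ does not follow. The defect sits exactly at the vertex, which is where concentration has to be excluded.

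What the rays actually need is the endpoint inequality for the weight $s\,\mathrm ds$. Its sharp constant, computed from the orthonormal Jacobi$(0,1)$ polynomials, is
\[
 \sup_{\deg q\le n}\frac{|q(0)|^2}{\int_0^\rho|q(s)|^2\,s\,\mathrm ds}
 =\frac{(n+1)^2(n+2)^2}{2\rho^2},
\]
so the ray argument can only deliver order $n^4$, not $n^2$. There is also an elementary route. Markov's inequality gives $\sup_{[0,\rho]}|q|\le2\sup_{[\rho',\rho]}|q|$ with $\rho'=\rho/(4n^2)$. The unweighted Nikolskii bound on $[\rho',\rho]$, where $s\ge\rho'$, then gives $|q(0)|^2\le Cn^4\rho^{-2}\int_0^\rho|q|^2s\,\mathrm ds$.

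With this replacement, averaging over $\beta$ yields $|P(0)|^2\le\frac{(n+1)^2(n+2)^2}{4\theta\rho^2}\int_S|P|^2\,\mathrm dA$. This is still polynomial in $n$, so it is absorbed into $\e^{n\zeta/2}$, and the rest of your argument goes through: the cone lies in $K$, the weight is at least $\e^{-n\rho^2}$ on it, and the Fubini--Study density is at least $1/(4\pi)$ on $|x|\le1$.

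The paper follows the same local strategy but avoids the degenerate Jacobian. It places an inward parallelogram $Q_x=x+T_x[0,1]^2\subset K$ with a vertex at $x$. It then uses the tensor-product Legendre bound $|q(0,0)|^2\le(n+1)^4\int_{[0,1]^2}|q|^2$. Because that substitution is affine, the Jacobian is the constant $|\det T_x|$, and the same $n^4$ order appears with no weight vanishing at the vertex. Your centring at the maximiser plays the role of the paper's uniform-continuity control of the potential $\varphi$ over a fixed finite atlas.
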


This is an elementary projective-line instance of weighted
Bernstein--Markov theory.  For the general criterion, see
\cite[Sec.~1.2, Theorem~1.14]{BermanBoucksomWittNystrom};
mass-density conditions are discussed in
\cite[Sec.~3]{BloomLevenbergPiazzonWielonsky}.
We include a proof that keeps the polynomial factor explicit before it is
absorbed into the arbitrarily small exponential loss.

\begin{proof}
\emph{Evaluation on a real square.}
For integers $k\geq0$, the \emph{normalized shifted Legendre polynomials}
$L_k\in\mathbb R[u]$, defined by
\[
 L_k(u)=\frac{\sqrt{2k+1}}{k!}
       \frac{\mathrm d^k}{\mathrm du^k}[u^k(u-1)^k],
\]
are orthonormal in $L^2([0,1])$, as integration by parts verifies.
Their defining formula gives $L_k(0)=(-1)^k\sqrt{2k+1}$.
Tensor expansion and Cauchy--Schwarz give, for every complex
polynomial $q\in\mathbb C[u_1,u_2]$ of degree at most $n$ in each variable,
\begin{equation}
 |q(0,0)|^2\leq
 \left(\sum_{k=0}^n(2k+1)\right)^2
       \int_{[0,1]^2}|q(u)|^2\,\dd u
 =(n+1)^4\int_{[0,1]^2}|q(u)|^2\,\dd u.
\label{eq:square-evaluation}
\end{equation}

\emph{Small inward parallelograms.}
Fix $a>0$.  Choose finitely many projective affine charts obtained
from orthonormal bases of $\mathcal M$, and smaller relatively compact
neighborhoods in them covering $K$.  In each chart, suppressing its
index, write
\[
 W(x)=|P(x)|^2\e^{-n\varphi(x)},\qquad
 \varphi(x)=\log(1+|x|^2),\qquad
 \dd\mu_{FS}(x)=\rho(x)\,\dd A(x),
\]
where $\dd A$ is planar area.  On these finitely many neighborhoods
and slightly larger compact neighborhoods, the densities satisfy a
common bound $\rho\geq d_K>0$.

The Lipschitz boundary gives a finite collection of boundary
patches in these charts.  After a translation and an orthogonal change
of real coordinates $(s,r)$, each patch has the form
\[
 K=\{(s,r):r\geq g(s)\},\qquad |g(s)-g(s')|\leq L|s-s'|,
\]
locally, where one finite $L\geq0$ works for all the patches.
Choose smaller patches whose closures lie inside the original patches
and which still cover $\partial K$.  Put $m=L+1$.  For a point
$x=(s_0,r_0)\in K$ in a smaller boundary patch, define
\[
 Q_x=\left\{\bigl(s_0+\delta(u_1-u_2),\,
                   r_0+m\delta(u_1+u_2)\bigr):
                   0\leq u_1,u_2\leq1\right\}.
\]
A single sufficiently small $\delta>0$ keeps every such parallelogram
inside its original patch, by the positive margins of the finite cover.
Moreover, every point $(s,r)\in Q_x$ satisfies
\[
 g(s)\leq g(s_0)+L\delta|u_1-u_2|
       \leq r_0+L\delta(u_1+u_2)
       \leq r,
\]
so $Q_x\subset K$.  Its two edge vectors are independent, with
absolute determinant $2m\delta^2$.  The portion of $K$ outside
the smaller boundary patches is a compact subset of the interior.
There we use squares $x+\delta[0,1]^2$ in the chosen chart, reducing
$\delta$ uniformly if necessary to keep them in $K$ and in that
chart.

Returning to the original chart coordinates, both constructions have
the form $Q_x=x+T_x[0,1]^2$, where
$T_x\in\operatorname{GL}_2(\mathbb R)$.  Translations and orthogonal
changes of coordinates preserve areas and distances.  Uniform continuity
of the finitely many chart potentials allows one final reduction of
$\delta$, depending on $a$, such that
\begin{equation}
 Q_x\subset K,\qquad
 |\det_{\mathbb R}T_x|\geq\delta^2=:c_{K,a}>0,\qquad
 \sup_{y\in Q_x}|\varphi(y)-\varphi(x)|\leq a.
\label{eq:inward-parallelogram}
\end{equation}
All these choices depend only on $K$, the fixed charts, and $a$,
and are independent of $n$, $\psi$, and $x$.

\emph{Restoring the weight.}
Writing the one complex chart coordinate as two real coordinates, the
real-affine substitution defines a polynomial
$q_x(u)=P(x+T_xu)$ of total degree at most $n$, hence of degree
at most $n$ in each real variable.  Since $q_x(0)=P(x)$,
\cref{eq:square-evaluation} and the change of variables $y=x+T_xu$ give
\[
 |P(x)|^2\leq
 \frac{(n+1)^4}{|\det_{\mathbb R}T_x|}
 \int_{Q_x}|P(y)|^2\,\dd A(y).
\]
Multiplying by $\e^{-n\varphi(x)}$, and using
$\varphi(y)-\varphi(x)\leq a$ and $\rho(y)\geq d_K$, yields
\[
\begin{aligned}
 W(x)&\leq\frac{(n+1)^4}{c_{K,a}}
  \int_{Q_x}W(y)\e^{n(\varphi(y)-\varphi(x))}\,\dd A(y)\\
 &\leq\frac{(n+1)^4\e^{na}}{c_{K,a}d_K}
  \int_KW\,\dd\mu_{FS}.
\end{aligned}
\]
Set $a=\zeta/2$, take the supremum over $x$, and use
$\sup_{n\geq1}(n+1)^4\e^{-n\zeta/2}<\infty$ to obtain
\cref{eq:bernstein-markov}.
\end{proof}

The constant may depend on the domain and on $\zeta$.  We always fix
both before letting $n\to\infty$.

\subsection{Triangular holes and the coherent-state lower bound}

For fixed $0<t\leq10^{-3}$, set
\begin{equation}
 \tau=1.001t,\qquad
 \widetilde K_t=\CP^1\setminus
       \bigl(\chi_+(\mathcal T_\tau)\cup\chi_-(\mathcal T_\tau)\bigr).
\label{eq:enlarged-holes}
\end{equation}
The two chart triangles have disjoint closures, since their closures
lie in $|x|<1$.  By \cref{lem:shape}, they contain the actual holes,
so $\widetilde K_t\subset K_t$: every point of $\widetilde K_t$
has $u_a\leq1/2-t$ in at least one basis.  The image of each triangle
boundary belongs to $\widetilde K_t$.

This complement is the closure of a nonempty open set with Lipschitz
boundary.  At a triangle vertex, suitable orthogonal real coordinates
express the exterior as $r\geq-\sqrt3|s|$ locally; at an edge the
boundary is a straight line.  Thus \cref{thm:BM} applies directly,
without smoothing the triangle corners.  The set is fixed before
letting $n\to\infty$.

\begin{figure}[htbp]
\centering
\begin{tikzpicture}[font=\small]
 \begin{scope}
  \node[font=\small\bfseries] at (0.15,2.9)
    {(a) Enlarged hole near $[v]$};
  \fill[qtblue!7] (-3.0,-2.45) rectangle (3.3,2.45);
  \filldraw[fill=white,draw=qtblue,thick,dashed]
    (2.24,0)--(-1.12,1.9399)--(-1.12,-1.9399)--cycle;
  \filldraw[fill=qtred!12,draw=qtred,thick]
    (1.25,0) .. controls (0.575,0.2742) and (-0.05,0.6351) .. (-0.625,1.0825)
    .. controls (-0.525,0.3608) and (-0.525,-0.3608) .. (-0.625,-1.0825)
    .. controls (-0.05,-0.6351) and (0.575,-0.2742) .. (1.25,0);
  \fill (0,0) circle (1.2pt);
  \node at (-2.0,-1.4) {$0\mapsto[v]$};
  \draw[qtgray,-{Latex[length=1.5mm]}] (-1.5,-1.12)--(-0.08,-0.06);
  \node[qtblue!70!black] at (2.35,1.8) {$\widetilde K_t$};
  \draw[qtred,thick] (-2.4,-3.0)--(-1.9,-3.0);
  \node[anchor=west] at (-1.8,-3.0) {actual hole};
  \draw[qtblue,thick,dashed] (0.7,-3.0)--(1.2,-3.0);
  \node[anchor=west] at (1.3,-3.0) {$\mathcal T_\tau$};
 \end{scope}
 \begin{scope}[xshift=7.1cm]
  \node[font=\small\bfseries] at (0.15,2.9)
    {(b) Partition and leakage};
  \begin{scope}
   \clip (-3.0,-2.45) rectangle (3.3,2.45);
   \fill[qtblue!12] (0,0)--(-10,17.32)--(-10,-17.32)--cycle;
   \fill[qtorange!14] (0,0)--(20,0)--(-10,17.32)--cycle;
   \fill[qtgreen!12] (0,0)--(-10,-17.32)--(20,0)--cycle;
   \draw[qtgray,dashed] (0,0)--(4,0);
   \draw[qtgray,dashed] (0,0)--(-2,3.4641);
   \draw[qtgray,dashed] (0,0)--(-2,-3.4641);
  \end{scope}
  \filldraw[fill=white,draw=qtblue,thick,dashed]
    (2.24,0)--(-1.12,1.9399)--(-1.12,-1.9399)--cycle;
  \node[align=center] at (0,0) {removed\\$\mathcal T_\tau$};
  \node[align=center] at (-2.05,0) {$E_0$\\$u_0$ smallest};
  \node[align=center] at (1.6,1.55) {$E_1$\\$u_1$ smallest};
  \node[align=center] at (1.6,-1.55) {$E_2$\\$u_2$ smallest};
  \fill[qtred!18] (1.25,-3.4) rectangle (2.85,-3.15);
  \draw[qtarrow] (-2.4,-3.4)--(3.15,-3.4) node[right] {$k/n$};
  \foreach \x in {-1.9,-0.4,1.25}
    \draw (\x,-3.47)--(\x,-3.33);
  \node[below] at (-1.9,-3.47) {$q=u_a(z)$};
  \node[below] at (-0.4,-3.47) {$\frac12-t$};
  \node[below] at (1.25,-3.47) {$\frac12-\eta$};
  \draw[decorate,decoration={brace,amplitude=3pt}]
    (-0.4,-3.08)--(1.25,-3.08) node[midway,above=4pt] {$t-\eta$};
  \node[qtred] at (2.25,-2.95) {leakage};
 \end{scope}
\end{tikzpicture}
\caption{One local chart, with curvature and gaps exaggerated.
(a) The actual hole has curved sides and is contained in
the open triangle $\mathcal T_\tau$, where $\tau=1.001t$; the shaded
exterior, including the triangle boundary, is $\widetilde K_t$ locally.
(b) The regions $E_a$ used in \cref{subsec:leakage} select the smallest
$u_a$.  The inset shows the leakage event $k/n>\frac12-\eta$,
separated from its mean $q=u_a(z)\leq\frac12-t$ by at least
$t-\eta$.  The $\chi_-$ chart near $[\bar v]$ interchanges labels
$1$ and $2$.}
\label{fig:triangle}
\end{figure}

\begin{lemma}[A triangle barrier]
\label[lemma]{lem:polynomial-growth}
For every $0<t\leq10^{-3}$, every $n\geq1$, and every
$\psi\in\mathcal H_n$,
\begin{equation}
 \sup_{\CP^1}W_{\psi,n}
 \leq\exp\!\left(\frac43n(1.001t)^2\right)
      \sup_{\widetilde K_t}W_{\psi,n}.
\label{eq:triangle-growth}
\end{equation}
\end{lemma}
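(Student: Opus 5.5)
The plan is to control $W_{\psi,n}$ inside each removed triangle by its values on the triangle boundary, using a subharmonic barrier in the affine chart. Off the two chart triangles there is nothing to prove, since those points already lie in $\widetilde K_t$. By \cref{lem:shape} and \cref{eq:enlarged-holes}, the two triangles $\chi_\pm(\mathcal T_\tau)$, with $\tau=1.001t$, have disjoint closures inside $|x|<1$, and their boundaries lie in $\widetilde K_t$. It therefore suffices to show, in each chart $\chi_\pm$, that every $x\in\mathcal T_\tau$ satisfies
\[
 W_{\psi,n}(x)\leq\e^{\frac43 n\tau^2}\sup_{\partial\mathcal T_\tau}W_{\psi,n}.
\]
The chart $\chi_-$ differs only by relabeling $a\mapsto-a$, which leaves $\mathcal T_\tau$ unchanged, so one computation covers both charts.

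In the chart, write $\log W_{\psi,n}=\log|P_\psi|^2-n\varphi$ with $\varphi(x)=\log(1+|x|^2)$, using \cref{eq:weighted-poly}. The function $\log|P_\psi|^2$ is subharmonic; it may take the value $-\infty$ at zeros, which is harmless, and if $P_\psi\equiv0$ the statement is trivial. A direct computation gives $\Delta\varphi=4/(1+|x|^2)^2\leq4$.

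The barrier is the cubic $\Lambda(x)=\ell_0(x)\ell_1(x)\ell_2(x)$ built from the affine functions in \cref{eq:triangle}. The gradients $\nabla\ell_a$ are unit vectors at mutual angles of $120^\circ$, so $\nabla\ell_a\cdot\nabla\ell_b=-\tfrac12$ for $a\ne b$. Since each $\ell_a$ is affine,
\[
 \Delta\Lambda=2\sum_{a<b}(\nabla\ell_a\cdot\nabla\ell_b)\,\ell_c=-\sum_c\ell_c=-3\tau,
\]
where the last step uses $\sum_a\ell_a=3\tau$, which follows from $\sum_a\omega^a=0$. Now set
\[
 u=\log|P_\psi|^2-n\varphi-\frac{4n}{3\tau}\Lambda.
\]
Its Laplacian is at least $-n\Delta\varphi+4n\geq0$ in the distributional sense, so $u$ is subharmonic on the bounded triangle $\mathcal T_\tau$ and continuous up to its closure as an upper semicontinuous function. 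By the maximum principle, $u(x)\leq\sup_{\partial\mathcal T_\tau}u$. On the boundary $\Lambda=0$, so $u=\log W_{\psi,n}$ there.

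Inside the triangle, each $\ell_a>0$ and $\sum_a\ell_a=3\tau$, so AM--GM gives $\Lambda\leq\tau^3$. Combining these,
\[
 \log W_{\psi,n}(x)=u(x)+\frac{4n}{3\tau}\Lambda(x)\leq\sup_{\partial\mathcal T_\tau}\log W_{\psi,n}+\frac43n\tau^2.
\]
Exponentiating, and taking the supremum over both charts and over the complement of the triangles, gives \cref{eq:triangle-growth}.

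The main obstacle is identifying the right barrier. The naive harmonic majorant of $|x|^2$ leaves an $O(|x|^4)$ error, which would spoil the clean constant $\frac43$. The point is to use only the one-sided bound $\Delta\varphi\leq4$ together with the exact Laplacian of the product of the three side functions. Minor care is also needed to justify the maximum principle for $-\infty$ values (via upper semicontinuity) and to confirm that the boundary of each chart triangle really lies in $\widetilde K_t$. The latter is immediate from the definition in \cref{eq:enlarged-holes}, since only the open triangles are removed.
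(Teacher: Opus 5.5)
Your proof is correct and follows essentially the paper's argument. It uses the same cubic barrier $\ell_0\ell_1\ell_2$ (the paper normalizes it as the torsion function $\mathcal U_\tau=\ell_0\ell_1\ell_2/(3\tau)$ with $-\Delta\mathcal U_\tau=1$), the bound $\Delta\log(1+|x|^2)\leq4$, the maximum principle for the subharmonic difference, and AM--GM giving $\ell_0\ell_1\ell_2\leq\tau^3$. The differences are only cosmetic: you compute $\Delta(\ell_0\ell_1\ell_2)=-3\tau$ from the $120^\circ$ gradient inner products rather than from the explicit expansion, and you work with $\log W_{\psi,n}$ directly rather than $\frac1n\log(W_{\psi,n}/M)$.
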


\begin{proof}
For nonzero $\psi$, let $M=\sup_{\widetilde K_t}W_{\psi,n}>0$.
In either affine chart, define
$u:\mathbb C\to\mathbb R\cup\{-\infty\}$ by
\begin{equation}
 u(x)=\frac1n\log\frac{W_{\psi,n}(x)}M
 =\frac1n\log|P_\psi(x)|^2-\log(1+|x|^2)-\frac1n\log M.
\label{eq:logarithmic-growth}
\end{equation}
Allow $u=-\infty$ at zeros of the polynomial.  Since
$\log|P_\psi|^2$ is subharmonic and
$\Delta_x\log(1+|x|^2)=4/(1+|x|^2)^2\leq4$,
$\Delta_x u\geq-4$ distributionally.

On the closed triangle from \cref{eq:triangle}, define the \emph{barrier}
$\mathcal U_\tau:\overline{\mathcal T_\tau}\to[0,\infty)$ by
\begin{equation}
 \mathcal U_\tau(x)=\frac{\ell_0(x)\ell_1(x)\ell_2(x)}{3\tau}.
\label{eq:triangle-torsion}
\end{equation}
Writing $x=X+iY$ gives
\[
 \ell_0\ell_1\ell_2
 =\tau^3-\frac{3\tau}{4}(X^2+Y^2)+\frac14\operatorname{Re}(x^3).
\]
Consequently $-\Delta_x\mathcal U_\tau=1$, with zero boundary values.
Thus $\mathcal U_\tau$ is the classical Dirichlet torsion function of
the equilateral triangle; see \cite{VanDenBergBucur} for the general
torsion-function setting.  Its explicit form makes the constant below
elementary.
Since $\ell_0+\ell_1+\ell_2=3\tau$, the arithmetic--geometric mean
inequality shows
\begin{equation}
 \max_{\mathcal T_\tau}\mathcal U_\tau=\tau^2/3.
\label{eq:torsion-maximum}
\end{equation}
In either chart, put $w=u-4\mathcal U_{\tau}$ on
$\mathcal T_{\tau}$.  The two Laplacian estimates give
\[
 \Delta_x w=\Delta_xu-4\Delta_x\mathcal U_{\tau}
           \geq-4+4=0
\]
in the sense of distributions.  Equivalently, $w$ is the sum of
the subharmonic function $n^{-1}\log|P_\psi|^2$ and a smooth function
with nonnegative Laplacian.  Thus $w$ is subharmonic and upper
semicontinuous, with value $-\infty$ at polynomial zeros.
For each boundary point
$\xi\in\partial\mathcal T_{\tau}$, the chart image $\chi_\pm(\xi)$
belongs to $\widetilde K_t$, so $W_{\psi,n}(\xi)\leq M$.
Continuity of $W_{\psi,n}$ and the zero boundary values of
$\mathcal U_{\tau}$ imply
\[
 \limsup_{\substack{x\to\xi\\x\in\mathcal T_{\tau}}}w(x)\leq0.
\]
This remains valid when $W_{\psi,n}(\xi)=0$, because then
$u(x)\to-\infty$.

The maximum principle for a subharmonic function on a bounded domain
states that these nonpositive boundary limits imply $w\leq0$
throughout the domain.  It applies also at the corners of the triangle;
no boundary differentiability is needed for this form of the principle.
Thus \cref{eq:torsion-maximum} gives
\[
 u(x)\leq4\mathcal U_{\tau}(x)\leq\frac43\tau^2,
 \qquad x\in\mathcal T_{\tau}.
\]
Exponentiating \cref{eq:logarithmic-growth} yields
$W_{\psi,n}(x)\leq M\exp(4n\tau^2/3)$ in both chart triangles.
Their images contain $\CP^1\setminus\widetilde K_t$, while
$W_{\psi,n}\leq M$ on $\widetilde K_t$ by definition.  This proves
\cref{eq:triangle-growth}.  The assertion for $\psi=0$ is immediate.
\end{proof}

\begin{proposition}[Coherent-state lower bound]
\label[proposition]{prop:toeplitz}
For every fixed $0<t\leq10^{-3}$, there is a constant
$0<C_t<\infty$ such that
\begin{equation}
 \lambda_{\min}(T_n(\1_{\widetilde K_t}))
 \geq C_t^{-1}\e^{-1.35nt^2}
\label{eq:toeplitz-lower-bound}
\end{equation}
for all $n\geq1$.
\end{proposition}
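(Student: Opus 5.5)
Fix a unit vector $\psi\in\mathcal H_n$ and rewrite the quadratic form of the Toeplitz operator as a weighted integral of the coherent overlap function. By \cref{eq:toeplitz-definition},
\[
 \langle\psi,T_n(\1_{\widetilde K_t})\psi\rangle
 =N_n\int_{\widetilde K_t}W_{\psi,n}\,\dd\mu_{FS}.
\]
The plan is to bound this from below in three moves. First, get a lower bound on the global supremum of $W_{\psi,n}$. Second, pass from the global supremum to the supremum on $\widetilde K_t$ with the triangle barrier. Third, pass from the supremum on $\widetilde K_t$ to the integral over $\widetilde K_t$ with the Bernstein--Markov restriction.

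For the first move, integrate the coherent-state resolution \cref{eq:coherent-resolution} against $\psi$:
\[
 1=\norm{\psi}^2=N_n\int_{\CP^1}W_{\psi,n}\,\dd\mu_{FS}\leq N_n\sup_{\CP^1}W_{\psi,n},
\]
so $\sup_{\CP^1}W_{\psi,n}\geq 1/(n+1)$.

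For the second move, \cref{lem:polynomial-growth} gives
\[
 \sup_{\widetilde K_t}W_{\psi,n}\geq \e^{-\frac43 n(1.001t)^2}\,\frac1{n+1}.
\]

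For the third move, \cref{thm:BM} applies to $K=\widetilde K_t$. The text after \cref{eq:enlarged-holes} already records that this set is the closure of a nonempty open set with Lipschitz boundary, the corners included. Choosing a parameter $\zeta>0$, it yields
\[
 \int_{\widetilde K_t}W_{\psi,n}\,\dd\mu_{FS}\geq C_{\widetilde K_t,\zeta}^{-1}\e^{-n\zeta}\sup_{\widetilde K_t}W_{\psi,n}.
\]

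Chaining the three bounds gives
\[
 \langle\psi,T_n(\1_{\widetilde K_t})\psi\rangle\geq C_{\widetilde K_t,\zeta}^{-1}\,\e^{-n(\frac43(1.001)^2t^2+\zeta)}.
\]
Note that the factor $N_n=n+1$ in the quadratic form cancels the $1/(n+1)$ from the first move. Since $\frac43(1.001)^2<1.3387<1.35$, choose $\zeta=(1.35-\frac43(1.001)^2)t^2>0$. This $\zeta$ depends only on $t$, so the constant $C_t:=C_{\widetilde K_t,\zeta}$ depends only on $t$. Because $\psi$ was an arbitrary unit vector, taking the infimum over $\psi$ gives the lower bound on $\lambda_{\min}$ for all $n\geq1$, which is the claim.

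The real analytic content was already done in \cref{lem:polynomial-growth} and \cref{thm:BM}, so the only delicate point here is the order of quantifiers. The set $\widetilde K_t$ and the loss $\zeta$ must both be fixed before $n$ varies, so that the Bernstein--Markov constant is uniform in $n$ and $\psi$. The second thing to check is the numerical margin: the barrier exponent $\frac43\tau^2$ with $\tau=1.001t$ must leave a strictly positive room below $1.35t^2$ for $\zeta$, and the computation above confirms that it does.
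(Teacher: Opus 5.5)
Your proposal is correct and follows the paper's proof step for step. You bound the global supremum below by $\norm{\psi}^2/N_n$ via the resolution of the identity, apply the triangle barrier of \cref{lem:polynomial-growth}, and then apply \cref{thm:BM} on $\widetilde K_t$ with the same choices $\zeta=(1.35-\tfrac43(1.001)^2)t^2\simeq0.0139987\,t^2$ and $C_t=C_{\widetilde K_t,\zeta}$, so that the factor $N_n$ cancels exactly as in the paper.
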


This estimate fits the general lowest-eigenvalue asymptotics of Toeplitz
operators described in \cite[Theorem~1.1]{Finski}.  Here the triangle
barrier gives an explicit upper bound for the relevant weighted extremal
growth, and \cref{thm:BM} converts it into the stated lower bound.

\begin{proof}
Fix $t$ and the associated complement $\widetilde K_t$ of the two
open triangles.  Its Lipschitz boundary permits the Bernstein--Markov
estimate of \cref{thm:BM}.  Choose
\[
 \zeta=\left(1.35-\frac43(1.001)^2\right)t^2
       \simeq0.0139987t^2>0,
 \qquad C_t:=C_{\widetilde K_t,\zeta}.
\]
This constant is independent of $n$ and $\psi$.

For $\psi\in\mathcal H_n$, the resolution of the identity in \cref{eq:coherent-resolution}
and the fact that $\mu_{FS}$ is a probability measure give
\[
 \frac{\norm\psi^2}{N_n}
 =\int_{\CP^1}W_{\psi,n}\,\dd\mu_{FS}
 \leq\sup_{\CP^1}W_{\psi,n}.
\]
The triangle barrier in \cref{eq:triangle-growth}, followed by the Bernstein--Markov
inequality in \cref{eq:bernstein-markov}, yields
\[
\begin{aligned}
 \sup_{\CP^1}W_{\psi,n}
 &\leq\e^{\frac43n(1.001t)^2}\sup_{\widetilde K_t}W_{\psi,n}\\
 &\leq C_t\e^{n[\frac43(1.001t)^2+\zeta]}
             \int_{\widetilde K_t}W_{\psi,n}\,\dd\mu_{FS}\\
 &=C_t\e^{1.35nt^2}
             \int_{\widetilde K_t}W_{\psi,n}\,\dd\mu_{FS}.
\end{aligned}
\]
Combining these estimates and multiplying by $N_n$, the definition in
\cref{eq:toeplitz-definition} gives
\[
 \langle\psi,T_n(\1_{\widetilde K_t})\psi\rangle
 =N_n\int_{\widetilde K_t}W_{\psi,n}\,\dd\mu_{FS}
 \geq C_t^{-1}\e^{-1.35nt^2}\norm\psi^2.
\]
Taking the infimum over unit vectors proves \cref{eq:toeplitz-lower-bound}.
\end{proof}

\subsection{Binomial leakage and the frame bound}
\label{subsec:leakage}

Assign each point of $\widetilde K_t$ to a basis with the smallest
spin-one probability, resolving ties by the smallest index.  Explicitly, set
\begin{equation}
\begin{aligned}
 a_*([z])&:=\min\{a\in\{0,1,2\}:u_a(z)=\min_{b\in\{0,1,2\}}u_b(z)\},\\
 E_a&:=\{[z]\in\widetilde K_t:a_*([z])=a\},\qquad a=0,1,2.
\end{aligned}
\label{eq:leakage-partition}
\end{equation}
Continuity of the $u_a$ makes these sets measurable, and the tie rule gives
the disjoint partition $\widetilde K_t=E_0\sqcup E_1\sqcup E_2$.
Since $\widetilde K_t\subset K_t$, one has $u_a\leq1/2-t$ on $E_a$;
see \cref{fig:triangle}(b).

\begin{lemma}[Coherent mass outside a frequency cap]
\label[lemma]{lem:leakage}
If $0<\eta<t\leq10^{-3}$, then uniformly for $[z]\in E_a$,
\begin{equation}
 \norm{(I_{\mathcal H_n}-P_{\eta,n}^{(a)})e_z^{(n)}}^2
 \leq\exp[-2n(t-\eta)^2].
\label{eq:binomial-leakage}
\end{equation}
\end{lemma}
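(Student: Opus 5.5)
The plan is to compute the Dicke expansion of a coherent state in the rotated basis directly, reducing the leakage to a binomial tail, and then apply Hoeffding. Fix $[z]\in E_a$ with a unit representative $z$. By definition $P_{\eta,n}^{(a)}=\Gamma_n^aP_{\eta,n}\Gamma_n^{-a}$, and $\Gamma_n^{-a}e_z^{(n)}=(\Gamma^{-a}z)^{\otimes n}=e_w^{(n)}$ with $w=\Gamma^{-a}z$, since $\Gamma_n$ is the restriction of $\Gamma^{\otimes n}$. Unitarity of $\Gamma_n^a$ gives
\[
 \norm{(I_{\mathcal H_n}-P_{\eta,n}^{(a)})e_z^{(n)}}^2
 =\norm{(I_{\mathcal H_n}-P_{\eta,n})e_w^{(n)}}^2
 =\sum_{k/n>1/2-\eta}|\langle k,w^{\otimes n}\rangle|^2 .
\]

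Next, expand $w^{\otimes n}=\sum_{\bm\ell}\prod_j w_{\ell_j}\ket{\bm\ell}$ and group by the number $k$ of ones. This gives $\langle k,w^{\otimes n}\rangle=\binom nk^{1/2}w_0^{n-k}w_1^k$, so
\[
 |\langle k,w^{\otimes n}\rangle|^2=\binom nk(1-q)^{n-k}q^k,
 \qquad q=|w_1|^2=u_a(z),
\]
by \cref{eq:spin-distributions,eq:spin-one-probability}. The leakage is therefore $\Pr\{\mathrm{Bin}(n,q)>n(1/2-\eta)\}$.

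On $E_a$ we have $q\leq1/2-t$, because $\widetilde K_t\subset K_t$ and $a$ indexes the smallest $u_b$. The threshold thus exceeds the mean by at least $n(t-\eta)>0$. Hoeffding's inequality for sums of independent $[0,1]$-valued variables then gives
\[
 \Pr\{S_n-nq\geq n(t-\eta)\}\leq\exp[-2n(t-\eta)^2],
\]
uniformly in $q\leq1/2-t$ and hence in $[z]\in E_a$. Since only this standard tail bound is needed, I expect no real obstacle. The only point requiring care is that the rotation by $\Gamma_n^{-a}$ correctly turns the $a$-th basis cap into the standard cap applied to $w=\Gamma^{-a}z$. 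This matches the conventions of \cref{eq:frequency-caps,eq:spin-distributions}.
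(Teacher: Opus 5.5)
Your proposal is correct and follows essentially the same route as the paper. Both identify the leakage as the binomial tail $\Pr\{\operatorname{Bin}(n,u_a(z))>n(1/2-\eta)\}$ with $u_a(z)\leq 1/2-t$ on $E_a$, and both bound it by Hoeffding; you spell out the rotation by $\Gamma_n^{-a}$ and the Dicke-coefficient computation explicitly and cite Hoeffding's inequality, whereas the paper derives that bound inline from $\Lambda(\lambda)\leq\lambda^2/8$ and exponential Markov with $\lambda=4(t-\eta)$.
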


\begin{proof}
The squared Dicke coefficients of a coherent vector in the $a$-th basis
are binomial probabilities.  The left side is therefore its \emph{leakage}
outside the frequency cap, namely
\begin{equation}
 \Pr\{\operatorname{Bin}(n,u_a(z))/n>1/2-\eta\}.
\label{eq:leakage-event}
\end{equation}
For a Bernoulli random variable $X\in\{0,1\}$ of mean $q\in[0,1]$,
define $\Lambda:\mathbb R\to\mathbb R$ by
$\Lambda(\lambda)=\log\mathbb E\e^{\lambda(X-q)}$.
We have $\Lambda(0)=\Lambda'(0)=0$, and $\Lambda''(\lambda)\leq1/4$,
since it is the variance of a Bernoulli variable under exponential tilting.
Thus $\Lambda(\lambda)\leq\lambda^2/8$.  Exponential Markov for a sum
of $n$ independent copies, with $\lambda=4(t-\eta)$, yields
\[
 \Pr\{\operatorname{Bin}(n,q)/n-q>t-\eta\}
 \leq\e^{-2n(t-\eta)^2}.
\]
Use $q=u_a(z)\leq1/2-t$.  The degenerate cases $q=0,1$ also satisfy
the moment bound.
\end{proof}

\begin{lemma}[Coherent-to-type comparison]
\label[lemma]{lem:comparison}
For $0<\eta<t\leq10^{-3}$, write $r_n=\e^{-n(t-\eta)^2}>0$.
Then one has the inequality in
$\operatorname{End}(\mathcal H_n)$
\begin{equation}
 T_n(\1_{\widetilde K_t})
 \leq2N_nL_{\eta,n}+2N_nr_n^2I_{\mathcal H_n}.
\label{eq:toeplitz-frame-comparison}
\end{equation}
\end{lemma}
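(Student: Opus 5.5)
We decompose the Toeplitz operator along the partition $\widetilde K_t=E_0\sqcup E_1\sqcup E_2$ of \cref{eq:leakage-partition}, so that
\[
 T_n(\1_{\widetilde K_t})=\sum_{a=0}^2 N_n\int_{E_a}\proj{e_z^{(n)}}\,\dd\mu_{FS}(z).
\]
For a fixed $a$ and $[z]\in E_a$, we split the coherent vector as $e_z^{(n)}=x+y$, where $x=P^{(a)}_{\eta,n}e_z^{(n)}$ and $y=(I-P^{(a)}_{\eta,n})e_z^{(n)}$. The elementary operator inequality $\proj{x+y}\leq2\proj{x}+2\proj{y}$ holds because $2\proj x+2\proj y-\proj{x+y}=\proj{x-y}\geq0$. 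It gives
\[
 \proj{e_z^{(n)}}\leq 2P^{(a)}_{\eta,n}\proj{e_z^{(n)}}P^{(a)}_{\eta,n}+2\proj{y}.
\]
We bound the two terms separately. For the first, $P^{(a)}\proj{e_z}P^{(a)}\leq \norm{e_z}^2P^{(a)}=P^{(a)}$, since $e_z^{(n)}$ is a unit vector. For the second, $\proj y\leq\norm y^2 I\leq r_n^2 I$ by \cref{lem:leakage}, since $\norm y^2\leq \e^{-2n(t-\eta)^2}=r_n^2$ uniformly on $E_a$.

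Next we integrate over $E_a$ with respect to $\mu_{FS}$ and multiply by $N_n$. This yields
\[
 N_n\int_{E_a}\proj{e_z^{(n)}}\,\dd\mu_{FS}\leq 2N_n\mu_{FS}(E_a)\bigl(P^{(a)}_{\eta,n}+r_n^2I\bigr).
\]
Summing over $a$ and using $\mu_{FS}(E_a)\leq1$ on the cap terms gives $2N_n\sum_aP^{(a)}_{\eta,n}=2N_nL_{\eta,n}$. On the remainder we use $\sum_a\mu_{FS}(E_a)=\mu_{FS}(\widetilde K_t)\leq1$, which gives $2N_nr_n^2I$. Together these are exactly \cref{eq:toeplitz-frame-comparison}.

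Instead of bounding $P^{(a)}\proj{e_z}P^{(a)}$ pointwise, one could integrate it to get a compression of a Toeplitz operator by $P^{(a)}$. This compression is at most $P^{(a)}\,T_n(\1)\,P^{(a)}=P^{(a)}$ by the resolution of the identity \cref{eq:coherent-resolution}, which even drops the factor $N_n$ on that term. Either route suffices for the stated bound.

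The main point to check is that \cref{lem:leakage} applies uniformly on each $E_a$, which needs $u_a\leq1/2-t$ there. This follows from $\widetilde K_t\subset K_t$ (\cref{lem:shape}) together with the choice of $a$ as a minimizing index. Measurability of the $E_a$, needed for the integrals, is already noted after \cref{eq:leakage-partition}. The rest is routine operator-inequality bookkeeping.
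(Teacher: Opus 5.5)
Your proof is correct and is the paper's argument written at the operator level. The paper splits $e_z^{(n)}$ with $P_{\eta,n}^{(a)}$ on each $E_a$, bounds $|\langle e_z^{(n)},\psi\rangle|\leq\norm{P_{\eta,n}^{(a)}\psi}+r_n\norm{\psi}$ using the leakage lemma, squares with $(a+b)^2\leq2a^2+2b^2$, and integrates; this is exactly the quadratic-form version of your steps $\proj{x+y}\leq2\proj{x}+2\proj{y}$, $P\proj{e_z^{(n)}}P\leq P$ and $\proj{y}\leq r_n^2I$. Your side remark is also valid: integrating the cap term gives $2P_{\eta,n}^{(a)}T_n(\1_{E_a})P_{\eta,n}^{(a)}\leq2P_{\eta,n}^{(a)}$, which removes the factor $N_n$ there, although this polynomial gain is immaterial for the frame bound.
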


\begin{proof}
For $[z]\in E_a$, split the coherent vector with $P_{\eta,n}^{(a)}$
and its complement.  Cauchy--Schwarz and \cref{lem:leakage} give
\[
 |\langle e_z^{(n)},\psi\rangle|
 \leq\norm{P_{\eta,n}^{(a)}\psi}+r_n\norm\psi.
\]
Square using $(a+b)^2\leq2a^2+2b^2$, integrate over the partition, and
use $\sum_a\mu_{FS}(E_a)=\mu_{FS}(\widetilde K_t)\leq1$ to obtain
\[
\begin{aligned}
 \langle\psi,T_n(\1_{\widetilde K_t})\psi\rangle
 &\leq2N_n\sum_a\mu_{FS}(E_a)
               \norm{P_{\eta,n}^{(a)}\psi}^2
     +2N_nr_n^2\norm\psi^2\sum_a\mu_{FS}(E_a)\\
 &\leq2N_n\langle\psi,L_{\eta,n}\psi\rangle
     +2N_nr_n^2\norm\psi^2.
\end{aligned}
\]
This proves the operator inequality.
\end{proof}

\begin{proof}[Proof of \cref{prop:frame}]
Fix $0<\eta\leq10^{-4}$ and put $t=10\eta\leq10^{-3}$.
The coherent-state lower bound and the leakage estimate have exponent
coefficients
\begin{equation}
 1.35t^2=135\eta^2,
 \qquad 2(t-\eta)^2=162\eta^2.
\label{eq:exponent-margin}
\end{equation}
The operator comparison in \cref{lem:comparison} reads
\[
 L_{\eta,n}\geq\frac{1}{2N_n}T_n(\1_{\widetilde K_t})
                  -\e^{-2n(t-\eta)^2}I_{\mathcal H_n}.
\]
By \cref{prop:toeplitz}, with $C_\eta:=C_{10\eta}$,
\begin{equation}
 \lambda_{\min}(L_{\eta,n})
 \geq\frac{C_\eta^{-1}}{2(n+1)}\e^{-135n\eta^2}
       -\e^{-162n\eta^2}.
\label{eq:frame-lower-comparison}
\end{equation}
The strict inequalities $135<140<162$ allow the constant and $n+1$
to be absorbed, while the subtracted term decays strictly faster.
More precisely, divide the right side by $\e^{-140n\eta^2}$: the first
term tends to infinity and the second to zero.  The result is at least
one for all sufficiently large $n$, proving \cref{eq:frame-bound}.
\end{proof}

The numerical constants are not optimized.  In particular, the same
comparison gives every frame exponent coefficient strictly greater than
$135$; the value $140$ leaves a convenient margin.  Optimizing the
ratio $t/\eta$ could improve this further, but is unnecessary for the
counterexample.

\section{Completion of the counterexample}
\label{sec:contradiction}

We now choose a fixed frequency tolerance, then let the number of copies
tend to infinity.

\begin{proof}[Proof of \cref{thm:main}]
Recall the entropy deficit $\Delta(\eta)=h-f(1/2-\eta)$ from
\cref{eq:deficit}.  Fix
\begin{equation}
 \eta_* =10^{-4},\qquad
 \delta_0=\frac{\Delta(\eta_*)}{2},\qquad
 r=\frac{\Delta(\eta_*)}{4}-140\eta_*^2.
\label{eq:fixed-parameters}
\end{equation}
Since $\Delta(\eta_*)\geq\eta_*\log3$,
\begin{equation}
 r\geq\frac{\eta_*\log3}{4}-140\eta_*^2>0.
\label{eq:positive-rate}
\end{equation}
Since $\e^{-n(h-\delta)}\leq\e^{-n(h-\delta_0)}$ for
$0<\delta\leq\delta_0$, it suffices to prove the endpoint
$\delta=\delta_0$.
Let $\sigma_n$ satisfy \cref{eq:main-purity} with $\delta=\delta_0$.
Let $\tau_n=J_n^*\sigma_nJ_n$ be the compression from
\cref{lem:singlet-compression}.  By \cref{lem:cap},
\begin{equation}
 \Tr(\tau_nL_{\eta_*,n})
 \leq3\sqrt{n+1}\,\e^{-n\Delta(\eta_*)/4}.
\label{eq:fixed-cap-bound}
\end{equation}
For all sufficiently large $n$, \cref{prop:frame} gives
\begin{equation}
 L_{\eta_*,n}\geq\e^{-140n\eta_*^2}I_{\mathcal H_n}.
\label{eq:fixed-frame-bound}
\end{equation}
Combining the two estimates against the positive operator $\tau_n$,
\begin{equation}
 \e^{-140n\eta_*^2}\Tr\tau_n
 \leq\Tr(\tau_nL_{\eta_*,n})
 \leq3\sqrt{n+1}\,\e^{-n\Delta(\eta_*)/4},
\label{eq:trace-squeeze}
\end{equation}
whence
\begin{equation}
 \Tr\tau_n\leq3\sqrt{n+1}\,\e^{-rn}.
\label{eq:compressed-trace-decay}
\end{equation}
The overlap is preserved by the reduction, so \cref{eq:compressed-overlap} yields
\begin{equation}
 \langle\Omega^{\otimes n},\sigma_n\Omega^{\otimes n}\rangle
 \leq3\sqrt{n+1}\,\e^{-rn}\longrightarrow0.
\label{eq:overlap-decay}
\end{equation}
The measurement of the target projector and its complement gives
\begin{equation}
 \norm{\sigma_n-\proj{\Omega}^{\otimes n}}_1
 \geq2\left(1-\langle\Omega^{\otimes n},
                    \sigma_n\Omega^{\otimes n}\rangle\right)\longrightarrow2.
\label{eq:trace-separation}
\end{equation}
Since the trace norm between states is at most $2$, the limit is $2$.
All sufficiently-large-$n$ thresholds are independent of the candidate
state, so \cref{eq:overlap-decay} is a uniform bound over all candidates satisfying
\cref{eq:main-purity}.

For normalized marginals, operator-norm bounds imply purity bounds by
\cref{eq:purity-operator}.  Thus the operator-norm version fails as well.  Finally,
\cref{eq:phase-distance} shows that a pure-vector smoothing with tolerance
$\eps<\sqrt2$ would have
$|\langle\Omega^{\otimes n},\psi_n\rangle|\geq1-\eps^2/2>0$,
contradicting the overlap bound.
\end{proof}

\begin{remark}[Explicit parameters]
\label[remark]{rem:sizes}
The choices in \cref{eq:fixed-parameters} give
\[
 \delta_0\simeq5.494061443\cdot10^{-5},\qquad
 r\simeq2.607030722\cdot10^{-5}.
\]
\end{remark}

\begin{corollary}[Failure in all larger local dimensions and numbers of parties]
\label[corollary]{cor:larger}
For every $d\geq2$ and $m\geq4$, there is a pure state on
$(\mathbb C^d)^{\otimes m}$ that violates both simultaneous typicality
conjectures and their pure-vector variants.
\end{corollary}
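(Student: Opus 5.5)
The plan is to embed the four-qubit example into larger local dimensions and into more parties, and check that the counterexample survives.

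\emph{Embedding step.} Fix an isometry $\iota:\mathbb C^2\to\mathbb C^d$. Define
\[
 \Omega_{d,m}=\iota^{\otimes4}\Omega\otimes\ket{0}^{\otimes(m-4)}\in(\mathbb C^d)^{\otimes m},
\]
where $\ket{0}\in\mathbb C^d$ is a unit vector and the extra parties carry a product pure state. For $T=\{A,B\}$, $\{A,C\}$, $\{B,C\}$ among the first four parties, the marginal $(\Omega_{d,m})_T$ is the isometric image $\iota^{\otimes2}\Omega_T\iota^{*\otimes2}$. Its entropy is therefore still $h$.

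\emph{Reduction step.} Suppose $\sigma_n$ on $((\mathbb C^d)^{\otimes m})^{\otimes n}$ satisfies the conjectured purity bounds with slack $\delta\le\delta_0$ for these three pairs. We aim to show
\[
 \langle\Omega_{d,m}^{\otimes n},\sigma_n\Omega_{d,m}^{\otimes n}\rangle\to0 .
\]
Let $\Pi$ be the projection onto $(\iota\mathbb C^2)^{\otimes4n}\otimes(\ket0\!\bra0)^{\otimes(m-4)n}$. Set
\[
 \sigma_n'=\Pi\sigma_n\Pi/\Tr(\Pi\sigma_n)
\]
when the denominator is nonzero.

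This does not directly give the needed bounds, since compression can increase purity after normalization. Instead I would mimic the compression argument of \cref{lem:singlet-compression}, which allows subnormalized $\tau_n$. Write $V=\iota^{\otimes4}\otimes\ket0^{\otimes(m-4)}:\cK\to(\mathbb C^d)^{\otimes m}$ and form
\[
 \tilde\sigma_n=V^{*\otimes n}\sigma_nV^{\otimes n}\ge0,\qquad \Tr\tilde\sigma_n\le1 .
\]
Its overlap with $\Omega^{\otimes n}$ equals the original overlap.

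The pair marginals of $\tilde\sigma_n$ must be controlled. The map $X\mapsto\Tr_{\text{rest}}(V^{*\otimes n}XV^{\otimes n})$ restricted to the $AB$ output equals
\[
 \iota^{*\otimes2n}\,\Tr_{\text{rest}}\!\bigl[(I\otimes P)X\bigr]\,\iota^{\otimes2n},
\]
with $P$ a projection on the complementary parties. Positivity gives $\Tr_{\text{rest}}[(I\otimes P)X(I\otimes P)]\le X_{AB}$ in the Loewner order, and compressing by an isometry does not increase Schatten norms. Hence the pair purities of $\tilde\sigma_n$ are at most those of $\sigma_n$.

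\emph{Conclusion step.} Now run the proof of \cref{thm:main} verbatim with the subnormalized $\tilde\sigma_n$ in place of $\sigma_n$. \Cref{lem:singlet-compression} and \cref{lem:cap} only use positivity, $\Tr\le1$, and the pair-marginal bounds, so the overlap tends to zero and the trace distance tends to $2$. The pure-vector variants fail as in the theorem's proof.

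\emph{Main obstacle.} The step needing care is checking that every lemma tolerates subnormalized input. For \cref{lem:singlet-compression}, the averaging inequality \cref{eq:averaged-compression} holds for any positive operator. For \cref{lem:cap}, the Cauchy--Schwarz bound never uses normalization. The monotonicity of marginal norms under the compression $V$ is the only genuinely new estimate, and it follows from \cref{eq:positive-norm-monotonicity}.
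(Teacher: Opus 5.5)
Your proposal is correct and follows essentially the paper's route: you pull back a hypothetical smoothing along the isometry, observe that the target overlap is unchanged, and dominate the three pair marginals in the Loewner order, so their purities can only decrease. The differences are minor. You absorb the extra $m-4$ parties into the same compression onto $\ket0$ rather than tracing them out first. You also run the argument of \cref{thm:main} directly on the subnormalized $\tilde\sigma_n$, instead of normalizing via $t_n\geq1-\eps/2$ and spending half the slack as the paper does; this is legitimate, since none of \cref{lem:singlet-compression}, \cref{lem:cap} or \cref{prop:frame} uses $\Tr\sigma_n=1$, and it even gives $\norm{\sigma_n-\proj{\Omega_{d,m}}^{\otimes n}}_1\to2$ without assuming $\eps<2$.
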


\begin{proof}
First let $m=4$, fix an isometry $Q:\mathbb C^2\to\mathbb C^d$, and
write $\Omega_d=Q^{\otimes4}\Omega\in(\mathbb C^d)^{\otimes4}$.
Its three relevant pair entropies remain $h$.
Suppose that, for some $\eps<2$, states
$\sigma_n\in\operatorname{End}(((\mathbb C^d)^{\otimes4})^{\otimes n})$
stay within trace distance $\eps$ of $\proj{\Omega_d}^{\otimes n}$ and
have the three pair purities bounded by
$\e^{-n(h-\delta_0/2)}$.  Pull back to the qubit space:
\[
 \tau_n=(Q^*)^{\otimes4n}\sigma_nQ^{\otimes4n}
 \in\operatorname{End}(\cK^{\otimes n}),\qquad t_n=\Tr\tau_n.
\]
The target overlap is preserved, so \cref{eq:trace-overlap} gives
\[
 1\geq t_n\geq
 \langle\Omega^{\otimes n},\tau_n\Omega^{\otimes n}\rangle
 \geq1-\frac{\eps}{2}.
\]
For each $XY\in\{AB,AC,BC\}$, testing against positive operators gives
the inequality on the qubit pair space $(\mathbb C^2)^{\otimes2n}$
\[
 0\leq(\tau_n)_{(XY)^n}
 \leq(Q^*)^{\otimes2n}(\sigma_n)_{(XY)^n}Q^{\otimes2n}.
\]
Hilbert--Schmidt monotonicity from \cref{eq:positive-norm-monotonicity} and contractivity under
compression therefore imply, for the state
$\widehat\tau_n=\tau_n/t_n\in\operatorname{End}(\cK^{\otimes n})$,
\[
 \Tr\!\left[(\widehat\tau_n)_{(XY)^n}^2\right]
 \leq(1-\eps/2)^{-2}\e^{-n(h-\delta_0/2)}
 \leq\e^{-n(h-\delta_0)}
\]
for sufficiently large $n$, absorbing the fixed normalization factor
with the slack $\delta_0/2$.  Yet its target overlap is at least
$1-\eps/2>0$, contradicting \cref{thm:main}.

For $m>4$, tensor $\Omega_d$ with a fixed pure product state
on $(\mathbb C^d)^{\otimes(m-4)}$.  A hypothetical smoothing may first be traced over
those parties: trace distance contracts, and the three obstructing pair
marginals are unchanged.  The resulting four-party state contradicts the
case just proved.  Finally, \cref{eq:phase-distance} and the overlap
argument give the pure-vector assertion for every $\eps<\sqrt2$.
\end{proof}

\section{Robustness}
\label{sec:discussion}

The construction establishes failure for pure states on four parties and,
by \cref{cor:larger}, for every larger number and every local dimension at
least two.  \Cref{cor:mixed-tripartite} also turns the same construction
into a mixed three-party counterexample.  On the other hand, the pure
tripartite case is positive by \cref{rem:pure-tripartite}.  Thus four parties
are exactly the threshold for pure inputs, while three parties already suffice
for mixed inputs.  At the one-shot level, \cref{cor:one-shot} rules out both
the original simultaneous min-entropy-smoothing conjecture and its conditional
variant.

The symmetry of the chosen state makes the frame estimate accessible,
but the resulting obstruction persists when that symmetry is broken.
The following proposition makes this stability precise.

\Needspace{8\baselineskip}
\begin{proposition}[An open neighborhood of counterexamples]
\label[proposition]{prop:robustness}
Put $\rho_\Omega=\proj{\Omega}\in\operatorname{End}(\cK)$, where
$\cK=V^{\otimes4}=(\mathbb C^2)^{\otimes4}$ is the physical four-qubit
space from \cref{sec:candidate}.  There exist
$\eps_*>0$ and $\delta_*>0$ such that, for every density operator
$\rho\in\operatorname{End}(\cK)$ with $\norm{\rho-\rho_\Omega}_1<\eps_*$,
every sequence of states $\sigma_n\in\operatorname{End}(\cK^{\otimes n})$
satisfying, for all sufficiently large $n$,
\[
 \Tr\!\left[(\sigma_n)_{T^n}^2\right]
 \leq\e^{-n(H(\rho_T)-\delta_*)},
 \qquad T\in\{AB,AC,BC\},
\]
obeys
\[
 \norm{\sigma_n-\rho^{\otimes n}}_1\longrightarrow2.
\]
The same conclusion holds with operator norms in place of purities.
In particular, both simultaneous typicality conjectures fail throughout
this neighborhood, with no purity or symmetry assumption on $\rho$.
\end{proposition}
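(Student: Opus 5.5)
The plan is to reduce to \cref{thm:main} through a measurement that $\rho^{\otimes n}$ passes with probability tending to one and that any admissible $\sigma_n$ passes only with vanishing probability. For a two-outcome test $\{Q_n,I-Q_n\}$ one has
\[
 \norm{\sigma_n-\rho^{\otimes n}}_1\geq2\bigl(\Tr(Q_n\rho^{\otimes n})-\Tr(Q_n\sigma_n)\bigr),
\]
so it suffices to show $\Tr(Q_n\rho^{\otimes n})\to1$ and $\Tr(Q_n\sigma_n)\to0$.

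\textbf{Parameters and entropy continuity.} Let $\delta_0$ be as in \cref{thm:main}. By Fannes--Audenaert continuity on the fixed $4$-dimensional pair spaces, we choose $\eps_*$ so that $|H(\rho_T)-h|\leq\delta_0/8$ for $T\in\{AB,AC,BC\}$ whenever $\norm{\rho-\rho_\Omega}_1<\eps_*$, and we set $\delta_*=\delta_0/8$. The hypothesis then gives pair purities at most $\e^{-n(h-\delta_0/4)}$. The remaining slack of order $\delta_0$ is reserved for the losses incurred below.

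\textbf{Choice of test.} Let $\Pi_{\mathrm{inv}}$ be the projection onto the singlet space $\cS$, and fix a small $\kappa>0$. Let $Q_n$ project onto the span of product configurations in which at least $(1-\kappa)n$ copies lie in $\cS$. Since $\Tr(\Pi_{\mathrm{inv}}\rho)\geq1-\eps_*/2$, the law of large numbers gives $\Tr(Q_n\rho^{\otimes n})\to1$ once $\eps_*<\kappa$. On the other side, we bound $\Tr(Q_n\sigma_n)$ by a union bound over the at most $\e^{nb(\kappa)}$ subsets $F$ of size $m\geq(1-\kappa)n$, each term being at most $\Tr\bigl(\Pi_{\mathrm{inv}}^{\otimes m}(\sigma_n)_F\bigr)$. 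Tracing out $n-m\leq\kappa n$ copies raises a pair purity by at most a factor $4^{\kappa n}$, using $\Tr(\Tr_BX)^2\leq d_B\Tr X^2$. Hence each $(\sigma_n)_F$ has pair purities at most $\e^{-m(h-\delta_0/2)}$ when $\kappa$ is small. Averaging over permutations of $F$ changes neither these bounds, by convexity as in \cref{lem:singlet-compression}, nor the singlet weight, so we may assume $(\sigma_n)_F$ is permutation invariant.

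\textbf{Main obstacle.} The argument now needs $\Tr\bigl(\Pi_{\mathrm{inv}}^{\otimes m}\sigma\bigr)\leq\e^{-r'm}$ with $r'>b(\kappa)$ for every permutation-invariant $\sigma$ satisfying the purity bounds. \Cref{thm:main} only controls the smaller weight $\Tr(P\sigma)$, which corresponds to the symmetric block $\Sym^m(\mathcal M)$. I would decompose
\[
 \mathcal M^{\otimes m}=\bigoplus_j\Sym^{2j}(\mathcal M)\otimes\mathbb C^{\mathrm{mult}_j}
\]
by Schur--Weyl duality for the auxiliary $SU(2)$. The key structural fact is that $\Gamma^{\otimes m}$ acts only on the spin factor. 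Then:
\begin{itemize}
\item On each spin-$j$ block, I would rerun \cref{lem:cap} and \cref{prop:frame}. The coherent-state frame bound applies verbatim with $n$ replaced by $2j$.
\item The cap ranks pick up an extra multiplicity factor of at most $2^{m}$ times the type count. I expect this to be compensated, since low-spin blocks force correspondingly many copies into balanced singlet pairs, which in turn fixes their spin-one frequency.
\end{itemize}
This is the step I expect to be hardest, and the dimension bookkeeping across blocks is where the argument could fail. If it does, the fallback is to shrink $\eps_*$ so that $\kappa$ can be taken so small that $b(\kappa)$ is below the rate $r$ of \cref{thm:main}. In that case only near-symmetric blocks would need to be controlled, via a stability version of the frame bound for $j$ close to $m/2$.

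Once such a bound holds, the union bound gives $\Tr(Q_n\sigma_n)\leq\e^{n(b(\kappa)-r'(1-\kappa))}\to0$, and hence $\norm{\sigma_n-\rho^{\otimes n}}_1\to2$. The operator-norm version follows from \cref{eq:purity-operator}.
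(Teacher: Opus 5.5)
Your outer structure matches the paper's: a two-outcome test, a union bound over the retained copies, the factor $4^{\kappa n}$ for discarding copies, and the count $\e^{nb(\kappa)}$. The failure is your choice of $\Pi_{\mathrm{inv}}$ as the single-copy non-defect outcome. With that choice, the estimate you flag as the main obstacle is false, not merely hard.

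\textbf{Counterexample to your main claim.} Let $m$ be even. Let $P_0$ be the projection onto the spin-zero isotypic subspace of $\mathcal M^{\otimes m}$ for the auxiliary action $g\mapsto g^{\otimes m}$. Put $\sigma_m=J^{\otimes m}(P_0/\Tr P_0)(J^*)^{\otimes m}$.
\begin{itemize}
\item Spin-zero vectors have vanishing total $z$-component, so $P_0$ is supported on balanced strings ($\sum_j\ell_j=m/2$).
\item $P_0$ commutes with permutations, so its computational diagonal equals the constant $\Tr P_0/\binom{m}{m/2}$ on balanced strings.
\item Hence $\cN_{AB}^{\otimes m}(P_0/\Tr P_0)=Q_{m/2,m}^{AB}/D_{m/2}$, whose purity is $1/D_{m/2}\leq(m+1)\e^{-mh}$.
\item $\Gamma$ is real orthogonal with determinant one, so $\Gamma\in SU(2)$ and $\Gamma^{\otimes m}$ commutes with $P_0$. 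By \cref{eq:cyclic-channels}, the $AC$ and $BC$ marginals have the same purity.
\end{itemize}
So $\sigma_m$ is permutation invariant and meets all three pair purity bounds with any fixed positive slack once $m$ is large. Yet $\Tr(\Pi_{\mathrm{inv}}^{\otimes m}\sigma_m)=1$.

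Take $\rho=\rho_\Omega$ and even $n=m$. Then $\sigma_n$ is admissible with $\Tr(Q_n\sigma_n)=1$. Your test therefore cannot witness the separation, however the spin blocks are handled.

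Your heuristic runs backwards. Low auxiliary spin pins the spin-one frequency at exactly $1/2$ in all three coupling bases. That is the most favourable situation for the purity constraints, not a cost to be compensated. The fallback fails for the same reason: the offending state sits in the lowest-spin block, and the test itself is unchanged.

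\textbf{The fix.} Let the test detect closeness to $\Omega$ rather than membership in $\cS$: declare a defect on copy $i$ when the outcome is $I_i-\rho_{\Omega,i}$.
\begin{itemize}
\item By \cref{eq:trace-overlap}, $\Tr(\rho_\Omega\rho)\geq1-\eps_*/2$, so $\Tr(Q_n\rho^{\otimes n})\to1$ exactly as before.
\item Each union-bound term becomes $\Tr\bigl(\rho_\Omega^{\otimes m}(\sigma_n)_F\bigr)$. \Cref{thm:main} bounds this by $3\sqrt{m+1}\,\e^{-rm}$, uniformly over admissible states.
\item Choose $\kappa$ with $b(\kappa)<r(1-\kappa)$ and $\kappa(h+\log4)\leq\delta_0/2$. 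Then choose $\eps_*$ small enough that the defect probability is below $\kappa$.
\end{itemize}
This closes the argument with no Schur--Weyl analysis at all, and it is the paper's proof.
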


\begin{proof}
\emph{A uniform estimate and a neighborhood of the target.}
Use the fixed $\eta_*,\delta_0,r$ of \cref{eq:fixed-parameters}; in particular
$0<\delta_0<h$ and $r>0$.
The quantitative conclusion of \cref{thm:main} is uniform;
we weaken its polynomial prefactor to $3(m+1)$ for convenience:
there is $m_0$ such that every state
$\zeta_m\in\operatorname{End}(\cK^{\otimes m})$ satisfying
\[
 \Tr\!\left[(\zeta_m)_{T^m}^2\right]
 \leq\e^{-m(h-\delta_0)}\quad(T\in\{AB,AC,BC\})
\]
has
\[
 \Tr(\rho_\Omega^{\otimes m}\zeta_m)
 \leq3(m+1)\e^{-rm}\qquad(m\geq m_0).
\]
Here $m_0$ is the uniform threshold supplied by \cref{thm:main}.

Recall the binary entropy $b:[0,1]\to\mathbb R$ from \cref{eq:binary-entropy}.
Choose $0<\alpha<1/2$ so small that
\[
 \alpha(h+\log4)\leq\frac{\delta_0}{2},
 \qquad b(\alpha)<r(1-\alpha),
\]
and set $\delta_*=\delta_0/4$.
Continuity of the overlap and of finite-dimensional marginal entropies
gives $\eps_*>0$ such that $\norm{\rho-\rho_\Omega}_1<\eps_*$ implies
\[
 p:=1-\Tr(\rho_\Omega\rho)<\alpha,
 \qquad H(\rho_T)>h-\frac{\delta_0}{4}
 \quad(T\in\{AB,AC,BC\}).
\]
Indeed, these strict inequalities hold at $\rho=\rho_\Omega$, and define
an open set relative to the density operators.  For such a $\rho$,
the assumed purity bounds imply
\[
 \Tr\!\left[(\sigma_n)_{T^n}^2\right]
 \leq\e^{-n(h-\delta_0/2)}.
\]

\emph{Few defects have small weight in every admissible state.}
Set $k=\lfloor\alpha n\rfloor$ and $m=n-k$, and label the physical
copies $\cK_i\cong\cK$, $i=1,\ldots,n$.
For any set $J\subseteq\{1,\ldots,n\}$ of size $k$, discard the
copies in $J$ and call the resulting state
$\zeta_{J^c}\in\operatorname{End}(\bigotimes_{i\notin J}\cK_i)
\cong\operatorname{End}(\cK^{\otimes m})$, with the remaining copies in
their original order.  For finite-dimensional Hilbert spaces $E,F$ and
$X\in\operatorname{End}(E\otimes F)$, Hilbert--Schmidt duality gives
\[
 \norm{\Tr_F X}_2
 =\sup_{Y\in\operatorname{End}(E),\,\norm Y_2=1}|\Tr[(Y^*\otimes I_F)X]|
 \leq\sqrt{\dim F}\,\norm X_2.
\]
Discarding $k$ copies of a two-qubit marginal therefore costs at
most $4^k$ in purity, so
\[
 \Tr\!\left[(\zeta_{J^c})_{T^m}^2\right]
 \leq4^k\e^{-n(h-\delta_0/2)}
 \leq\e^{-m(h-\delta_0)}.
\]
The last inequality is equivalent to
$k(\log4-h+\delta_0)\leq n\delta_0/2$.  This follows from
$k\leq\alpha n$ and $\alpha(h+\log4)\leq\delta_0/2$, since
$0<\delta_0<h$.
Thus, for all sufficiently large $n$, uniformly in $J$,
\[
 \Tr\!\left[\sigma_n
   \left(\bigotimes_{i\notin J}\rho_{\Omega,i}\right)
   \left(\bigotimes_{i\in J}I_i\right)\right]
 \leq3(m+1)\e^{-rm}.
\]
Here $\rho_{\Omega,i},I_i\in\operatorname{End}(\cK_i)$ are the corresponding
copies of $\rho_\Omega,I_{\cK}$.

Let $Q_{n,k}\in\operatorname{End}(\cK^{\otimes n})$ be the orthogonal
projection onto at most $k$ defects, where a
\emph{defect} means the outcome $I_i-\rho_{\Omega,i}$ on copy $i$:
\[
 Q_{n,k}=\sum_{\substack{S\subseteq\{1,\ldots,n\}\\|S|\leq k}}
   \left(\bigotimes_{i\in S}(I_i-\rho_{\Omega,i})\right)
   \left(\bigotimes_{i\notin S}\rho_{\Omega,i}\right).
\]
These single-copy projections commute.  Each defect set $S$ with
$|S|\leq k$ is contained in some $J$ of size $k$, which gives
the operator inequality
\[
 Q_{n,k}\leq\sum_{|J|=k}
   \left(\bigotimes_{i\notin J}\rho_{\Omega,i}\right)
   \left(\bigotimes_{i\in J}I_i\right).
\]
Consequently,
\[
 \Tr(\sigma_nQ_{n,k})
 \leq\binom nk3(m+1)\e^{-rm}
 \leq3(n+1)\e^{-n[r(1-\alpha)-b(\alpha)]}
 \longrightarrow0.
\]
We used $\binom nk\leq\e^{nb(k/n)}\leq\e^{nb(\alpha)}$, since
$k/n\leq\alpha<1/2$.  The first inequality follows from the
$k$-th term of the binomial probability sum with parameter $k/n$.

\emph{The nearby tensor power has few defects.}
In $\rho^{\otimes n}$, the same measurements have independent defect
probability $p<\alpha$, even when $\rho$ does not commute with
$\rho_\Omega$.  The law of large numbers therefore gives
\[
 \Tr(\rho^{\otimes n}Q_{n,k})
 =\Pr\{\operatorname{Bin}(n,p)\leq k\}\longrightarrow1.
\]
For states $a,b\in\operatorname{End}(E)$ and an orthogonal projection
$Q\in\operatorname{End}(E)$,
$\norm{a-b}_1\geq2|\Tr[Q(a-b)]|$, since the positive and negative
parts of the trace-zero operator $a-b$ each have trace
$\norm{a-b}_1/2$.  Applying this with $Q=Q_{n,k}$ yields
\[
 \norm{\sigma_n-\rho^{\otimes n}}_1
 \geq2\bigl(\Tr(\rho^{\otimes n}Q_{n,k})
             -\Tr(\sigma_nQ_{n,k})\bigr)\longrightarrow2.
\]
The trace norm between states is at most $2$, proving the limit.
Finally, $\Tr a^2\leq\norm a_\infty$ for a density operator $a$,
so the operator-norm constraints imply the purity constraints.
\end{proof}

In particular, $(1-s)\rho_\Omega+sI_{\cK}/16$ is a full-rank counterexample for
all sufficiently small $s>0$.

The smallest nontrivial local dimension already supports the mechanism.
The two-dimensional singlet multiplicity admits three cyclically related
measurement bases.  The two conjugate eigendirections of the cyclic
matrix $\Gamma$ give the two entropy equality points.  The single complex coordinate makes the
triangular geometry explicit; a one-sided binary cap gives a uniform
binomial leakage estimate without a separate entropy-rate minimization.

\begin{remark}[The quadratic order of the frame cost]
\label[remark]{rem:optimal}
Evaluating the frame on its target vector gives
\begin{equation}
 \lambda_{\min}(L_{\eta,n})
 \leq\langle v^{\otimes n},L_{\eta,n}v^{\otimes n}\rangle
 =3\Pr\{\operatorname{Bin}(n,1/2)/n\leq1/2-\eta\}.
\label{eq:target-rayleigh}
\end{equation}
By Stirling's formula, for fixed $\eta$ the right side has exponential
decay rate $\log2-b(1/2-\eta)=2\eta^2+O(\eta^4)$, so the quadratic
order of the frame exponent cannot be replaced by $o(\eta^2)$.
\end{remark}

\section*{Acknowledgements}

The author acknowledges support from the Ministry of Education, Culture,
Sports, Science and Technology (MEXT) through SPReaD (Supporting
Pioneering Research through AI for 1,000 Discovery challenges).  The author
also thanks P.~Hayden, Y.~Ogata and N.~Ozawa for inspiring discussions
about quantum typicality, the literature and methodology.

An earlier, more complicated qutrit construction was developed with the
assistance of ChatGPT 5.6.  The approach to its frame bound used
pluripotential theory and Berezin--Toeplitz operators, drawing on Finski's
eigenvalue estimates \cite{Finski} and the Bernstein--Markov criterion of
Berman, Boucksom and Witt Nystr\"om \cite{BermanBoucksomWittNystrom}.
The qubit state studied here was obtained through subsequent human
refinement and simplification.  LLMs from OpenAI and Anthropic were used
to assist with editing, preparing illustrations and checking parts of this manuscript,
and also drew the author's attention to its relation to the
Higuchi--Sudbery state.
All mathematical content was reviewed, checked and substantially revised
by the author, who takes full responsibility for it.

\appendix
\crefalias{section}{appendix}
\section{Exact matrices of the counterexample}
\label{app:matrices}

We retain the physical tensor order $A,B,C,R$, with each party equal
to $\mathbb C^2$, and write $M_d(\mathbb C)=\operatorname{End}(\mathbb C^d)$.
Every computational basis is ordered lexicographically: for a pair
$XY$, it is
$\mathcal B_{XY}=(\ket{00},\ket{01},\ket{10},\ket{11})$, and for
$ABC$ it is
\[
 \mathcal B_{ABC}=(\ket{000},\ket{001},\ket{010},\ket{011},
                  \ket{100},\ket{101},\ket{110},\ket{111}).
\]
For each subsystem $T$, put
$\rho_T=\Tr_{T^c}\proj{\Omega}\in\operatorname{End}(T)$.
Throughout this appendix, $\omega$ is the scalar
\[
 \omega=\e^{2\pi i/3}=-\frac12+\frac{i\sqrt3}{2},
 \qquad \bar\omega=\omega^2,\qquad 1+\omega+\omega^2=0.
\]
With this notation, all entries below are exact.

\Needspace{14\baselineskip}
\subsection{The four-party vector as a matrix}

Group the first two and last two indices of $\Omega$.  Its \emph{coefficient
matrix} $M_{AB|CR}\in M_4(\mathbb C)$, with rows indexed by $ab$ and
columns by $cr$, is defined by
$\Omega=\sum_{a,b,c,r=0}^1(M_{AB|CR})_{ab,cr}\ket{abcr}$.
\Cref{eq:target-computational} gives
\begin{equation}
 M_{AB|CR}=\frac{-i}{\sqrt6}
 \begin{pmatrix}
  0&0&0&1\\
  0&\omega&\omega^2&0\\
  0&\omega^2&\omega&0\\
  1&0&0&0
 \end{pmatrix}.
\label{eq:app-coefficient-matrix}
\end{equation}
As a tensor, the ket $\Omega$ defines a linear map
$(C\otimes R)^*\to A\otimes B$, whose matrix in the dual computational
basis of the domain and $\mathcal B_{AB}$ is $M_{AB|CR}$.
Here the star on a space denotes its \emph{complex-linear dual}.

\subsection{The three-party and pair marginals}

The two vectors in $A\otimes B\otimes C$ given by
\begin{equation}
 \xi_0=\frac{\ket{110}+\omega\ket{101}+\omega^2\ket{011}}{\sqrt3},
 \qquad
 \xi_1=\frac{\ket{001}+\omega\ket{010}+\omega^2\ket{100}}{\sqrt3}
\label{eq:app-schmidt-vectors}
\end{equation}
are orthonormal.  Regrouping \cref{eq:target-computational} by the last qubit gives the
\emph{Schmidt decomposition}, up to an overall phase,
\[
 \Omega=\frac{-i}{\sqrt2}
   (\xi_0\otimes\ket{0}_R+\xi_1\otimes\ket{1}_R),\qquad
 \rho_{ABC}=\frac12(\proj{\xi_0}+\proj{\xi_1}).
\]
Consequently, in the basis $\mathcal B_{ABC}$,
\begin{equation}
 [\rho_{ABC}]_{\mathcal B_{ABC}}=\frac16
 \begin{pmatrix}
  0&0&0&0&0&0&0&0\\
  0&1&\omega^2&0&\omega&0&0&0\\
  0&\omega&1&0&\omega^2&0&0&0\\
  0&0&0&1&0&\omega&\omega^2&0\\
  0&\omega^2&\omega&0&1&0&0&0\\
  0&0&0&\omega^2&0&1&\omega&0\\
  0&0&0&\omega&0&\omega^2&1&0\\
  0&0&0&0&0&0&0&0
 \end{pmatrix}\in M_8(\mathbb C).
\label{eq:app-three-party-matrix}
\end{equation}
The two rank-one terms have disjoint computational supports, of Hamming
weights two and one, respectively.  Thus $\rho_{ABC}$ has rank two,
its nonzero eigenvalues are $1/2,1/2$, and
\begin{equation}
 \rho_{ABC}^2=\tfrac12\rho_{ABC},\qquad
 \Tr(\rho_{ABC}^2)=\tfrac12,\qquad H(\rho_{ABC})=\log2.
\label{eq:app-three-party-invariants}
\end{equation}

Taking the partial traces of \cref{eq:app-three-party-matrix} gives the
same matrix for all three pairs in their respective computational bases:
\begin{equation}
 [\rho_{XY}]_{\mathcal B_{XY}}=\frac16
 \begin{pmatrix}
  1&0&0&0\\
  0&2&-1&0\\
  0&-1&2&0\\
  0&0&0&1
 \end{pmatrix}\in M_4(\mathbb C),\qquad XY\in\{AB,AC,BC\}.
\label{eq:app-pair-matrix}
\end{equation}
The singlet $(\ket{01}-\ket{10})/\sqrt2$ is an eigenvector with eigenvalue
$1/2$, and its three-dimensional orthogonal complement is an eigenspace
with eigenvalue $1/6$.
Hence
\[
 \rank\rho_{XY}=4,\qquad \Tr(\rho_{XY}^2)=\frac13,\qquad
 H(\rho_{XY})=\log2+\frac12\log3=h.
\]
In particular, the singular values of $M_{AB|CR}$ are
$1/\sqrt2,1/\sqrt6,1/\sqrt6,1/\sqrt6$.
Every single-qubit marginal satisfies $\rho_X=I_X/2$,
$X\in\{A,B,C,R\}$, as follows by tracing either qubit in
\cref{eq:app-pair-matrix} and from the Schmidt decomposition for $R$.

\printbibliography
\end{document}